\setlist{noitemsep, topsep=0pt, leftmargin=1em, labelwidth=0em,
  parsep=0pt, partopsep=0pt}
\newif\ifshowspecs
\newcommand{\instructions}[1]{\textcolor{blue}{#1}}
\newcommand{\instructions}[1]{}
\newcommand*\patchAmsMathEnvironmentForLineno[1]{%
	\expandafter\let\csname old#1\expandafter\endcsname\csname 
	#1\endcsname
	\expandafter\let\csname oldend#1\expandafter\endcsname\csname 
	end#1\endcsname
	\renewenvironment{#1}%
	{\linenomath\csname old#1\endcsname}%
	{\csname oldend#1\endcsname\endlinenomath}}%
\newcommand*\patchBothAmsMathEnvironmentsForLineno[1]{%
	\patchAmsMathEnvironmentForLineno{#1}%
	\patchAmsMathEnvironmentForLineno{#1*}}%
\titleformat{\subparagraph}[runin]
{\normalfont\normalsize\it}{\thesubparagraph}{1em}{}
\newcommand{\PP}{\mathbb{P}}
\newcommand{\EE}{\mathbb{E}}
\newcommand{\RR}{\mathbb{R}}
\newcommand{\ind}{\boldsymbol{1}}
\newcommand{\bZ}{\boldsymbol{Z}}
\newcommand{\bI}{\boldsymbol{I}}
\newcommand{\bD}{\boldsymbol{D}}
\newcommand{\br}{\boldsymbol{r}}
\newcommand{\bv}{\boldsymbol{v}}
\newcommand{\bx}{\boldsymbol{x}}
\newcommand{\by}{\boldsymbol{y}}
\newcommand{\bu}{\boldsymbol{u}}
\newcommand{\boldf}{\boldsymbol{f}}
\newcommand{\bH}{\boldsymbol{H}}
\newcommand{\bs}{\boldsymbol{s}}
\newcommand{\bz}{\boldsymbol{z}}
\newcommand{\bolde}{\boldsymbol{e}}
\newcommand{\bxi}{\boldsymbol{\xi}}
\newcommand{\origin}{\boldsymbol{0}}
\newcommand{\Din}{D^\text{in}}
\newcommand{\Dout}{D^\text{out}}
\newcommand{\convas}{\stackrel{\text{a.s.}}{\longrightarrow}}
\newcommand{\convp}{\stackrel{p}{\longrightarrow}}
\newcommand{\Ein}{E^\text{in}}
\newcommand{\Eout}{E^\text{out}}
\newcommand{\Zin}{\Delta^\text{in}}
\newcommand{\Zout}{\Delta^\text{out}}
\newtheorem{Theorem}{Theorem}[section]
\newtheorem{Definition}{Definition}[section]
\newtheorem{Lemma}[Theorem]{Lemma}
\newtheorem{Remark}{Remark}[section]
\begin{document}           

\begin{frontmatter}

\title{Random Networks with Heterogeneous Reciprocity}
\runtitle{Heterogeneous Reciprocity}

\begin{aug}
\author[A]{\fnms{Tiandong}~\snm{Wang}\ead[label=e1]{tw398@tamu.edu}}
\and
\author[B]{\fnms{Sidney}~\snm{Resnick}\ead[label=e2]{sir1@cornell.edu}}
\address[A]{Department of Statistics, Texas A\&M University, College Station, TX 77843, US\printead[presep={,\ }]{e1}}

\address[B]{School of Operations Research and Information Engineering, Cornell University, Ithaca, NY 14853, US\printead[presep={,\ }]{e2}}
\end{aug}

\begin{abstract}
Users of social networks display diversified behavior and online habits. For instance, a user’s tendency to reply to a post can depend on the user and the person posting. For convenience, we group users into aggregated behavioral patterns, focusing here on the tendency to reply to or \emph{reciprocate} messages. The reciprocity feature in social networks reflects the information exchange among users. We study the properties of a preferential attachment model with heterogeneous reciprocity levels, give the growth rate of model edge counts, and prove convergence of empirical degree frequencies to a limiting distribution. This limiting distribution is not only multivariate regularly varying, but also has the property of hidden regular variation. 
\end{abstract}

\begin{keyword}[class=MSC]
\kwd{60G70}
\kwd{60J85}
\end{keyword}

\begin{keyword}
\kwd{Multivariate regular variation}
\kwd{Hidden regular variation}
\kwd{Preferential attachment}
\kwd{Reciprocity}
\end{keyword}

\end{frontmatter}

\section{Introduction}

Social networks have grown rapidly and users are exhibiting different
interaction and behavioral patterns on platforms like Facebook and
Twitter. Reciprocity is one such pattern and helps characterize
information exchange between social network users
\citep{kilduff2006paradigm,molm2007building}. For instance, consider
the network of Facebook wall posts: a directed edge from user A to
user B is formed when user A leaves a message on the Facebook wall of
user B, and a reciprocal edge from user B to user A is created if 
and when user B replies to the message. Depending on both users'
behavioral features and networking habits (e.g. the closeness of
friendship, whether user A is broadcasting, the obsessiveness of user
B's replying habits), the probability of generating a reciprocal edge
may vary across different pairs of users, and network modeling should
include such heterogeneity.  A study in \cite{jiangetal:2015} shows
that online social networks tend to have a high proportion of
reciprocal edges, compared to other types of networks such as
biological networks, communication networks, software call graphs and
peer-to-peer networks. Due to the large and diversified user groups on
social networks, it is essential to incorporate heterogeneous
reciprocity levels into the modeling.

For the modeling of dynamic networks, the preferential attachment (PA)
model \citep{bollobas:borgs:chayes:riordan:2003,
  krapivsky:redner:2001} is an appealing starting point. This model
captures the scale-free property of complex networks, where both in-
and out-degree distributions have Pareto-like tails
\citep{resnick:samorodnitsky:towsley:davis:willis:wan:2016,
  resnick:samorodnitsky:2015,
  wan:wang:davis:resnick:2017,wang:resnick:2019}.  Recently,
\cite{wang:resnick:2021a} show that for a wide and realistic
range of model parameters,
the standard PA model in \cite{bollobas:borgs:chayes:riordan:2003}
generates networks with very small proportion of reciprocal edges,
deviating from the empirical observations in \cite{jiangetal:2015}.
This led to two consecutive studies \citep{wang:resnick:2022,
  cirkovic:wang:resnick:2022} on the theoretical properties and the
estimation of a PA model with homogeneous reciprocity, where
reciprocal edges are generated by simply flipping a two-sided coin.
To capture the heterogeneous reciprocal patterns, we extend here the
model in \cite{wang:resnick:2022} by dividing users into $K$ different
behavioral groups, and the probability of generating a reciprocal
edge from a user in group $m$ to a user in group $r$ is
$\rho_{m,r}\in (0,1)$, $m,r \in \{1,\ldots, K\}$.

We study three properties of the proposed PA model with heterogeneous
reciprocity. Under modest assumptions, we first analyze the
growth of the edge counts by identifying the almost sure limit for the
scaled number of edges emanating from and pointing to nodes of group
type $m$. Secondly, by embedding in- and out-degree sequences
into a family of multitype branching processes whose particles
  are given group labels, we prove that the empirical frequencies of
nodes with in-degree $i$ and out-degree $j$ converge to a limiting
distribution $p_{i,j}$. Third, we show that the asymptotic limiting
distribution, $p_{i,j}$, is multivariate regularly varying with limit
measure concentrating on a ray and after removing large in- and
out-degree pairs close to the concentrating ray, we also detect hidden
regular variation \cite{lindskog:resnick:roy:2014}.

The rest of the paper is organized as follows. We start with a detailed description of the PA model with heterogeneous reciprocity levels in Section~\ref{sec:model}. Section~\ref{sec:edge} studies the growth of edge counts, and in order
to ensure the convergence of scaled edge counts, sufficient
assumptions are imposed.  
Also, we extend the embedding technique in \cite{wang:resnick:2022} to a family of multitype branching processes with different group labels, and derive the limit of empirical degree frequencies in Section~\ref{sec:degree}. We then characterize the asymptotic dependence structure of large in- and out-degrees in Section~\ref{sec:mrv}, and give concluding remarks in Section~\ref{sec:conclusion}.
Technical proofs of results in Section~\ref{sec:edge} are collected in Section~\ref{sec:pf}.

\subsection{PA Model with Heterogeneous Reciprocity}\label{sec:model}

The proposed model extends the directed preferential attachment (PA) 
model studied in \cite{krapivsky:redner:2001, bollobas:borgs:chayes:riordan:2003, 
wan:wang:davis:resnick:2017,wang:resnick:2019} by amending a mechanism that generates
a reciprocal edge with probabilities depending on characteristics of the node pair being connected by the edge. Because these
characteristics can be specific to the node pair, the model incorporates heterogeneous responding patterns of users in social networks.

We now specify a growing sequence of graphs.
Let $G(n)$ be the graph after $n$ steps with $V(n)$ being the
set of nodes and $E(n)$ being the set of edges in $G(n)$.  
Attach to each node $v$ a communication type $W_v$, where $\{W_v, v\geq 1\}$ are iid random variables, independent from the growth mechanism of the graph, with
\[
\PP(W_v = r) =\pi_r,\qquad\text{for}\quad \sum_{r=1}^K \pi_r = 1.
\]
We imagine that 
when a node is born, it flips a multi-sided coin to determine its communication type. Think of
 $\pi_r$ as the percent of the users in a social network that have communication habits labeled as type $r$.
Let $W(n) := \{W_v: v\in V(n)\}$ denote the set of group types for all nodes in $G(n)$.
Throughout we assume that the communication group of
node $v$ is always observed upon its creation, and
remains unchanged afterwards.
In the rest of this paper, we  assume $G(n) = (V(n), E(n), W(n))$, for $n\ge 0$.

Denote the cardinality of a discrete set $S$ by $|S|$ and  initialize the model with graph $G(0)$, which consists of one node
(labeled as node~1) and a self-loop, with $V(0) = \{1\}$,
$|V(0)| = {1}$, $W(0) = \{W_1\}$, and $E(0) = \left\{(1,1)\right\}$. 
For each new edge $(u, v)$ with $W_u = r, W_v=m$, the reciprocity mechanism adds its reciprocal
counterpart $(v, u)$ instantaneously with probability $\rho_{m, r}\in [0,1]$, for $m,r\in \{1,2,\ldots,K\}$. 
Here $\rho_{m, r}$ measures the probability of adding a reciprocal edge from a node in group $m$ to a node in group $r$.
Note that the matrix $\boldsymbol{\rho} := (\rho_{m,r})_{m,r}$ is not
necessarily a stochastic matrix, but can be an arbitrary matrix in
$M_{K\times K}([0,1])$, the set of all $K\times K$ matrices with
entries belonging to $[0,1]$. 
Later in Section~\ref{sec:edge}, we will give particular regularity conditions on $\boldsymbol{\rho}$ to facilitate theoretical analysis.


Let $\bigl(\Din_v(n), \Dout_v(n)\bigr)$
be the in- and out-degrees of node $v\in V(n)$ in $G(n)$, and 
we use the convention that $\Din_v(n) = \Dout_v(n) = 0$ if
$v\notin V(n)$. 
Let $\delta > 0$ be an offset parameter.
The evolution of the network $G(n+1)$ from $G(n)$ is described
as follows.

\begin{enumerate}
\item With probability $\alpha \in (0, 1)$, 
add a new node $|V(n)|+1$ with a directed edge $(|V(n)|+1,v)$,
where $v\in V(n)$ is chosen with probability
\begin{equation}\label{eq:Din}
\frac{\Din_v(n)+\delta}{\sum_{v\in V(n)} (\Din_w(n)+\delta)}
= \frac{\Din_v(n)+\delta}{|E(n)|+\delta |V(n)|},
\end{equation}
and update the node set $V(n+1)=V(n)\cup\{|V(n)|+1\}$ and 
$W(n+1)=W(n)\cup\{W_{|V(n)|+1}\}$.
The new node $|V(n)|+1$ belongs to group $r$ with probability $\pi_r$.
If node $v$ belongs to group $m$, then 
a reciprocal edge $(v, |V(n)|+1)$ is added
with probability $\rho_{m, r}$. Upon reciprocation, update the edge set as
$E(n+1) = E(n)\cup \{(|V(n)|+1,v), (v,|V(n)|+1)\}$.
If the reciprocal edge is not created, set
$E(n+1) = E(n)\cup \{(|V(n)|+1,v)\}$.

\item With probability $\gamma \equiv 1-\alpha \in (0,1)$, 
add a new node $|V(n)|+1$ with a directed edge $(v,|V(n)|+1)$,
where $v\in V(n)$ is chosen with probability
\begin{equation}\label{eq:Dout}
\frac{\Dout_v(n)+\delta}{\sum_{v\in V(n)} (\Dout_v(n)+\delta)}
= \frac{\Dout_v(n)+\delta}{|E(n)|+\delta |V(n)|},
\end{equation}
and update the node set $V(n+1)=V(n)\cup\{|V(n)|+1\}$, 
$W(n+1)=W(n)\cup\{W_{|V(n)|+1}\}$.
The new node $|V(n)|+1$ belongs to group $r$ with probability $\pi_r$.
If node $v$ belongs to group $m$, then 
a reciprocal edge $(|V(n)|+1,v)$ is added
with probability $\rho_{r, m}$.
Upon reciprocation, update the edge set as 
$E(n+1) = E(n)\cup \{(v, |V(n)|+1,v), (|V(n)|+1,v)\}$.
If the reciprocal edge is not created, set $E(n+1) = E(n)\cup
\{(v,|V(n)|+1)\}$.
\end{enumerate}

Note that $|V(n)|=n+1$, $n\ge 0$,
since a new node is added at each step, and
the offset parameter $\delta$ is assumed to be the same for
both in- and out-degrees.


\section{Growth of Edge Counts}\label{sec:edge}
Heterogeneous reciprocity levels make analysis of the convergence of 
$|E(n)|/n$ more complicated compared to the homogeneous case.
In Theorem~\ref{thm:En}, we show the concentration of $|E(n)|$ around
$\EE[|E(n)|]$, and then prove the convergence of $\EE[|E(n)|]/n$,
thereby showing linear growth of the number of edges. We now
  prepare for this result.

Write 
\begin{align*}
\rho_{m\bullet} &:= \sum_{r=1}^K \rho_{m, r}\pi_r =\PP[\text{Type
  $m$ node sends reciprocal edge}],\\
\rho_{\bullet m} &:= \sum_{r=1}^K \rho_{r, m}\pi_r =\PP[\text{Type $m$
  node receives reciprocal edge}].
\end{align*}
Let $\mathcal{G}_n$ be denote the $\sigma$-field generated by observing the network evolution up to $n$ steps, i.e.
\[
\mathcal{G}_n = \sigma\left\{(V(k), E(k), W(k)): k=0,\ldots,n\right\}.
\]
For $v\in V(n)$, we have
\begin{align}
\EE^{\mathcal{G}_n}\left(\Din_v(n+1)-\Din_v(n)\right)
&= \alpha \frac{\Din_v(n)+\delta}{|E(n)|+\delta |V(n)|}
+ \gamma \sum_{m=1}^K\rho_{\bullet m}\frac{\Dout_v(n)+\delta}{|E(n)|+\delta |V(n)|}\ind_{\{W_v=m\}},
\label{eq:diff_in}\\
\EE^{\mathcal{G}_n}\left(\Dout_v(n+1)-\Dout_v(n)\right)
&= \gamma \frac{\Dout_v(n)+\delta}{|E(n)|+\delta |V(n)|}
+ \alpha \sum_{m=1}^K \rho_{m\bullet}\frac{\Din_v(n)+\delta}{|E(n)|+\delta |V(n)|}\ind_{\{W_v=m\}}.
\label{eq:diff_out}
\end{align}
Let $V_m(n) := \{v\in V(n): W_v=m\}$, and note that
\begin{align}
\sum_{v\in V(n)}& \EE^{\mathcal{G}_n}\left(\Din_v(n+1)-\Din_v(n)\right)\ind_{\{W_v=m\}}\nonumber\\
=& \PP^{\mathcal{G}_n}\bigl(E(n+1)=E(n)\cup\{(|V(n)|+1,v)\}, v\in V_m(n)\bigr)\nonumber\\
&+ \PP^{\mathcal{G}_n}\bigl(E(n+1)=E(n)\cup\{(v, |V(n)|+1), (|V(n)|+1,v)\},\nonumber\\
&\qquad\qquad v\in V_m(n),
\text{$(|V(n)|+1,v)$ is due to reciprocation.}\bigr)\nonumber\\
=& \PP^{\mathcal{G}_n}\bigl((|V(n)|+1,v) \in E(n+1)\setminus E(n), v\in V_m(n)\bigr).
\label{eq:sumEm}
\end{align}
Write $\Ein_m(n) := \{(u,v)\in E(n): W_v=m\}$, and $\Eout_m(n) := \{(v,u)\in E(n): W_v=m\}$.
We see that 
\begin{align}
\EE^{\mathcal{G}_n}&\left(|\Ein_m(n+1)|-|\Ein_m(n)|\right)\nonumber\\
&= \PP^{\mathcal{G}_n}\bigl((|V(n)|+1,v) \in E(n+1)\setminus E(n), v\in V_m(n)\bigr)\nonumber\\
& + \PP^{\mathcal{G}_n}\bigl((v, |V(n)|+1) \in E(n+1)\setminus E(n), W_{|V(n)|+1}=m, v\in V(n)\bigr),\nonumber\\
\intertext{and applying \eqref{eq:diff_in} and \eqref{eq:sumEm} gives}
&= \sum_{v\in V(n)} \EE^{\mathcal{G}_n}\left(\Din_v(n+1)-\Din_v(n)\right)\ind_{\{W_v=m\}}\nonumber\\
&+ \PP^{\mathcal{G}_n}\bigl((v, |V(n)|+1) \in E(n+1)\setminus E(n), W_{|V(n)|+1}=m, v\in V(n)\bigr)\nonumber\\
&= \alpha \frac{|\Ein_m(n)|+\delta|V_m(n)|}{|E(n)|+\delta |V(n)|}
+\gamma\rho_{\bullet m} \frac{|\Eout_m(n)|+\delta|V_m(n)|}{|E(n)|+\delta |V(n)|}\nonumber\\
&+ \PP^{\mathcal{G}_n}\bigl((v, |V(n)|+1) \in E(n+1)\setminus E(n), W_{|V(n)|+1}=m, v\in V(n)\bigr).\label{eq:diff_Em}
\end{align}
For the third term in \eqref{eq:diff_Em}, we have that
\begin{align}
&\PP^{\mathcal{G}_n}\bigl((v, |V(n)|+1) \in E(n+1)\setminus E(n), W_{|V(n)|+1}=m, v\in V(n)\bigr)\nonumber\\
&= \PP^{\mathcal{G}_n}\bigl((v, |V(n)|+1) \in E(n+1)\setminus E(n), W_{|V(n)|+1}=m, v\in V(n),\nonumber\\
&\qquad\qquad\text{$(v, |V(n)|+1)$ is not due to reciprocation}\bigr)
\nonumber\\
&\quad + \PP^{\mathcal{G}_n}\bigl((v, |V(n)|+1) \in E(n+1)\setminus E(n), W_{|V(n)|+1}=m, v\in V(n),\nonumber\\
&\quad\qquad\qquad\text{$(v, |V(n)|+1)$ is due to reciprocation}, W_v=r,\text{ for some $r$}\bigr)
\nonumber\\
&= \gamma\pi_m +\alpha\pi_m \sum_{r=1}^K \rho_{r, m} \frac{|\Ein_r(n)|+\delta|V_r(n)|}{|E(n)|+\delta |V(n)|}.
\label{eq:diff_Em2}
\end{align}
Therefore, combining \eqref{eq:diff_Em} and \eqref{eq:diff_Em2} shows that
\begin{align}
\EE^{\mathcal{G}_n}&\left(|\Ein_m(n+1)|-|\Ein_m(n)|\right)\nonumber\\
=& \alpha \frac{|\Ein_m(n)|+\delta|V_m(n)|}{|E(n)|+\delta |V(n)|}
+\gamma\rho_{\bullet m} \frac{|\Eout_m(n)|+\delta|V_m(n)|}{|E(n)|+\delta |V(n)|}\nonumber\\
&+\gamma\pi_m +\alpha\pi_m \sum_{r=1}^K \rho_{r, m} \frac{|\Ein_r(n)|+\delta|V_r(n)|}{|E(n)|+\delta |V(n)|}.
\label{eq:Zin}
\end{align}

Following a similar argument, we use the relationship in \eqref{eq:diff_out} to obtain that
\begin{align*}
\EE^{\mathcal{G}_n}&\left(|\Eout_m(n+1)|-|\Eout_m(n)|\right)\\
&= \gamma \frac{|\Eout_m(n)|+\delta|V_m(n)|}{|E(n)|+\delta |V(n)|}
+\alpha\rho_{m\bullet} \frac{|\Ein_m(n)|+\delta|V_m(n)|}{|E(n)|+\delta |V(n)|}\\
&+ \alpha\pi_m +\gamma\pi_m \sum_{r=1}^K \rho_{m, r} \frac{|\Eout_r(n)|+\delta|V_r(n)|}{|E(n)|+\delta |V(n)|}.
\end{align*}
In addition, summing over $m$ in \eqref{eq:Zin} gives 
\begin{align}
\label{eq:DeltaEn}
\EE^{\mathcal{G}_n}&\left(|E(n+1)|-|E(n)|\right)\nonumber\\
&= 1 +  \alpha \sum_{m=1}^K\rho_{m\bullet }\frac{|\Ein_m(n)|+\delta|V_m(n)|}{|E(n)|+\delta |V(n)|}
+\gamma \sum_{m=1}^K\rho_{\bullet m}\frac{|\Eout_m(n)|+\delta|V_m(n)|}{|E(n)|+\delta |V(n)|}.
\end{align}

As remarked at the beginning of Section \ref{sec:edge},
 the proof of Theorem~\ref{thm:En} requires showing
the convergence
  of $\EE[|E(n)|]/n$ and for this it suffices to check 
$\EE[|\Ein_m(n)|]/n\to x_m$ and $\EE[|\Eout_m(n)|]/n\to y_m$,
for some $x_m, y_m \in [0,1]$,
which is implied by the approximation
$\EE^{\mathcal{G}_n}\left(|\Ein_m(n+1)|-|\Ein_m(n)|\right)\approx x_m$, and
$\EE^{\mathcal{G}_n}\left(|\Eout_m(n+1)|-|\Eout_m(n)|\right)\approx y_m$.
Therefore, we expect $x_m, y_m$, $m=1,\ldots, K$, to be solutions to the following system of equations:
\begin{align}
x_m &= \alpha \frac{x_m+\delta\pi_m}{\sum_r x_r +\delta}+\gamma \rho_{\bullet m} \frac{y_m+\delta\pi_m}{\sum_r y_r +\delta} +\gamma\pi_m 
+ \alpha\pi_m \sum_r \rho_{r, m} \frac{x_r+\delta\pi_r}{\sum_r x_r +\delta},
\label{eq:xm}\\
y_m &= \gamma \frac{y_m+\delta\pi_m}{\sum_r y_r +\delta}+\alpha \rho_{m\bullet} \frac{x_m+\delta\pi_m}{\sum_r x_r +\delta} +\alpha\pi_m 
+ \gamma\pi_m \sum_r \rho_{m, r} \frac{y_r+\delta\pi_r}{\sum_r y_r +\delta},
\label{eq:ym}
\end{align}
for $m = 1,\ldots, K$.
Summing over $m$ in \eqref{eq:xm} and \eqref{eq:ym} shows
  $\sum_{m=1}^K x_m=\sum_{m=1}^K y_m$ and from
\eqref{eq:xm} we get
\[
\sum_{m=1}^K x_m \ge \sum_{m=1}^K\left(\alpha \frac{x_m+\delta\pi_m}{\sum_r x_r +\delta} + \gamma\pi_m \right) = 1,
\]
and therefore
 $\sum_{m=1}^K y_m\ge 1$.
Also, since 
\[
\frac{x_m+\delta\pi_m}{\sum_r x_r +\delta}\le 1,\qquad\text{and}\qquad
\frac{y_m+\delta\pi_m}{\sum_r y_r +\delta}\le 1,
\]
the right hand side of \eqref{eq:xm} is upper bounded by
\[
\alpha+\gamma\rho_{\bullet m}+\gamma\pi_m +\alpha\pi_m \text{max}_r\rho_{r,m}\le 2.
\]
Similarly, the right hand side of \eqref{eq:ym} is bounded by 2. 

Lemma~\ref{lem:xy} shows that under some regularity
conditions, the system of equations  \eqref{eq:xm} and \eqref{eq:ym}
has a unique solution, 
$(x_1,\ldots,x_K, y_1,\ldots, y_K)$ in
\[
\mathcal{Z}:= \{\bz \in [0,2]^{2K}: \sum_{i=1}^K z_i \ge 1, \sum_{i=K+1}^{2K} z_i \ge 1\}.
\] 
The proof relies on the contraction mapping theorem
 \cite[Theorem~1.2.2]{Kirk:2001}. We proceed by defining a function
 $\boldf:\mathcal{Z}\mapsto \mathcal{Z}$, using the right-hand-side
 expressions in \eqref{eq:xm} and \eqref{eq:ym}. Applying the mean
 value theorem to $\boldf$, we deduce a sufficient condition for
 $\boldf$ to be a contraction by deriving an upper bound for the
 1-norm of the Jacobian matrix. Due to the linked definition of
 $(x_m)$ and $(y_m)$ in \eqref{eq:xm} and \eqref{eq:ym}, we write the
 Jacobian matrix as a block matrix, and find the upper bounds block by
 block.

\begin{Lemma}\label{lem:xy}
Let $\odot$ and $\otimes$ denote the Hadamard and Kronecker products of matrices, respectively. Denote the identity matrix in $\mathbb{R}^K$ as $\bm{I}_K$, and use $\ind_K$ to denote the vector in $\mathbb{R}^K$ with all entries being 1.
Set $\boldsymbol{\pi}:= (\pi_1,\ldots, \pi_K)$, and the reciprocity matrix, $\bm{\rho}$, whose $(i,j)$-th entry is $\rho_{i,j}$. 
Consider the four matrices:
\begin{align*}
J^*({1,1})=\alpha\left(\bm{I}_K+(\boldsymbol{\pi}\ind_K^T)\odot \bm{\rho}^T\right),
&\quad 
J^*({1,2}) = \gamma (\bm{\rho}^T\boldsymbol{\pi})\otimes \ind_K^T,
\\ 
J^*({2,1}) = \alpha (\bm{\rho}\boldsymbol{\pi})\otimes \ind_K^T,
&\quad 
J^*({2,2}) = \gamma\left(\bm{I}_K+(\boldsymbol{\pi}\ind_K^T)\odot \bm{\rho}\right),
\end{align*}
and the block matrix
\[
J^* := 
\begin{bmatrix}
J^*({1,1}) & J^*({1,2})\\
J^*({2,1}) & J^*({2,2})
\end{bmatrix}.
\]
Let $\|J^*\|_1$ denote the 1-norm of the matrix $J^*$. Then as long as
\begin{align}
\delta> \text{max}\{\|J^*\|_1-1,0\},
\label{eq:J*}
\end{align}
the system of equations in \eqref{eq:xm} and \eqref{eq:ym} has a unique solution
in $\mathcal{Z}$.
\end{Lemma}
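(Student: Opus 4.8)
The plan is to realize any solution of \eqref{eq:xm}--\eqref{eq:ym} lying in $\mathcal{Z}$ as a fixed point of a self-map $\boldf$ of $\mathcal{Z}$ and to show $\boldf$ is a strict contraction, so that \cite[Theorem~1.2.2]{Kirk:2001} applies. Define $\boldf=(f_1,\dots,f_{2K}):\mathcal{Z}\to\RR^{2K}$ by letting $f_m(\bz)$, $1\le m\le K$, be the right-hand side of \eqref{eq:xm} evaluated at $\bz=(x_1,\dots,x_K,y_1,\dots,y_K)$, and $f_{K+m}(\bz)$ the right-hand side of \eqref{eq:ym}. First I would check $\boldf(\mathcal{Z})\subseteq\mathcal{Z}$: each $f_i$ is visibly nonnegative; the bound $f_i\le 2$ is exactly the estimate displayed just before the statement, which on $\mathcal{Z}$ uses $\frac{x_m+\delta\pi_m}{\sum_r x_r+\delta}\le 1$ (valid because $x_m+\delta\pi_m\le\sum_r x_r+\delta\sum_r\pi_r=\sum_r x_r+\delta$) together with its $\by$-analogue; and the two affine constraints follow by summing, since $\sum_{m=1}^K f_m(\bz)\ge\alpha\sum_m\frac{x_m+\delta\pi_m}{\sum_r x_r+\delta}+\gamma\sum_m\pi_m=\alpha+\gamma=1$ and likewise $\sum_{m=1}^K f_{K+m}(\bz)\ge\alpha+\gamma=1$. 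Since $\mathcal{Z}$ is a closed, bounded, convex subset of $\RR^{2K}$, it is a complete metric space in the metric induced by the vector $1$-norm.

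Next I would reduce the contraction property to a bound on the Jacobian. On $\mathcal{Z}$ the denominators $\sum_r x_r+\delta$ and $\sum_r y_r+\delta$ are bounded below by $1+\delta>0$, so $\boldf$ is $C^1$ on a neighborhood of the convex set $\mathcal{Z}$; the segment joining any $\bz,\bz'\in\mathcal{Z}$ stays in $\mathcal{Z}$, and the mean value inequality for vector-valued maps gives
\[
\|\boldf(\bz)-\boldf(\bz')\|_1\le\Bigl(\sup_{\bw\in\mathcal{Z}}\|J_{\boldf}(\bw)\|_1\Bigr)\,\|\bz-\bz'\|_1,
\]
where $J_{\boldf}$ is the Jacobian and $\|\cdot\|_1$ on matrices denotes the operator norm induced by the vector $1$-norm, i.e. the maximum absolute column sum. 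Hence it suffices to show $\sup_{\bw\in\mathcal{Z}}\|J_{\boldf}(\bw)\|_1<1$.

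The core of the argument is this Jacobian estimate. Writing $g_r=\frac{x_r+\delta\pi_r}{\sum_s x_s+\delta}$, $h_r=\frac{y_r+\delta\pi_r}{\sum_s y_s+\delta}$, $S_x=\sum_s x_s+\delta\ge 1+\delta$ and $S_y=\sum_s y_s+\delta\ge 1+\delta$, one has $\partial g_r/\partial x_j=(\delta_{rj}-g_r)/S_x$ and $\partial h_r/\partial y_j=(\delta_{rj}-h_r)/S_y$. Differentiating the four summands of $f_m$ and of $f_{K+m}$ shows that $J_{\boldf}(\bw)$, arranged in the same $2\times2$ block pattern as $J^*$, has its $(a,b)$ block equal to $S_x^{-1}$ or $S_y^{-1}$ (according as one differentiates in $\bx$ or $\by$) times a matrix built from a $\delta_{rj}$-``identity'' piece, the $\rho$-pieces, and a $g_r$- or $h_r$-correction piece; the $\rho$-pieces assemble precisely into the Hadamard blocks $J^*({1,1}),J^*({2,2})$ (the $\bm{I}_K$ part from $\delta_{rj}$, the $(\boldsymbol{\pi}\ind_K^T)\odot\bm{\rho}^T$ and $(\boldsymbol{\pi}\ind_K^T)\odot\bm{\rho}$ parts from the $\alpha\pi_m\rho_{j,m}$ and $\gamma\pi_m\rho_{m,j}$ terms) and the Kronecker blocks $J^*({1,2}),J^*({2,1})$ (from the ``cross'' terms $\gamma\rho_{\bullet m}h_m$ and $\alpha\rho_{m\bullet}g_m$). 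Summing absolute values down each column, and using $g_r,h_r\in[0,1]$, $\sum_r g_r=\sum_r h_r=1$, and $\rho_{m\bullet},\rho_{\bullet m}\le 1$, one bounds every column sum of $J_{\boldf}(\bw)$, uniformly over $\bw\in\mathcal{Z}$, by $(1+\delta)^{-1}$ times the corresponding column sum of $J^*$, so $\|J_{\boldf}(\bw)\|_1\le\|J^*\|_1/(1+\delta)$.

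Finally, condition \eqref{eq:J*} says precisely $\|J^*\|_1/(1+\delta)<1$, so $\boldf$ is a strict contraction on the complete space $\mathcal{Z}$, and \cite[Theorem~1.2.2]{Kirk:2001} yields a unique fixed point in $\mathcal{Z}$, which is the unique solution of \eqref{eq:xm}--\eqref{eq:ym} in $\mathcal{Z}$. I expect the only delicate part to be the block-by-block column-sum bookkeeping in the Jacobian estimate: each ratio derivative produces positive contributions (the $\delta_{rj}$ and $\rho$ terms, which reproduce $J^*$) alongside negative ones (the $-g_r$, $-h_r$ and $-\sum_r\rho_{r,m}g_r$ corrections), and these have to be combined with enough care that the final bound is governed by $\|J^*\|_1$ and the single factor $(1+\delta)^{-1}$, with the Hadamard-versus-Kronecker form of each $J^*$ block matched to the correct group of partial derivatives.
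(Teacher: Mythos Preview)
Your proposal is correct and follows essentially the same route as the paper: define $\boldf$ from the right-hand sides of \eqref{eq:xm}--\eqref{eq:ym}, verify $\boldf(\mathcal{Z})\subseteq\mathcal{Z}$, apply the mean value inequality, bound the Jacobian by $\|J^*\|_1/(1+\delta)$ block by block, and invoke the contraction mapping theorem. The only cosmetic difference is that the paper bounds each Jacobian entry $|J_{i,j}(\bz)|$ individually by $(1+\delta)^{-1}J^*_{i,j}$ (which is slightly simpler than tracking column sums with signed cancellations) and phrases the Lipschitz estimate via a mixed $L_{1,\infty}$ norm rather than the induced $1$-norm, but the substance is the same.
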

The proof of Lemma~\ref{lem:xy} is deferred to Section~\ref{subsec:pf_lem1}.
Continuing to assume \eqref{eq:J*} holds, the next Theorem~\ref{thm:En} gives the a.s. convergence of $|E(n)|/n$, $|\Ein(n)|/n$ and $|\Eout(n)|/n$.
Define a constant
\begin{align}
\label{eq:def_C}
C_\delta = \alpha\sum_{m=1}^K\rho_{m\bullet}\frac{x_m+\delta\pi_m}{\sum_r x_r+\delta}+ \gamma\sum_{m=1}^K\rho_{\bullet m}\frac{y_m+\delta\pi_m}{\sum_r y_r +\delta},
\end{align}
and a matrix
\begin{align}
\label{eq:def_H}
\bH := \begin{bmatrix}
\frac{\alpha}{1+\delta} \left(1+\bigvee_m \rho_{m\bullet}\right) & \frac{\gamma}{1+\delta}\bigvee_m\rho_{\bullet m} & \frac{1}{1+\delta}\left(\alpha+C_\delta\right)\\
\frac{\alpha}{1+\delta}\bigvee_m \rho_{m\bullet} & \frac{\gamma}{1+\delta}\left(1+\bigvee_m\rho_{\bullet m}\right) & \frac{1}{1+\delta}\left(\gamma+C_\delta\right)\\
\frac{\alpha}{1+\delta}\bigvee_m \rho_{m\bullet} & \frac{\gamma}{1+\delta}\bigvee_m\rho_{\bullet m}
& \frac{C_\delta}{1+\delta}
\end{bmatrix}.
\end{align}
The proof of Theorem~\ref{thm:En} requires supposing
$\lambda_H$, the largest eigenvalue of $\bH$, satisfies $\lambda_H<1$ to ensure the convergence of $\EE[|E(n)|]/n$, $\EE[|\Ein(n)|]/n$ and $\EE[|\Eout(n)|]/n$. 

Define $\Zin_m(n) := |\Ein_m(n)|+\delta|V_m(n)|-n (x_m+\delta \pi_m)$, $\Zout_m(n) := |\Eout_m(n)|+\delta|V_m(n)|-n (y_m+\delta \pi_m)$, and 
$\Delta(n) := |E(n)| - n\sum_m x_m$.
We will prove Theorem~\ref{thm:En} by deriving iterative upper bounds such that
for the matrix $\bH$ given in \eqref{eq:def_H},
\[
\begin{bmatrix}
\sum_m|\EE(\Zin_m(n+1))|\\
\sum_m|\EE(\Zout_m(n+1))|\\
|\EE(\Delta(n))|
\end{bmatrix}\le
\left(\bI+\frac{1}{n}\bH\right)
\begin{bmatrix}
\sum_m|\EE(\Zin_m(n))|\\
\sum_m|\EE(\Zout_m(n))|\\
|\EE(\Delta(n))|
\end{bmatrix}.
\]
Hence, provided that all elements in $\bH$ are strictly positive, applying the
Perron-Frobenius theorem suggests restricting the largest eigenvalue of $\bH$ will be sufficient to control the growth of the expected edge counts. In addition, the assumption of positive elements in $\bH$ leads to the requirement of $\alpha,\gamma>0$, $\bigvee_m \rho_{m\bullet}>0$, and $\bigvee_m\rho_{\bullet m}>0$. 

Recall we may
interpret $\rho_{m\bullet}$ and $\rho_{\bullet m}$,  them as the  likelihood of sending and attracting
reciprocal edges for a user in group $m$, respectively. 
The assumption $\bigvee_m \rho_{m\bullet}>0$ and
$\bigvee_m\rho_{\bullet m}>0$ requires that there exists at least one
group whose probability of generating or attracting reciprocal edges
is strictly positive.  

\begin{Theorem}\label{thm:En}
Assume $\alpha,\gamma>0$, $\bigvee_m \rho_{m\bullet}>0$, $\bigvee_m\rho_{\bullet m}>0$, 
$\delta> \|J^*\|_F-1$ and $\lambda_H<1$. 
Suppose that $(x_1,\ldots,x_K,y_1,\ldots,y_K)$ is the solution to the system of equations given in \eqref{eq:xm} and \eqref{eq:ym}.
We have:
\begin{enumerate}
\item[(i)] For $m=1,\ldots,K$, $\frac{|\Ein_m(n)|}{n}\convas x_m$, and $\frac{|\Eout_m(n)|}{n}\convas y_m$.
\item[(ii)] In addition, $\frac{|E(n)|}{n}\convas \sum_{m=1}^K x_m = \sum_{m=1}^K y_m$.
\end{enumerate}
\end{Theorem}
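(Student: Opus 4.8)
The plan is to prove part~(i) first and deduce part~(ii) from it. For~(i) I split the argument into \textbf{(a)} convergence of the \emph{expected} scaled edge counts, $\EE|\Ein_m(n)|/n\to x_m$ and $\EE|\Eout_m(n)|/n\to y_m$, and \textbf{(b)} concentration of $|\Ein_m(n)|$ and $|\Eout_m(n)|$ around their means, after which the a.s.\ limits follow by writing $|\Ein_m(n)|/n=\bigl(|\Ein_m(n)|-\EE|\Ein_m(n)|\bigr)/n+\EE|\Ein_m(n)|/n$. Given~(i), part~(ii) is immediate, since the edges of $G(n)$ are partitioned according to the group label of their head (resp.\ tail), so $|E(n)|=\sum_{m=1}^K|\Ein_m(n)|=\sum_{m=1}^K|\Eout_m(n)|$, and therefore $|E(n)|/n\convas\sum_m x_m=\sum_m y_m$, the last equality being the one noted just before the theorem. (The existence and uniqueness of the target vector $(x_m,y_m)$ as the solution of~\eqref{eq:xm}--\eqref{eq:ym} is exactly Lemma~\ref{lem:xy}, which is why the hypothesis on $\delta$ is imposed.)

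For step~(a) I would work with the centered quantities $\Zin_m(n)$, $\Zout_m(n)$, $\Delta(n)$. Starting from the exact one-step conditional identities~\eqref{eq:Zin}, its $\Eout$-analogue, and~\eqref{eq:DeltaEn}, and subtracting the fixed-point relations~\eqref{eq:xm}--\eqref{eq:ym} (which make the deterministic terms such as the $\gamma\pi_m$ cancel), each conditional increment becomes an affine combination of $\Zin_\bullet(n)$, $\Zout_\bullet(n)$, $\Delta(n)$ divided by the random normalizer $S(n):=|E(n)|+\delta|V(n)|$. The randomness of $S(n)$ --- the complication absent in the standard PA model --- is controlled through its deterministic envelope $(1+\delta)(n+1)\le S(n)\le(2+\delta)(n+1)$, which follows from $|V(n)|=n+1$ and $n+1\le|E(n)|\le 2(n+1)$. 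Taking absolute values and expectations and summing over $m$ then yields, modulo a negligible $O(1/n)$ additive term,
\[
\begin{bmatrix}\sum_m|\EE(\Zin_m(n+1))|\\ \sum_m|\EE(\Zout_m(n+1))|\\ |\EE(\Delta(n+1))|\end{bmatrix}
\le\Bigl(\bI+\tfrac1n\bH\Bigr)
\begin{bmatrix}\sum_m|\EE(\Zin_m(n))|\\ \sum_m|\EE(\Zout_m(n))|\\ |\EE(\Delta(n))|\end{bmatrix}.
\]
The reason the matrix is precisely the $\bH$ of~\eqref{eq:def_H} is the following bookkeeping: in the column multiplying $|\EE(\Delta(n))|$ the coefficients attached to the various $m$ must be summed \emph{before} being bounded, and they telescope through~\eqref{eq:xm}--\eqref{eq:ym} to exactly $\alpha+C_\delta$, $\gamma+C_\delta$, $C_\delta$; whereas in the columns multiplying $\sum_m|\EE(\Zin_m(n))|$ and $\sum_m|\EE(\Zout_m(n))|$ one instead factors out $\bigvee_m\rho_{m\bullet}$ and $\bigvee_m\rho_{\bullet m}$ (the hypotheses $\alpha,\gamma>0$, $\bigvee_m\rho_{m\bullet}>0$, $\bigvee_m\rho_{\bullet m}>0$ are what make every entry of $\bH$ strictly positive). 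Since $\bH$ is a positive matrix, Perron--Frobenius applies: iterating the inequality from a fixed $n_0$ gives $\prod_{k=n_0}^{n-1}(\bI+\tfrac1k\bH)=O(n^{\lambda_H})$ entrywise, so all three coordinates are $O(n^{\lambda_H})=o(n)$ because $\lambda_H<1$. In particular $|\EE(\Zin_m(n))|=o(n)$, i.e.\ $\EE|\Ein_m(n)|+\delta\,\EE|V_m(n)|=n(x_m+\delta\pi_m)+o(n)$; since $\EE|V_m(n)|=(n+1)\pi_m$, dividing by $n$ yields $\EE|\Ein_m(n)|/n\to x_m$, and likewise $\EE|\Eout_m(n)|/n\to y_m$.

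For step~(b), fix $n$ and consider the Doob martingale $M_k:=\EE[|\Ein_m(n)|\mid\mathcal G_k]$, $0\le k\le n$. At most two edges are added at each step, and a one-step coupling of the evolution under the two possible outcomes of the $k$-th move shows the martingale differences $|M_k-M_{k-1}|$ are bounded by a constant, uniformly in $k$ and $n$; Azuma--Hoeffding then gives $\PP\bigl(\,\bigl||\Ein_m(n)|-\EE|\Ein_m(n)|\bigr|>\varepsilon n\,\bigr)\le 2e^{-c\varepsilon^2 n}$, which is summable in $n$, so Borel--Cantelli yields $\bigl(|\Ein_m(n)|-\EE|\Ein_m(n)|\bigr)/n\to0$ a.s., and similarly for $|\Eout_m(n)|$. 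Combined with step~(a) this completes~(i), hence~(ii). The main obstacle is step~(a): assembling the \emph{simultaneous} three-term recursion with exactly the matrix $\bH$ --- in particular recognizing the telescoping of the $\Delta$-coefficients through the defining equations and carrying the random denominator $S(n)$ through its deterministic envelope --- whereas step~(b), though it requires the bounded-difference estimate for the Doob martingale (routine for PA-type models), is comparatively standard.
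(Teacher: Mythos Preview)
Your proposal is correct and follows essentially the same approach as the paper: the paper also establishes the recursive matrix inequality
\[
\begin{bmatrix}\sum_m|\EE(\Zin_m(n+1))|\\ \sum_m|\EE(\Zout_m(n+1))|\\ |\EE(\Delta(n+1))|\end{bmatrix}
\le\Bigl(\bI+\tfrac1n\bH\Bigr)
\begin{bmatrix}\sum_m|\EE(\Zin_m(n))|\\ \sum_m|\EE(\Zout_m(n))|\\ |\EE(\Delta(n))|\end{bmatrix}
\]
via exactly the bookkeeping you describe (including the telescoping of the $\Delta$-coefficients through~\eqref{eq:xm}--\eqref{eq:ym} to $\alpha+C_\delta$, $\gamma+C_\delta$, $C_\delta$), applies Perron--Frobenius with $\lambda_H<1$, and uses a Doob-martingale/Azuma--Hoeffding argument for concentration. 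The only cosmetic differences are that the paper presents concentration first and expectation convergence second, and that the paper writes out the Azuma--Hoeffding bound only for $|E(n)|$ (your version, which does it for each $|\Ein_m(n)|$ and $|\Eout_m(n)|$, is what part~(i) actually requires and is entirely analogous).
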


Note that since $(x_m,y_m)$ satisfies \eqref{eq:xm} and \eqref{eq:ym}, 
the a.s. limit of $|E(n)|/n$ depends on the value of $\delta$ as well. 
This is different from the model with a homogeneous reciprocity level, where only the reciprocity parameter determines the limit of $|E(n)|/n$.

The proof of Theorem~\ref{thm:En} is in Section~\ref{subsec:pf_thmEn}.
Throughout the rest of the paper,
assume the following regularity conditions:
\begin{equation}\label{e:star}
  \alpha,\gamma>0,\quad \delta> \|J^*\|_1-1, \quad \bigvee_m
  \rho_{m\bullet}>0,\, \bigvee_m \rho_{\bullet m}>0, \quad \lambda_H<1.
  \end{equation}

\section{Growth of Degree Counts}\label{sec:degree}

We next focus on the asymptotic behavior of the joint in- and out-degree counts:
\[
N_{k,l}(n) := \sum_{v\in V(n)}\ind_{\left\{(\Din_v(n),\Dout_v(n)) = (k,l)\right\}}.
\]
We extend the homogeneous reciprocity techniques in \cite{wang:resnick:2022,
  cirkovic:wang:resnick:2022} 
to obtain the convergence of
$N_{k,l}(n)/n$ under heterogeneous reciprocity. The way forward is via
embedding the in- and out-degree
 sequences in a family of multi-type Markov 
branching processes with immigration (MBI processes). 

\subsection{Markov Branching with Immigration}\label{subsec:MBI}

To pursue count asymptotics using embedding,
we need a linked family of two-type Markov branching processes
 (\cite{athreya:ney:2004}) where each process is a Markov Branching
  with Immigration (MBI) process. See
  Section 2.1.1 of \cite{wang:resnick:2022}.
For each $ m\in \{1,\ldots,K\}$,
$$
\{\bxi_\delta(t,m)\equiv\bigl(\xi^{(1)}_\delta(t,m),\xi^{(2)}_\delta(t,m)\bigr): t\ge 0\},
$$ is an MBI process whose branching structure
  depends on $m$; this will be specified more precisely in
  \eqref{eq:pgf1} and \eqref{eq:pgf2}.
In the rest of the paper, we refer to $\bxi_\delta(\cdot,m)$ as a \emph{MBI process with group label }$m$.
The process $\bxi_\delta(\cdot,m)$ is designed to mimic evolution of in- and out-degrees of a fixed node with communication type $m$.
The general setup
of continuous-time multitype branching processes without immigration
is reviewed in \cite[Chapter V]{athreya:ney:1972} and discussions on
the MBI process are included, for instance, in
\cite{rabehasaina:2021,wang:resnick:2022}.


We assume life time parameters of $\xi^{(1)}_\delta(\cdot,m)$ and
$\xi^{(2)}_\delta(\cdot,m)$ to be $\alpha$ and $\gamma\equiv
1-\alpha$, respectively. 
For $\bs=(s_1,s_2)\in [0,1]^2$, the branching structure of $\bxi_{\delta}(\cdot, m)$ is specified through
offspring generating functions:
\begin{align}
f^{(1)}_v(\bs,m) &= \left(1-\rho_{m\bullet}\right)s_1^2 + 
\rho_{m\bullet} s_1^2s_2 ,\label{eq:pgf1}\\
f^{(2)}_v(\bs,m) &= \left(1-\rho_{\bullet m}\right)s_2^2 + 
\rho_{\bullet m} s_1 s_2^2 ,\label{eq:pgf2}
\end{align}

According to \eqref{eq:pgf1}, when a group-$m$ type I particle's
lifetime ends,
 with  
probability $1-\rho_{m\bullet}$, it splits into two group-$m$ type I particles, increasing the total number of group-$m$
type I particles by 1, and with probability $\rho_{m\bullet}$,
it splits  into two group-$m$ type I
particles and one group-$m$ type II particle. This last eventuality
increases the total numbers of group-$m$ type I and II particles both
by 1.  
Immigration events arrive following a homogeneous Poisson process with
rate $\delta>0$.  
When an immigration event happens, 
$\bxi_{\delta}(\cdot, m)$ is incremented by $(1,0)$, $(0,1)$, or
$(1,1)$ following the distribution 
\begin{align}
\label{eq:def_p0}
p_0(\bx, m) = \left(\alpha(1-\rho_{m\bullet})\right)^{\ind_{\{\bx = (1,0)\}}}\left(\gamma(1-\rho_{\bullet m})\right)^{\ind_{\{\bx = (0,1)\}}}\left(\alpha\rho_{m\bullet}+\gamma\rho_{\bullet m}\right)^{\ind_{\{\bx = (1,1)\}}}.
\end{align}
All immigrants are of group $m$, and have the same branching structure as given in \eqref{eq:pgf1} and \eqref{eq:pgf2}.
Following the discussion in Chapter V.7.2 of \cite{athreya:ney:1972},
we use \eqref{eq:pgf1}--\eqref{eq:pgf2} to obtain a matrix that helps specify the branching structure of $\bxi_\delta(\cdot,m)$:
\begin{equation}\label{eq:defA}
A_m= 
\begin{bmatrix}
\alpha & \alpha\rho_{m\bullet}\\
\gamma\rho_{\bullet m} & \gamma
\end{bmatrix},
\qquad m\in\{1,\ldots, K\}.
\end{equation}
Applying the Perron-Frobenius theorem, $A_m$ has a largest positive eigenvalue with multiplicity 1, i.e.
\begin{align}
\label{eq:lambdam}
\lambda_m = \frac{1}{2}\left(1+\sqrt{(\alpha-\gamma)^2+4\alpha\gamma\rho_{m\bullet}\rho_{\bullet m}}\right) =: \frac{1}{2}\left(1+\sqrt{D_0(m)}\right).
\end{align}
Also, the smaller eigenvalue of $A_m$ is 
\[
\lambda'_m = \frac{1}{2}\left(1-\sqrt{D_0(m)}\right),
\]
which plays an important role in the discussion of hidden regular variation in Theorem~\ref{thm:hrv}.
Let $\bv{(m)}\equiv (v^{(1)}{(m)}, v^{(2)}{(m)})$, $\bu{(m)}\equiv (u^{(1)}{(m)}, u^{(2)}{(m)})$ be the 
left and right eigenvectors associated with $\lambda_{m}$ respectively, with all coordinates strictly positive, and $\bu(m)^T\ind =1$, $\bu(m)^T\bv(m) =1$.
Applying Theorem~1 in \cite{wang:resnick:2022} gives that
there exists some finite positive random variable $Z(m)$ such that
\begin{align}\label{eq:conv_MBI}
e^{-\lambda_m t}\bxi_{\delta}(t,m)\convas Z(m) \bv(m),\qquad m\in\{1,\dots,K\}. 
\end{align}

The PA model with heterogeneous reciprocity assumes that
the communication group of a fixed node is determined at the node's creation
by flipping a
$K$-sided coin and a corresponding feature must also be present in the embedding
process.
Let $L$ be a random variable 
with 
\begin{align}\label{eq:pmfL}
\PP(L = m)=\pi_m, \qquad m=1,\ldots, K,
\end{align}
and
for $t\ge 0$, the generating function of $\bxi_{\delta}(t, L)$ is
\[
\EE\left(s_1^{\xi^{(1)}_{\delta}(t, L)}s_2^{\xi^{(2)}_{v,\delta}(t, L)}\right)
= \sum_{m=1}^K \pi_m \EE\left(s_1^{\xi^{(1)}_{\delta}(t, m)}s_2^{\xi^{(2)}_{\delta}(t, m)}\right),\qquad s_1,s_2\in [0,1].
\]
Imagine $L$ assigns a label to a MBI which remains unchanged 
throughout the evolution of $\bxi_{\delta}(\cdot,\cdot)$.
The initial value, $\bxi_\delta(0,L)$, is a random vector on $\{(0,1), (1,0), (1,1)\}$, whose distribution depends on the label $L$ and will be specified during the embedding process.
Conditioning on $(\bxi_\delta(0,L),L)$, $\bxi_\delta(\cdot,L)$ behaves like a MBI process with fixed group label $L$.
In the sequel, we refer to 
$\{\bxi_{\delta}(t,L): t\ge 0\}$ as a two-type MBI process with random group label.

\subsection{Embedding}
Let $\{L_v:v\ge 1\}$ be random variables with common pmf's as in \eqref{eq:pmfL}. 
We now explain how to embed the degree sequence
$\{(\Din_v(n),\Dout_v(n)): v\in V(n)\}_{n\ge 1}$ into a family of linked two-type MBI processes with random group labels,
 $\{\bxi_{v,\delta}(t,L_v): t\ge 0\}_{v\ge 1}$.
We assume $\{\bxi_{v,\delta}(\cdot,\cdot)\}_{v\ge 1}$ have the same parameters but possibly different initializations.

At $T_0=0$,
flip a $K$-sided coin with outcome $L_1$.  If $L_1=m_1$, which occurs
with probability $\pi_{m_1}$, 
 initiate the MBI process $\bxi_{1,\delta}(\cdot,m_1)$ with $\bxi_{1,\delta}(0, m_1)=(1,1)$. 
Let $T_1$ be the first jump time of
$\bxi_{1,\delta}(\cdot, m_1)$.
Then for $t\ge 0$, 
\begin{align*}
\PP\left(T_1>t, L_1=m_1\right)
&= \pi_{m_1}\exp\left\{-t\left(\alpha\left(\bxi^{(1)}_{1,\delta}(0,m_1)+\delta\right)+\gamma\left(\bxi^{(2)}_{1,\delta}(0,m_1)+\delta\right)\right)\right\}\\
&=  \pi_{m_1} e^{-t(1+\delta)},
\end{align*}
and  $\PP(T_1>t)  =e^{-t(1+\delta)}$. 

At $T_1$,
{flip another $K$-sided coin with outcome $L_2$} and
{start} the  process 
$\{\bxi_{2,\delta}(t-T_1, L_2):t\ge T_1\}$. The initial value of
$\bxi_{2,\delta}(0, L_2)$ is set depending on which of the following
four cases happens:
\begin{enumerate}
\item[(i)] If {$L_1=m_1$ and at time $T_1$}  the process
    $\bxi_{1,\delta}(\cdot, {m_1})$ increases by $(1,0)$,  
then we have $L_2=m_2$ with (conditional) probability $
\pi_{m_2}(1-\rho_{m_1, m_2})/(1-\rho_{m_1\bullet})$, 
and set $\bxi_{2,\delta}(0, m_2)=(0,1)$.
\item[(ii)] If the $\bxi_{1,\delta}(\cdot, L_1)$ with $L_1=m_1$ is increased by $(0,1)$, 
then we have $L_2=m_2$ with (conditional) probability $
\pi_{m_2}(1-\rho_{m_2, m_1})/(1-\rho_{\bullet m_1})$, 
and 
set $\bxi_{2,\delta}(0, m_2)=(1,0)$.
\item[(iii)] If one type I particle in $\bxi_{1,\delta}(\cdot,L_1)$ with $L_1=m_1$ splits into 
2 type I and 1 type II particles at $T_1$,
then we have $L_2=m_2$ with (conditional) probability $
\pi_{m_2}\rho_{m_1, m_2}/\rho_{ m_1\bullet}$, 
and 
set $\bxi_{2,\delta}(0, m_2)=(1,1)$.
\item[(iv)] If one type II particle in $\bxi_{1,\delta}(\cdot,L_1)$ with $L_1=m_1$ splits into 
1 type I and 2 type II particles at $T_1$,
then we have $L_2=m_2$ with (conditional) probability $
\pi_{m_2}\rho_{m_2, m_1}/\rho_{\bullet m_1}$, 
and 
set $\bxi_{2,\delta}(0, m_2)=(1,1)$.
\end{enumerate}
Therefore, we see that
\begin{align*}
&\PP(L_1=m_1, L_2=m_2, T_2-T_1>t)\\
&= \pi_{m_1} e^{-t(1+\delta)} \left(\alpha(1-\rho_{m_1\bullet})\frac{\pi_{m_2}(1-\rho_{m_1, m_2})}{1-\rho_{m_1\bullet}}+\alpha\rho_{m_1\bullet}\frac{\pi_{m_2}\rho_{m_1, m_2}}{\rho_{m_1\bullet}}\right.\\
&\left. \quad+ \gamma(1-\rho_{\bullet m_1})\frac{\pi_{m_2}(1-\rho_{m_2, m_1})}{1-\rho_{\bullet m_1}}
+ \gamma\rho_{\bullet m_1}\frac{\pi_{m_2}\rho_{m_2, m_1}}{\rho_{\bullet m_1}}
\right)\\
&= \pi_{m_1}\pi_{m_2} e^{-t(1+\delta)},
\end{align*}
which shows that $(L_1,L_2)$ are independent.
Also, $\bxi_{2,\delta}(0, L_2)$ is a 2-dimensional random vector whose generating function
satisfies
\begin{align*}
\EE\left(s_1^{\xi^{(1)}_{2,\delta}(0, L_2)}s_2^{\xi^{(2)}_{2,\delta}(0, L_2)}\middle\vert L_1\right) &=
\alpha(1-\rho_{L_1\bullet})s_2+\gamma(1-\rho_{\bullet L_1})s_1+(\alpha\rho_{L_1\bullet}+\gamma\rho_{\bullet L_1}) s_1s_2,
\end{align*}
so that for $\rho_0:= \sum_m \pi_m \rho_{m\bullet} = \sum_m \pi_m \rho_{\bullet m}$,
\begin{align}\label{eq:pgf_xi0}
\EE\left(s_1^{\xi^{(1)}_{2,\delta}(0, L_2)}s_2^{\xi^{(2)}_{2,\delta}(0, L_2)}\right)
&= \sum_{m=1}^K\pi_{m}\left(\alpha(1-\rho_{m\bullet})s_2+\gamma(1-\rho_{\bullet m})s_1+(\alpha\rho_{m\bullet}+\gamma\rho_{\bullet m}) s_1s_2\right)\nonumber\\
&= \alpha(1-\rho_0)s_2+\gamma(1-\rho_0)s_1+\rho_0 s_1s_2,
\qquad s_1,s_2\in [0,1].
\end{align}
Define $R_1:=
\ind_{\{\bxi_{1,\delta}(T_1, L_1)=\bxi_{1,\delta}(0, L_1)+(1,1)\}}$, so that
$$\PP(R_1=1|L_1=m)=\alpha\rho_{m\bullet}+\gamma\rho_{\bullet m}=1-\PP(R_1=0|L_1=m),$$
and $\PP(R_1=1) = \rho_0 =1-\PP(R_1=0)$.
Set 
\begin{align*}
\mathcal{F}_{T_1}:=& \sigma\left(\{L_k\}_{k=1,2};
                     \left\{\bxi_{k,\delta}(t-T_{k-1}, L_k):t\in
                     [T_{k-1},T_1]\right\}_{k=1,2}\right)\\
  =& {\sigma\left(\{L_k\}_{k=1,2}; \bxi_{1,\delta}(t, L_1), t\in [0,T_1];      \bxi_{2,\delta}(0, L_2)
     \right)},
     \end{align*}
     we see that $R_1$ is $\mathcal{F}_{T_1}$-measurable.

In general, for $n\ge 1$, suppose that we have initiated $n+1$ MBI processes with group labels $\{L_k:1\le k\le n+1\}$ at time $T_n$,
\begin{align}\label{eq:MBIs}
\{\bxi_{k,\delta}(t-T_{k-1}, L_k): t\ge T_{k-1}\}_{1\le k\le n+1}.
\end{align}
Define $T_{n+1}$ as the first time when one of the processes in \eqref{eq:MBIs} jumps, and let
 $J_{n+1}$ be the index of the process that jumps at $T_{n+1}$. 
Define the $\sigma$-field
\[
\mathcal{F}_{T_n}:= \sigma\left(\{L_k\}_{k=1}^{n+1};\left\{\bxi_{k,\delta}(t-T_{k-1}, L_k):t\in [T_{k-1},T_n]\right\}_{1\le k\le n+1}\right),
\] 
and
\[
R_{n+1} := \ind_{\left\{\bxi_{J_{n+1},\delta}(T_{n+1}-T_{J_{n+1}-1}, L_{J_{n+1}})=\bxi_{J_{n+1},\delta}(T_n-T_{J_{n+1}-1},L_{J_{n+1}})+(1,1)\right\}}.
\]
Then we see that $\{R_k:1\le k\le n\}$ are $\mathcal{F}_{T_n}$-measurable.

At $T_{n+1}$, we initiate the MBI process with group label $L_{n+2}$, $\{\bxi_{n+2,\delta}(t-T_{n+1}, L_{n+2}):t\ge T_{n+1}\}$, and one of the following four cases happens:
\begin{enumerate}
\item[(i)] If the $\bxi_{J_{n+1},\delta}(\cdot, L_{J_{n+1}})$ with $L_{J_{n+1}}=m_{n+1}$ is increased by $(1,0)$, 
then we have $L_{n+2}=m_{n+2}$ with (conditional) probability $
\pi_{m_{n+2}}(1-\rho_{m_{n+1}, m_{n+2}})/(1-\rho_{m_{n+1}\bullet})$, 
and set $\bxi_{n+2,\delta}(0, m_{n+2})=(0,1)$.
\item[(ii)] If the $\bxi_{J_{n+1},\delta}(\cdot, L_{J_{n+1}})$ with $L_{J_{n+1}}=m_{n+1}$ is increased by $(0,1)$, 
then we have $L_{n+2}=m_{n+2}$ with (conditional) probability $
\pi_{m_{n+2}}(1-\rho_{m_{n+2}, m_{n+1}})/(1-\rho_{\bullet m_{n+1}})$, 
and 
set $\bxi_{n+2,\delta}(0, m_{n+2})=(1,0)$.
\item[(iii)] If one type I particle in $\bxi_{J_{n+1},\delta}(\cdot, L_{J_{n+1}})$ with $L_{J_{n+1}}=m_{n+1}$ splits into 
2 type I and 1 type II particles at $T_{n+1}$,
then we have $L_{n+2}=m_{n+2}$ with (conditional) probability $
\pi_{m_{n+2}}\rho_{m_{n+1}, m_{n+2}}/\rho_{m_{n+1}\bullet}$, 
and 
set $\bxi_{n+2,\delta}(0, m_{n+2})=(1,1)$.
\item[(iv)] If one type II particle in $\bxi_{J_{n+1},\delta}(\cdot, L_{J_{n+1}})$ with $L_{J_{n+1}}=m_{n+1}$ splits into 
1 type I and 2 type II particles at $T_{n+1}$,
then we have $L_{n+2}=m_{n+2}$ with (conditional) probability $
\pi_{m_{n+2}}\rho_{m_{n+2}, m_{n+1}}/\rho_{\bullet m_{n+1}}$, 
and 
set $\bxi_{n+2,\delta}(0, m_{n+2})=(1,1)$.
\end{enumerate}
Since $\sum_{k=1}^{n+1} \xi^{(1)}_{k,\delta}(T_{n}-T_{k-1}, L_k)=\sum_{k=1}^{n+1} \xi^{(2)}_{k,\delta}(T_{n}-T_{k-1})=n+1+\sum_{k=1}^n R_k$, we then have
\begin{align*}
&\PP^{\mathcal{F}_{T_n}}\left(L_{n+2}=m, T_{n+1}-T_n>t\right)\\
&= \sum_{r=1}^K \PP^{\mathcal{F}_{T_n}}\left(L_{n+2}=m, L_{J_{n+1}}=r, T_{n+1}-T_n>t\right)\\
&= \exp\left\{-t\left((1+\delta)(n+1)+\sum_{k=1}^n R_k\right)\right\}\\
&\times
\sum_{r=1}^K \left[\frac{\sum_{k=1}^{n+1}(\xi^{(1)}_{k,\delta}(T_n-T_{k-1}, L_k)+\delta)\ind_{\{L_k=r\}}}{(1+\delta)(n+1)+\sum_{k=1}^n R_k}\left(\alpha(1-\rho_{r\bullet})\frac{\pi_{m}(1-\rho_{r, m})}{1-\rho_{r\bullet}}+\alpha\rho_{r\bullet}\frac{\pi_{m}\rho_{r, m}}{\rho_{r\bullet}}\right)
\right.\\
&\left. \quad+ \frac{\sum_{k=1}^{n+1}(\xi^{(2)}_{k,\delta}(T_n-T_{k-1}, L_k)+\delta)\ind_{\{L_k=r\}}}{(1+\delta)(n+1)+\sum_{k=1}^n R_k}\left(\gamma(1-\rho_{\bullet r})\frac{\pi_{m}(1-\rho_{m, r})}{1-\rho_{\bullet r}}
+ \gamma\rho_{\bullet r}\frac{\pi_{m}\rho_{m, r}}{\rho_{\bullet r}}\right)
\right]\\
&=\pi_m \exp\left\{-t\left((1+\delta)(n+1)+\sum_{k=1}^n R_k\right)\right\}\\
&= \PP^{\mathcal{F}_{T_n}}\left(L_{n+2}=m\right) \PP^{\mathcal{F}_{T_n}}\left(T_{n+1}-T_n>t\right).
\end{align*}
Hence, $L_{n+2}$ and $T_{n+1}-T_n$ are independent under $\PP^{\mathcal{F}_{T_n}}$, and since 
\begin{align}\label{eq:Ln}
\PP^{\mathcal{F}_{T_n}}\left(L_{n+2}=m\right) =\pi_m = \PP\left(L_{n+2}=m\right),
\end{align}
 $L_{n+2}$ is independent from $\{L_k: 1\le k\le n+1\}$.

In addition, from the four cases listed above, we also see that
\begin{align*}
&\PP^{\mathcal{F}_{T_n}}(\bxi_{n+2,\delta}(0, L_{n+2})=(0,1), L_{J_{n+1}}=m, L_{n+2}=r)\\
&=\frac{\alpha\pi_r(1-\rho_{m, r})
\sum_{k=1}^{n+1} \left(\xi^{(1)}_{k,\delta}(T_{n}-T_{k-1}, L_k)+\delta\right)\ind_{\{L_k= m\}}}{\alpha\sum_{k=1}^{n+1} \xi^{(1)}_{k,\delta}(T_{n}-T_{k-1}, L_k)
+ \gamma\sum_{k=1}^{n-1} \xi^{(2)}_{k,\delta}(T_n-T_{k-1}, L_k) +(n+1)\delta}\\
&= \alpha\pi_r(1-\rho_{m, r})\frac{\sum_{k=1}^{n+1} \left(\xi^{(1)}_{k,\delta}(T_{n}-T_{k-1}, L_k)+\delta\right)\ind_{\{L_k= m\}}}{(1+\delta)(n+1)+\sum_{k=1}^n R_k}.
\end{align*}
Similarly, we have
\begin{align*}
\PP^{\mathcal{F}_{T_n}}&(\bxi_{n+2,\delta}(0, L_{n+2})=(1,0), L_{J_{n+1}}=m, L_{n+2}=r)\\
&= \gamma\pi_r(1-\rho_{r, m})\frac{\sum_{k=1}^{n+1} \left(\xi^{(2)}_{k,\delta}(T_{n}-T_{k-1}, L_k)+\delta\right)\ind_{\{L_k= m\}}}{(1+\delta)(n+1)+\sum_{k=1}^n R_k},
\end{align*}
and
\begin{align*}
\PP^{\mathcal{F}_{T_n}}&(\bxi_{n+2,\delta}(0, L_{n+2})=(1,1), L_{J_{n+1}}=m, L_{n+2}=r)\\
=& \frac{\alpha\pi_r\rho_{m, r}\sum_{k=1}^{n+1} \left(\xi^{(1)}_{k,\delta}(T_{n}-T_{k-1}, L_k)+\delta\right)\ind_{\{L_k= m\}}}{(1+\delta)(n+1)+\sum_{k=1}^n R_k}\\
&+ \frac{\gamma\pi_r\rho_{r, m}\sum_{k=1}^{n+1} \left(\xi^{(2)}_{k,\delta}(T_{n}-T_{k-1}, L_k)+\delta\right)\ind_{\{L_k= m\}}}{(1+\delta)(n+1)+\sum_{k=1}^n R_k}.
\end{align*}
Therefore, under $\PP^{\mathcal{F}_{T_n}}$, 
$\bxi_{n+2,\delta}(0, L_{n+2})$ is a random vector following the distribution
\begin{align*}
p_{n+2,0}(\bx)
&= \left(\sum_{m=1}^K\frac{\alpha(1-\rho_{m\bullet})\sum_{k=1}^{n+1} \left(\xi^{(1)}_{k,\delta}(T_{n}-T_{k-1}, L_k)+\delta\right)\ind_{\{L_k= m\}}}{(1+\delta)(n+1)+\sum_{k=1}^n R_k}\right)^{\ind_{\{\bx = (0,1)\}}}\\
&\times \left(\sum_{m=1}^K\frac{\gamma(1-\rho_{\bullet m})\sum_{k=1}^{n+1} \left(\xi^{(2)}_{k,\delta}(T_{n}-T_{k-1}, L_k)+\delta\right)\ind_{\{L_k= m\}}}{(1+\delta)(n+1)+\sum_{k=1}^n R_k}\right)^{\ind_{\{\bx = (1,0)\}}}\\
&\times \left(
\sum_{m=1}^K\left(\frac{\alpha\rho_{m\bullet}\sum_{k=1}^{n+1} \left(\xi^{(1)}_{k,\delta}(T_{n}-T_{k-1}, L_k)+\delta\right)\ind_{\{L_k= m\}}}{(1+\delta)(n+1)+\sum_{k=1}^n R_k}\right.\right.\\
&\left.\left.\quad + \frac{\gamma\rho_{\bullet m}\sum_{k=1}^{n+1} \left(\xi^{(2)}_{k,\delta}(T_{n}-T_{k-1}, L_k)+\delta\right)\ind_{\{L_k= m\}}}{(1+\delta)(n+1)+\sum_{k=1}^n R_k}
\right)
\right)^{\ind_{\{\bx = (1,1)\}}}.
\end{align*}
Meanwhile,
since $\{L_k: 1\le k\le n+1\}$ are $\mathcal{F}_{T_n}$-measurable, we then have
\begin{align}
&\PP^{\mathcal{F}_{T_n}}\left(R_{n+1}=1, J_{n+1}=w, T_{n+1}-T_n>t\right) \nonumber\\
&= \frac{\alpha \rho_{L_w\bullet} \left(\xi^{(1)}_{w,\delta}(T_n-T_{w-1}, L_w)+\delta
\right)+\gamma \rho_{\bullet L_w}\left(\xi^{(2)}_{w,\delta}(T_n-T_{w-1}, L_w)+\delta
\right)}{\alpha\sum_{k=1}^{n+1} \xi^{(1)}_{k,\delta}(T_n-T_{k-1}, L_k)
+ \gamma\sum_{k=1}^{n+1} \xi^{(2)}_{k,\delta}(T_n-T_{k-1}, L_k) +(n+1)\delta}\nonumber\\
&\quad\times\exp\left\{-t\left(\alpha\sum_{k=1}^{n+1} \xi^{(1)}_{k,\delta}(T_n-T_{k-1}, L_k)
+ \gamma\sum_{k=1}^{n+1} \xi^{(2)}_{k,\delta}(T_n-T_{k-1}, L_k) +(n+1)\delta\right)\right\}\nonumber\\
&= \frac{\alpha \rho_{L_w\bullet} \left(\xi^{(1)}_{w,\delta}(T_n-T_{w-1}, L_w)+\delta
\right)+\gamma \rho_{\bullet L_w}\left(\xi^{(2)}_{w,\delta}(T_n-T_{w-1}, L_w)+\delta
\right)}{(n+1)(1+\delta)+\sum_{k=1}^n R_k} e^{-t((n+1)(1+\delta)+\sum_{k=1}^n R_k)}\nonumber\\
&= \PP^{\mathcal{F}_{T_n}}\left(R_{n+1}=1, J_{n+1}=w\right)
\PP^{\mathcal{F}_{T_n}}\left(T_{n+1}-T_n>t\right).
\label{eq:RnJn}
\end{align}
Therefore, $\bigl(R_{n+1},J_{n+1}\bigr)$ is independent from $T_{n+1}-T_n$ under 
$\PP^{\mathcal{F}_{T_n}}$.

Define for $n\ge 0$,
\[
{\bxi}^*_\delta\left(T_n, L_{[n+1]}\right):=\left(\bxi_{1,\delta}(T_n, L_1),\bxi_{2,\delta}(T_n-T_1, L_2),\ldots, \bxi_{n+1,\delta}(0, L_{n+1}),
(0,0),\ldots\right), 
\]
then the embedding framework just described shows that
$\{{\bxi}^*_\delta\left(T_n, L_{[n+1]}\right): n\ge 0\}$ is Markov with state space $\left(\mathbb{N}^2\right)^\infty$.
The next theorem gives the embedding of in- and out-degree sequences in the PA model with heterogeneous reciprocity into a linked system of delayed MBI processes.
\begin{Theorem}\label{thm:embed_MBI}
In $\left(\mathbb{N}^2\right)^\infty$, define the in- and out-degree sequences as
\[
\bD(n) := \left(\bigl(\Din_1(n),\Dout_1(n)\bigr),\ldots, \bigl(\Din_{n+1}(n),\Dout_{n+1}(n)\bigr),
(0,0),\ldots\right)\qquad n\ge 0.
\]
Then for $\{T_k: k\ge 0\}$ and $\{\bxi_{k,\delta}(t-T_{k-1}, L_k): t\ge T_{k-1}\}_{k\ge 1}$ constructed above,
we have that in $\left(\left(\mathbb{N}^2\right)^\infty\right)^\infty$,
\begin{align*}
\bigl\{\bD(n): n\ge 0\bigr\} \stackrel{d}{=} \left\{{\bxi}^*_\delta\left(T_n, L_{[n+1]}\right): n\ge 0\right\}.
\end{align*}
\end{Theorem}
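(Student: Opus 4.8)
The plan is to recognize that both $\{\bD(n):n\ge 0\}$ and $\{{\bxi}^*_\delta(T_n,L_{[n+1]}):n\ge 0\}$ are time-homogeneous Markov chains on the same discrete state space, and to prove that they share both the initial law and the one-step transition kernel; the equality in distribution of the full sequences then follows from the Markov property together with Kolmogorov's extension theorem. Since the PA update at step $n$ feels the reciprocity matrix only through the communication type $W_v$ of the selected node and the type of the new node, it is cleanest to compare the label-augmented chains $\{(\bD(n),(W_1,\dots,W_{n+1})):n\ge 0\}$ and $\{({\bxi}^*_\delta(T_n,L_{[n+1]}),(L_1,\dots,L_{n+1})):n\ge 0\}$, which are genuinely Markov, and to project onto the degree coordinates only at the end; this last projection is harmless because, by \eqref{eq:Ln}, the labels are i.i.d.\ $\boldsymbol{\pi}$ on both sides and influence the dynamics only through the current state. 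The initial laws already agree: $\bD(0)=((1,1),(0,0),\dots)$ with $W_1\sim\boldsymbol{\pi}$, and ${\bxi}^*_\delta(T_0,L_{[1]})=(\bxi_{1,\delta}(0,L_1),(0,0),\dots)=((1,1),(0,0),\dots)$ with $L_1\sim\boldsymbol{\pi}$.

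The core of the argument is matching the transitions, and the two bookkeeping facts that make everything align, both propagated by induction along the construction, are: (a) for each $k\le n+1$ the MBI coordinate $(\xi^{(1)}_{k,\delta}(T_n-T_{k-1},L_k),\xi^{(2)}_{k,\delta}(T_n-T_{k-1},L_k))$ occupies the role of $(\Din_k(n),\Dout_k(n))$; and (b) the MBI normalizer $(n+1)(1+\delta)+\sum_{k=1}^n R_k$ equals $|E(n)|+\delta|V(n)|$. Fact (b) uses the PA identity $|E(n)|=(n+1)+\#\{\text{reciprocations in the first }n\text{ steps}\}$ together with the already-noted $\sum_k\xi^{(1)}_{k,\delta}(T_n-T_{k-1})=\sum_k\xi^{(2)}_{k,\delta}(T_n-T_{k-1})=(n+1)+\sum_{k\le n}R_k$, and it is precisely what turns the PA attachment probability $(\Din_v(n)+\delta)$ over $(|E(n)|+\delta|V(n)|)$ into an MBI jump probability. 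Granting (a)--(b), one runs through the finitely many admissible one-step moves. For example, the PA event ``Case~1 selects $v$, the new node $n{+}2$ gets type $r$, no reciprocal edge'' has probability $\alpha\frac{\Din_v(n)+\delta}{|E(n)|+\delta|V(n)|}\pi_r(1-\rho_{W_v,r})$, adds $(1,0)$ to node $v$, and creates node $n{+}2$ with degrees $(0,1)$; on the MBI side the event ``process $v$ jumps by $(1,0)$ and $\bxi_{n+2,\delta}$ starts at $(0,1)$ with label $r$'' has conditional probability $\frac{\alpha(\xi^{(1)}_{v,\delta}+\delta)(1-\rho_{L_v\bullet})}{(n+1)(1+\delta)+\sum_k R_k}\cdot\frac{\pi_r(1-\rho_{L_v,r})}{1-\rho_{L_v\bullet}}$, which equals the former after cancellation and (a)--(b). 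The remaining three families of moves --- Case~1 with reciprocation, and Case~2 with and without reciprocation --- match MBI cases (ii)--(iv) in exactly the same way, once one notes that the aggregate rate at which a fixed MBI process increments by a prescribed vector is the sum of the relevant particle-death rate ($\alpha\xi^{(1)}_{k,\delta}$ or $\gamma\xi^{(2)}_{k,\delta}$) and the matching portion of that process's immigration rate, collapsing to the factors $\alpha(\xi^{(1)}_{k,\delta}+\delta)$ and $\gamma(\xi^{(2)}_{k,\delta}+\delta)$.

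Having verified equal initial laws and equal transition kernels, I would conclude that the two label-augmented Markov chains agree in all finite-dimensional distributions, hence in law on sequence space, and then project off the labels to obtain the asserted identity in $((\mathbb{N}^2)^\infty)^\infty$. The one genuine obstacle is the transition-matching step, which is bookkeeping rather than conceptual but must be done carefully: one has to keep the ``in-side'' and ``out-side'' contributions of reciprocity separate --- this is what distinguishes MBI cases (iii) and (iv) and forces resampling the new label with weights $\pi_r\rho_{m,r}/\rho_{m\bullet}$ versus $\pi_r\rho_{r,m}/\rho_{\bullet m}$ --- and one has to check the coincidence of the two normalizers, which rests on accounting for the reciprocation indicators $R_k$ consistently against the edge count $|E(n)|$.
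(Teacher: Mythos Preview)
Your proposal is correct and follows essentially the same strategy as the paper: both arguments reduce the distributional identity to checking that the two processes are Markov with the same initial law and the same one-step transition kernel, and both verify the kernel by working case-by-case through the four admissible increments $(1,0)$, $(0,1)$, $(1,1)$ (in/out reciprocation), using that the normalizer on the MBI side equals $|E(n)|+\delta|V(n)|$ because $\sum_{k\le n}R_k$ matches the number of reciprocations (the paper checks this via \eqref{eq:En_prob} and \eqref{eq:RnJn}). Your explicit passage to the label-augmented chains $(\bD(n),W_{[n+1]})$ and $({\bxi}^*_\delta(T_n,L_{[n+1]}),L_{[n+1]})$ is in fact a small improvement over the paper's write-up, since $\{\bD(n)\}$ alone is not Markov on $(\mathbb{N}^2)^\infty$---the transition probabilities \eqref{eq:PAtrans1}--\eqref{eq:PAtrans3} depend on $W_v$---and the paper handles this only implicitly by conditioning on $\mathcal{G}_n$ and $\mathcal{F}_{T_n}$ throughout. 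One minor remark: the projection step at the end needs no justification beyond the fact that marginals of equal laws are equal; the i.i.d.\ property of the labels is irrelevant there.
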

\begin{proof}
By the model description in Section~\ref{sec:model}, $\{\bD(n):n\ge 0\}$ is Markovian on $\left(\mathbb{N}^2\right)^\infty$, so
it suffices to check the transition probability from $\bD(n)$ to $\bD(n+1)$ agrees with that from 
${\bxi}^*_\delta\left(T_n, L_{[n+1]}\right)$ to ${\bxi}^*_\delta\left(T_{n+1}, L_{[n+2]}\right)$.
Write
\begin{align*}
\bolde_v^{(1)} &:= \left(\bigl(0,0\bigr),\ldots,\bigl(0,0\bigr),\underbrace{\bigl(1,0\bigr)}_{\text{$v$-th entry}},\bigl(0,0\bigr),\ldots\right),\\
\bolde_v^{(2)} &:= \left(\bigl(0,0\bigr),\ldots,\bigl(0,0\bigr),\underbrace{\bigl(0,1\bigr)}_{\text{$v$-th entry}},\bigl(0,0\bigr),\ldots\right),\\
\bolde_v^{(3)} &:= \left(\bigl(0,0\bigr),\ldots,\bigl(0,0\bigr),\underbrace{\bigl(1,1\bigr)}_{\text{$v$-th entry}},\bigl(0,0\bigr),\ldots\right),
\end{align*}
and we have
\begin{align}
\PP^{\mathcal{G}_n}\left(\bD(n+1)=\bD(n)+\bolde_v^{(1)} + \bolde_{|V(n)|+1}^{(2)}\right)
&= \frac{\alpha(\Din_v(n)+\delta)}{|E(n)|+\delta |V(n)|}\sum_{m=1}^K(1-\rho_{W_v, m}\pi_m)\nonumber\\
&= \frac{\alpha(\Din_v(n)+\delta)}{|E(n)|+\delta |V(n)|}(1-\rho_{W_v\bullet});\label{eq:PAtrans1}\\
\intertext{similarly,}
\PP^{\mathcal{G}_n}\left(\bD(n+1)=\bD(n)+\bolde_v^{(2)}+ \bolde_{|V(n)|+1}^{(1)}\right)
&= \frac{\gamma(\Dout_v(n)+\delta)}{|E(n)|+\delta |V(n)|}(1-\rho_{\bullet W_v}),\label{eq:PAtrans2}\\
\PP^{\mathcal{G}_n}\left(\bD(n+1)=\bD(n)+\bolde_v^{(3)}+ \bolde_{|V(n)|+1}^{(3)}\right)
&= \frac{\alpha\rho_{W_v\bullet}(\Din_v(n)+\delta)+\gamma\rho_{\bullet W_v}(\Dout_v(n)+\delta)}{|E(n)|+\delta |V(n)|}.
\label{eq:PAtrans3}
\end{align}
Note that $|V(n)|=n+1$ for all $n\ge 0$, and 
from \eqref{eq:RnJn}, we have
\begin{align}
\PP^{\mathcal{F}_{T_n}}&\left({\bxi}^*_\delta\left(T_{n+1}, L_{[n+2]}\right)={\bxi}^*_\delta\left(T_n, L_{[n+1]}\right)+\bolde_v^{(3)}+ \bolde_{|V(n)|+1}^{(3)}\right)\nonumber\\
&= \PP^{\mathcal{F}_{T_n}}\left(R_{n+1}=1, J_{n+1}=v\right)\nonumber\\
&=\frac{\alpha \rho_{L_v\bullet} \left(\xi^{(1)}_{v,\delta}(T_n-T_{v-1}, L_v)+\delta
\right)+\gamma \rho_{\bullet L_v}\left(\xi^{(2)}_{v,\delta}(T_n-T_{v-1}, L_v)+\delta
\right)}{(n+1)(1+\delta)+\sum_{k=1}^n R_k}.
\label{eq:trans_embed}
\end{align}
So it remains to check
whether $\sum_{k=1}^n R_k$ has the same distribution as $|E(n)|-(n+1)$.

Again, applying \eqref{eq:RnJn} gives that
\begin{align*}
\PP^{\mathcal{F}_{T_{n}}}&\left(R_{n+1}=1\right)\\
&= \sum_{v=1}^n \frac{\alpha \rho_{L_v\bullet} \left(\xi^{(1)}_{v,\delta}(T_n-T_{v-1}, L_v)+\delta
\right)+\gamma \rho_{\bullet L_v}\left(\xi^{(2)}_{v,\delta}(T_n-T_{v-1}, L_v)+\delta
\right)}{(n+1)(1+\delta)+\sum_{k=1}^n R_k}.
\end{align*}
Also, we obtain from \eqref{eq:Zin} that $|E(n)|-(n+1)$ satisfies
\begin{align}\label{eq:En_prob}
\PP^{\mathcal{G}_n}&\left(|E(n+1)|-|E(n)|-1=1\right)\nonumber\\
&= \alpha\sum_{m=1}^K \rho_{m\bullet}\frac{|\Ein_m(n)|+\delta|V_m(n)|}{|E(n)|+\delta(n+1)}
+\gamma \sum_{m=1}^K \rho_{\bullet m}\frac{|\Eout_m(n)|+\delta|V_m(n)|}{|E(n)|+\delta(n+1)}
\nonumber\\
&= \sum_{v=1}^n \frac{\alpha \rho_{W_v\bullet} \left(\Din_v(n)+\delta
\right)+\gamma \rho_{\bullet W_v}\left(\Dout_v(n)+\delta
\right)}{|E(n)|+\delta (n+1)}.
\end{align}
By \eqref{eq:Ln}, we see that $\{L_v: v\ge 1\}$ are iid random variables with 
$\PP(L_v = m)=\pi_m$, agreeing with the distributional property of $\{W_v: v\ge 1\}$.
Therefore, we conclude from \eqref{eq:En_prob} that $\sum_{k=1}^n R_k$ has the same distribution as $|E(n)|-(n+1)$, which implies the agreement between the transition probabilities in 
\eqref{eq:PAtrans3} and \eqref{eq:trans_embed}.
\end{proof}

\subsection{Asymptotic Growth of Empirical Degree Frequencies}

We now use the embedding results in Theorem~\ref{thm:embed_MBI} to prove the convergence of 
$N_{k,l}(n)/n$.

\begin{Theorem}\label{thm:limitNij} 
Let $\{\widetilde{\bxi}_\delta(t, L^*):t\ge 0\}$ be a MBI process with
random group label $L^*$, where $L^*$ 
satisfies $\PP(L^*=m)=\pi_m$, $m=1,\ldots,K$.
Suppose that the regularity condition \eqref{e:star} holds, and
 the initialization,
$\widetilde{\bxi}_\delta(0, L^*)$, satisfies that for
$s_i\in [0,1]$, $i=1,2$,
\begin{align}
\EE&\left(s_1^{\widetilde{\xi}^{(1)}_\delta(0, L^*)}s_2^{\widetilde{\xi}^{(2)}_\delta(0, L^*)}\right)\nonumber\\
&= \sum_{r=1}^K \pi_r \left[\alpha \left(1-\sum_{m=1}^K \rho_{m, r}\frac{x_m+\delta\pi_m}{\sum_s x_s +\delta}\right) s_2 + \gamma \left(1-\sum_{m=1}^K \rho_{r, m}\frac{y_m+\delta\pi_m}{\sum_s y_s +\delta}\right) s_1\right.\nonumber\\
&\left.+ \left(\alpha \sum_{m=1}^K \rho_{m, r}\frac{x_m+\delta\pi_m}{\sum_s x_s +\delta}
+\gamma \sum_{m=1}^K \rho_{r, m}\frac{y_m+\delta\pi_m}{\sum_s y_s +\delta}\right)s_1s_2
\right].\label{eq:initial}
\end{align}
For $L^*=m$, the branching structure of $\{\widetilde{\bxi}_\delta(t, m):t\ge 0\}$ is given by $A_m$ (cf. \eqref{eq:defA}).
Write 
\begin{equation}\label{e:rho*}
  \rho^*:= \sum_{m=1}^K x_m-1 = \sum_{m=1}^K y_m-1>0   \text{
    and } 
  c^*:= 1+\rho^*+\delta, 
  \end{equation}

then as $n\to\infty$, we have
for $k,l\ge 0$,
\begin{align}\label{eq:Nij}
\frac{N_{k,l}(n)}{n}&\convp \sum_{m=1}^K \pi_m\int_0^\infty c^* e^{-c^*t}
\PP\left(\widetilde{\bxi}_\delta(t, m)=(k,l)\right)\dd t. 
\end{align}
\end{Theorem}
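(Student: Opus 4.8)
The plan is to follow the embedding strategy established in Theorem~\ref{thm:embed_MBI} and adapt the asymptotic machinery from \cite{wang:resnick:2022}. By Theorem~\ref{thm:embed_MBI}, the degree sequence $\{\bD(n)\}$ has the same law as the embedded family $\{{\bxi}^*_\delta(T_n, L_{[n+1]})\}$, so it suffices to analyze $N_{k,l}(n) \stackrel{d}{=} \sum_{v=1}^{n+1}\ind_{\{\bxi_{v,\delta}(T_n - T_{v-1}, L_v) = (k,l)\}}$. First I would establish the asymptotics of the jump times $T_n$: from \eqref{eq:RnJn} the holding time $T_{n+1}-T_n$ is exponential with rate $(n+1)(1+\delta) + \sum_{k=1}^n R_k$ given $\mathcal{F}_{T_n}$, and since $\sum_{k=1}^n R_k \stackrel{d}{=} |E(n)| - (n+1)$, Theorem~\ref{thm:En}(ii) gives $n^{-1}\sum_{k=1}^n R_k \convas \sum_m x_m - 1 = \rho^*$. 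Hence the rate of the $n$-th holding time is asymptotically $n\,c^*$ with $c^* = 1+\rho^*+\delta$, which yields $T_n - \frac{1}{c^*}\log n \to$ (a.s.-finite limit), i.e. $e^{-c^* T_n}/n$ converges a.s. to a positive random variable; this is the analogue of the classical Markov-branching time-change and is what introduces the mixing density $c^* e^{-c^* t}$.

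Next I would set up the empirical-measure convergence. Write $N_{k,l}(n)/n = \frac{1}{n}\sum_{v=1}^{n+1} g_{k,l}\bigl(\bxi_{v,\delta}(T_n - T_{v-1}, L_v)\bigr)$ with $g_{k,l}(\bx) = \ind_{\{\bx = (k,l)\}}$. Conditioning on the group labels and on the initial values, each $\bxi_{v,\delta}(\cdot, L_v)$ is (after its birth) a copy of the MBI process with random label; the key point is that the birth time $T_{v-1}$ of particle $v$ satisfies, by the same time-change, $T_{n} - T_{v-1} \approx \frac{1}{c^*}\log(n/v)$ for $v = \lfloor ns\rfloor$, so the "age" of a uniformly-chosen particle at time $T_n$ is asymptotically $\mathrm{Exp}(c^*)$-distributed. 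I would then show a law of large numbers for the triangular array: conditionally on the past, $\EE^{\mathcal{G}_n}[N_{k,l}(n+1) - N_{k,l}(n)]$ can be written via the transition probabilities \eqref{eq:PAtrans1}--\eqref{eq:PAtrans3} as an affine functional of the current counts, and a Stolz--Cesàro / stochastic-approximation argument (as in \cite[Section 3]{wang:resnick:2022}) upgrades $\EE[N_{k,l}(n)]/n \to p_{k,l}$ to $N_{k,l}(n)/n \convp p_{k,l}$ after a concentration bound (Azuma--Hoeffding on the Doob martingale of $N_{k,l}(n)$, whose increments are bounded). The limit $p_{k,l}$ is identified as $\sum_m \pi_m \int_0^\infty c^* e^{-c^* t}\PP(\widetilde{\bxi}_\delta(t,m) = (k,l))\,\dd t$ because: the group label of a uniformly chosen old node is $\pi_m$-distributed and frozen; conditionally on label $m$, its initial increment distribution converges to \eqref{eq:initial} (this is exactly the $n\to\infty$ limit of the explicit conditional law $p_{n+2,0}$ displayed before Theorem~\ref{thm:embed_MBI}, using $n^{-1}\sum_{k\le n+1}(\xi^{(1)}_{k,\delta}+\delta)\ind_{\{L_k=m\}} \to x_m + \delta\pi_m$ and similarly for type II, which follows from Theorem~\ref{thm:En}(i) translated through the embedding); and its age is $\mathrm{Exp}(c^*)$.

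Concretely, the key steps in order are: (1) prove $e^{-c^* T_n}/n \convas$ a positive finite limit, using Theorem~\ref{thm:En}(ii) to pin the holding-time rates; (2) prove the embedded initial-value distribution $p_{n+2,0}$ converges to the generating function \eqref{eq:initial}, using Theorem~\ref{thm:En}(i) via the embedding identity $\sum_{k\le n+1}\xi^{(1)}_{k,\delta}(T_n-T_{k-1},L_k)\ind_{\{L_k=m\}} \stackrel{d}{=} |\Ein_m(n)| $ (and analogously for out-edges); (3) establish $\EE[N_{k,l}(n)]/n \to p_{k,l}$ by deriving the recursion for $\EE[N_{k,l}(n)]$ from \eqref{eq:PAtrans1}--\eqref{eq:PAtrans3}, solving it asymptotically, and checking it matches the stated integral via the backward/Kolmogorov equations for $\widetilde{\bxi}_\delta(\cdot,m)$; (4) upgrade to convergence in probability by a bounded-increment martingale concentration argument.

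The main obstacle is step~(3): turning the coupled infinite system of recursions for $\EE[N_{k,l}(n)]$ — coupled across all $(k,l)$ and tangled with the unknown empirical edge-type fractions $|\Ein_m(n)|/n, |\Eout_m(n)|/n$ appearing in the reciprocation probabilities — into a clean limit identity, and verifying that identity coincides with the mixture of time-changed MBI transition probabilities. The homogeneous-reciprocity case in \cite{wang:resnick:2022} handled a single MBI; here the presence of $K$ group labels, the heterogeneous matrix $\boldsymbol{\rho}$, and the $\delta$-dependence of the limiting fractions $(x_m,y_m)$ make both the combinatorics of the recursion and the bookkeeping for the generating-function identity substantially heavier, though no genuinely new probabilistic phenomenon arises beyond what Theorems~\ref{thm:En} and~\ref{thm:embed_MBI} already supply.
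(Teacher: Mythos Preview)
Your framework is correct: embedding via Theorem~\ref{thm:embed_MBI}, time asymptotics $T_n-T_{v-1}\approx \frac{1}{c^*}\log(n/v)$, convergence of the birth–initialization law $p_{n+2,0}$ to \eqref{eq:initial}, and the heuristic that the age of a uniformly chosen node is asymptotically $\mathrm{Exp}(c^*)$. These are exactly the ingredients the paper uses. Where you diverge is in how you combine them: your steps~(3)–(4) propose to establish $\EE[N_{k,l}(n)]/n\to p_{k,l}$ by writing and solving the coupled recursion from \eqref{eq:PAtrans1}--\eqref{eq:PAtrans3}, then upgrading via Azuma--Hoeffding on the Doob martingale of $N_{k,l}(n)$. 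The paper instead stays entirely inside the embedded representation and decomposes the sum $\frac{1}{n}\sum_{w}\ind_{\{\bxi_{w,\delta}(T_n-T_{w-1},L_w)=(k,l)\}}$ into five pieces $A_1+\cdots+A_5$: $A_1$ replaces $T_n-T_{w-1}$ by $\frac{1}{c^*}\log(n/w)$ (controlled by a continuity lemma of Athreya--Ghosh--Sethuraman); $A_2$ replaces the indicator by its $\mathcal{F}_{T_{w-1}}$-conditional expectation (orthogonal martingale differences, $L^2$-WLLN); $A_3$ swaps the actual initialization $(\bxi_{w,\delta}(0,L_w),L_w)$ for an iid copy with the limiting law \eqref{eq:initial}; $A_4$ is a Riemann sum; $A_5$ is the target integral.

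This buys two things you should be aware of. First, no recursion for $N_{k,l}$ is ever written or solved: the ``main obstacle'' you flag in step~(3) simply does not arise, because the identification of the limit comes from the Riemann sum $A_4\to A_5$ rather than from matching a fixed point to the Kolmogorov equations. Second, the concentration step is replaced by the elementary $L^2$-orthogonality argument in $A_2$, avoiding the Doob-martingale-of-$N_{k,l}(n)$ calculation, whose bounded-difference property in a PA model is more delicate than for $|E(n)|$ (a re-rolled step changes future attachment weights). One point your step~(2) underplays: to show $A_3\to 0$ the paper needs not just convergence of $p_{n+2,0}$ to \eqref{eq:initial} but a summable \emph{rate}, and it pulls the bound $\bigl|\EE\!\left(\tfrac{|\Ein_m(w)|+\delta|V_m(w)|}{|E(w)|+\delta|V(w)|}\right)-\tfrac{x_m+\delta\pi_m}{\sum_s x_s+\delta}\bigr|\le C\,w^{\lambda_H-1}$ directly from the proof of Theorem~\ref{thm:En}; your recursion route would need the same rate to stabilize the time-varying coefficients.
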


Let $T^*$ be an exponential random variable with rate $c^*$, independent from 
$\{\widetilde{\bxi}_\delta(\cdot,m): m=1,\ldots,K\}$, then the integral on the right hand side of \eqref{eq:Nij} is
\[
\PP\left(\widetilde{\bxi}_\delta(T^*, m)=(k,l)\right),\qquad m=1,\ldots, K,
\]
representing the limiting in- and out-degree frequencies for nodes of communication group $m$.
In the sequel, set $(\mathcal{I}_m, \mathcal{O}_m)=
\widetilde{\bxi}_\delta(T^*, m)$ to denote {the} limit random
variables so that 
\eqref{eq:Nij} becomes
\begin{align}
\label{eq:Nij_IO}
\frac{N_{k,l}(n)}{n}&\convp 
\sum_{m=1}^K \pi_m\PP\Bigl[\big(\mathcal{I}_m,\mathcal{O}_m\bigr) = (k,l)\Bigr].
\end{align}

\begin{proof}
By the embedding results in Theorem~\ref{thm:embed_MBI}, we have
\begin{align}\label{eq:Nij_dist}
\frac{N_{k,l}(n)}{n}\stackrel{d}{=} \frac{1}{n}\sum_{w=2}^{n+1}\ind_{\left\{\bxi_{w,\delta}(T_n-T_{w-1},L_w)=(k,l)\right\}}
+ \frac{1}{n}\ind_{\left\{\bxi_{1,\delta}(T_n, L_1)=(k,l)\right\}},
\end{align}
where the second term goes to 0 a.s. as $n\to\infty$.
Then we only need to consider the limit of the first term in \eqref{eq:Nij_dist}.

Let $\{\widetilde{\bxi}_{w,\delta}(t, \widetilde{L}_w): t\ge 0\}_{w\ge 1}$ be a sequence of iid MBI processes with random labels, which have the same distributional properties as $\widetilde{\bxi}_{\delta}(\cdot, L^*)$ and satisfy the initialization condition in \eqref{eq:initial}. 
Then we divide the first term in \eqref{eq:Nij_dist} into different parts:
\begin{align*}
&\frac{1}{n}\sum_{w=2}^{n+1}\ind_{\left\{\bxi_{w,\delta}(T_n-T_{w-1}, L_w)=(k,l)\right\}}\\
&= 
\frac{1}{n}\sum_{w=2}^{n+1}\left(\ind_{\left\{\bxi_{w,\delta}\left(T_n-T_{w-1},L_w\right)=(k,l)\right\}}
- \ind_{\left\{\bxi_{w,\delta}\bigl(\log(n/w)/c^*,L_w\bigr)=(k,l)\right\}}
\right)\\
&\,+ 
\frac{1}{n}\sum_{w=2}^{n+1}\left(
\ind_{\left\{\bxi_{w,\delta}\bigl(\log(n/w)/c^*\bigr)=(k,l)\right\}}
- \PP^{\mathcal{F}_{T_{w-1}}}\left[\bxi_{w,\delta}\left(\frac{1}{c^*}\log(n/w), L_w\right)=(k,l)\right]
\right)\\
&\,+ 
\frac{1}{n}\sum_{w=2}^{n+1}\left(\PP^{\mathcal{F}_{T_{w-1}}}\left[\bxi_{w,\delta}\left(\frac{1}{c^*}\log(n/w), L_w\right)=(k,l)\right]
\right.\\
&\left. \qquad \qquad \qquad - \PP\left[\widetilde{\bxi}_{w,\delta}\left(\frac{1}{c^*}\log(n/w), \widetilde{L}_w\right)=(k,l)\middle| \widetilde{\bxi}_{w,\delta}(0, \widetilde{L}_w), \widetilde{L}_w\right]
\right)\\
&\,+\left(
\frac{1}{n}\sum_{w=2}^{n+1}\PP\left[\widetilde{\bxi}_{w,\delta}\left(\frac{1}{c^*}\log(n/w), \widetilde{L}_w\right)=(k,l)\middle| \widetilde{\bxi}_{w,\delta}(0, \widetilde{L}_w), \widetilde{L}_w\right]\right.\\
&\left. \qquad - \int_0^1 \PP\left[\widetilde{\bxi}_{\delta}\left(-\frac{1}{c^*}\log t,L^*\right)=(k,l)\right]\dd t
\right)\\
&\, + \int_0^1 \PP\left[\widetilde{\bxi}_{\delta}\left(-\frac{1}{c^*}\log t,L^*\right)=(k,l)\right]\dd t\\
&=: A_1(n)+A_2(n)+A_3(n)+A_4(n)+A_5.
\end{align*}
Note that by a change of variable argument, $A_5$ is identical to the right hand side of \eqref{eq:Nij},
and we now show that $A_j(n)\convp 0$, for $j=1,2,3,4$.

For $A_1(n)$, we have
\begin{align}
\EE|A_1(n)|&\le \frac{1}{n}\sum_{w=2}^{n+1}\EE\left|\ind_{\left\{\xi^{(1)}_{w,\delta}(T_n-T_{w-1}, L_w)=k\right\}}
-\ind_{\{\xi^{(1)}_{w,\delta}\bigl(\log(n/w)/c^*, L_w\bigr)=k\}}\right|\nonumber\\
&\quad +\frac{1}{n}\sum_{w=2}^{n+1}\EE\left|\ind_{\{\xi^{(2)}_{w,\delta}(T_n-T_{w-1}, L_w)=l\}}
-\ind_{\{\xi^{(2)}_{w,\delta}\bigl(\log(n/w)/c^*, L_w\bigr)=l\}}\right|.
\label{eq:A1_bound}
\end{align}
Since both $\xi^{(1)}_{w,\delta}(\cdot, L_w)$ and $\xi^{(2)}_{w,\delta}(\cdot, L_w)$
 have finite number of jumps in any finite time interval $[0,K]$ a.s., then 
 applying Lemma~3.1 in \cite{athreya:ghosh:sethuraman:2008} gives that for all $K>0$ and $w\ge 2$,
 \[
 \lim_{\epsilon\downarrow 0}\sup_{t\in [0,K]} \PP\left(\xi^{(i)}_{w,\delta}(t+\epsilon, L_w)-
 \xi^{(i)}_{w,\delta}\bigl((t-\epsilon)\wedge 0, L_w)\bigr)\ge 1\right)=0,\qquad i=1,2.
 \]
 Also, we see from \cite[Corollary 2.1(iii)]{athreya:ghosh:sethuraman:2008} that for $\eta>0$,
 \[
 \sup_{n\eta \le w\le n}\left|T_n-T_{w-1}-\frac{1}{c^*}\log(n/w)\right|\convas 0.
 \]
 Then using {techniques from} \cite[Theorem 1.2, pp 489--490]{athreya:ghosh:sethuraman:2008} further gives
 \begin{align*}
 &\left|\ind_{\{\xi^{(1)}_{w,\delta}(T_n-T_{w-1}, L_w)=k\}}
-\ind_{\{\xi^{(1)}_{w,\delta}\bigl(\log(n/w)/c^*, m\bigr)=k\}}\right|\\
&\le \sup_{w\ge 1}\sup_{1\le m\le K}\sup_{t\in [0,-\log\eta/c^*]}\PP\left(\xi^{(1)}_{w,\delta}(t+\epsilon, m)-
 \xi^{(1)}_{w,\delta}\bigl((t-\epsilon)\wedge 0, m)\bigr)\ge 1\right)\\
&\quad +\PP\left(\sup_{n\eta \le w\le n}\left|T_n-T_{w-1}-\frac{1}{c^*}\log(n/w)\right|\ge \epsilon\right)=: p_1(\epsilon,\eta).
 \end{align*}
 Similarly,
 \begin{align*}
 &\left|\ind_{\{\xi^{(2)}_{w,\delta}(T_n-T_{w-1}, L_w)=l\}}
-\ind_{\{\xi^{(2)}_{w,\delta}\bigl(\log(n/w)/c^*, L_w\bigr)=l\}}\right|\\
&\le \sup_{w\ge 1}\sup_{1\le m\le K}\sup_{t\in [0,-\log\eta/c^*]}\PP\left(\xi^{(2)}_{w,\delta}(t+\epsilon,m)-
 \xi^{(2)}_{w,\delta}\bigl((t-\epsilon)\wedge 0), m\bigr)\ge 1\right)\\
&\quad +\PP\left(\sup_{n\eta \le w\le n}\left|T_n-T_{w-1}-\frac{1}{c^*}\log(n/w)\right|\ge \epsilon\right)=: p_2(\epsilon,\eta).
 \end{align*}
 Then by \eqref{eq:A1_bound}, we see that
 \begin{align*}
 \EE|A_1(n)|&\le 2\cdot\frac{1}{n}\cdot n\eta+\frac{1}{n}(1-\eta)n\bigl(p_1(\epsilon,\eta)+p_2(\epsilon,\eta)\bigr),
 \end{align*}
 which implies $\lim_{n\to\infty}\EE|A_1(n)|=0$. Therefore, $A_1(n)\convp 0$.
 
  For $A_2(n)$, 
define 
$$X_w:= \ind_{\left\{\bxi_{w,\delta}\left(\frac{1}{c^*}\log(n/w), L_w\right)=(k,l)\right\}}
 - \PP^{\mathcal{F}_{T_{w-1}}}\left(\bxi_{w,\delta}\left(\frac{1}{c^*}\log(n/w), L_w\right)=(k,l)\right),$$ 
 then we see that $\EE(X_w) = \EE\left(\EE^{\mathcal{F}_{T_{w-1}}}(X_w)\right)=0$.
 Also, for $u<w$, since
 \[
 \EE(X_wX_u) = \EE\left(\EE^{\mathcal{F}_{T_{w-1}}}(X_w X_u)\right)
 =\EE\left(\EE^{\mathcal{F}_{T_{w-1}}}(X_w) \EE^{\mathcal{F}_{T_{w-1}}}(X_u)\right)=0,
 \]
then by the weak law of large numbers, we have $A_2(n)\convp 0$.

For $A_3(n)$, we first note that for $w\ge 2$ and $t\ge 0$, 
\[
\PP^{\mathcal{F}_{T_{w-1}}}\left(\bxi_{w,\delta}(t, L_w) = (k,l)\right)
= \PP\left(\bxi_{w,\delta}(t, L_w) = (k,l)\middle| \bxi_{w,\delta}(0, L_w), L_w\right).
\]
Then we have
\begin{align*}
|A_3(n)|&\le \frac{1}{n}\sum_{w=2}^{n+1}\ind_{\left\{\left(\bxi_{w,\delta}(0,L_w),L_w\right)\neq \left(\widetilde{\bxi}_{w,\delta}(0,\widetilde{L}_w),\widetilde{L}_w\right)\right\}}.
\end{align*}
Hence, to prove $A_3(n)\convp 0$, it suffices to show
\begin{align}\label{eq:xi0_diff}
\frac{1}{n}\sum_{w=2}^{n+1}\PP\left[\left(\bxi_{w,\delta}(0,L_w),L_w\right)\neq \left(\widetilde{\bxi}_{w,\delta}(0,\widetilde{L}_w),\widetilde{L}_w\right)\right] \to 0.
\end{align}
Let $\mathcal{X} := \{(0,1),(1,0),(1,1)\}\times \{1,\ldots, K\}$, and we note that
\begin{align*}
\PP&\left[\left(\bxi_{w,\delta}(0,L_w),L_w\right)\neq \left(\widetilde{\bxi}_{w,\delta}(0,\widetilde{L}_w),\widetilde{L}_w\right)\right]\\
&\le \sum_{\bx \in \mathcal{X}}\left|\PP\left[\left(\bxi_{w,\delta}(0,L_w),L_w\right)=\bx\right]-\PP\left[\left(\widetilde{\bxi}_{w,\delta}(0,\widetilde{L}_w),L_w\right)=\bx\right]\right|\\
&\le 2\alpha\sum_{m=1}^K \rho_{m\bullet}\left|\EE\left(\frac{|\Ein_m(w-1)|+\delta|V_m(w-1)|}{|E(w-1)|+\delta |V(w-1)|}\right)-\frac{x_m+\delta\pi_m}{\sum_s x_s +\delta}\right|\\
&\quad + 2\gamma\sum_{m=1}^K \rho_{\bullet m}\left|\EE\left(\frac{|\Eout_m(w-1)|+\delta|V_m(w-1)|}{|E(w-1)|+\delta |V(w-1)|}\right)-\frac{y_m+\delta\pi_m}{\sum_s y_s +\delta}\right|.
\end{align*}
From the proof of Theorem~\ref{thm:En}, we see that there exist constants $C_1(\delta), C_2(\delta)>0$ such that
\[
\left|\EE\left(\frac{|\Ein_m(w-1)|+\delta|V_m(w-1)|}{|E(w-1)|+\delta |V(w-1)|}\right)-\frac{x_m+\delta\pi_m}{\sum_s x_s +\delta}\right|\le C_1(\delta) w^{\lambda_H-1},
\]
and
\[
\left|\EE\left(\frac{|\Eout_m(w-1)|+\delta|V_m(w-1)|}{|E(w-1)|+\delta |V(w-1)|}\right)-\frac{y_m+\delta\pi_m}{\sum_s y_s +\delta}\right|\le C_2(\delta) w^{\lambda_H-1}.
\]
Then the left hand side of \eqref{eq:xi0_diff} is bounded by
\[
\frac{2}{n}\sum_{w=2}^{n+1} \sum_{m=1}^K\left(\rho_{m\bullet} C_1(\delta) 
+ \rho_{\bullet m} C_2(\delta)\right) w^{\lambda_H-1}, \qquad \lambda_H<1,
\]
which goes to 0 as $n\to\infty$, thus proving the claim in \eqref{eq:xi0_diff}.

 For $A_4(n)$, since $\{(\widetilde{\bxi}_{w,\delta}(0,\widetilde{L}_w), \widetilde{L}_w): w\ge 1\}$ are iid 
 random vectors in $\mathcal{X}$, then 
\begin{align*}
 \frac{1}{n}\sum_{w=2}^{n+1}&\left(\PP\left[\widetilde{\bxi}_{w,\delta}\left(\frac{1}{c^*}\log(n/w), \widetilde{L}_w\right)=(k,l)\middle| \widetilde{\bxi}_{w,\delta}(0, \widetilde{L}_w), \widetilde{L}_w\right]
 \right.\\
&\left.\qquad - \PP\left[\widetilde{\bxi}_{w,\delta}\left(\frac{1}{c^*}\log(n/w), \widetilde{L}_w\right)=(k,l)\right]\right)\convp 0.
\end{align*}
Also, by the definition of $\widetilde{\bxi}_{w,\delta}$, $w\ge 1$, we see that
\[
\frac{1}{n}\sum_{w=2}^{n+1}\PP\left[\widetilde{\bxi}_{w,\delta}\left(\frac{1}{c^*}\log(n/w), \widetilde{L}_w\right)=(k,l)\right]
= \frac{1}{n}\sum_{w=2}^{n+1}\PP\left[\widetilde{\bxi}_{\delta}\left(\frac{1}{c^*}\log(n/w), L^*\right)=(k,l)\right].
\]
Since the function $\PP[\widetilde{\bxi}_{\delta}(t,L^*)=(k,l)]$ is bounded and continuous in $t$, 
 then we conclude that $A_4(n)\convp 0$ by applying the Riemann integrability of $\PP\left[\widetilde{\bxi}_{\delta}(-\log t/c^*, L^*)=(k,l)\right]$,
which completes the proof of \eqref{eq:Nij}.
\end{proof}

\section{Power Laws and Asymptotic Dependence of Degree Frequencies}\label{sec:mrv}
In this section, we study the  dependence between large in- and
out-degrees by examining the asymptotic behavior of the distribution
$\PP[(\mathcal{I}_m, \mathcal{O}_m)\in \cdot]$, for each
$m=1,\ldots,K$. 

\subsection{Multivariate and Hidden Regular Variation}

To formalize our analysis, we provide some useful definitions related
to \emph{multivariate regular variation} (MRV) and \emph{hidden
  regular variation} (HRV) {of distributions}.

Suppose that $\mathbb{C}_0\subset\mathbb{C}\subset\mathbb{R}_+^2$ are two closed cones, and we provide the definition of $\mathbb{M}$-convergence in Definition~\ref{def:Mconv}
(cf. \cite{lindskog:resnick:roy:2014,hult:lindskog:2006a,das:mitra:resnick:2013,kulik:soulier:2020,basrak:planinic:2019}) on $\mathbb{C}\setminus \mathbb{C}_0$, which lays the theoretical foundation of regularly varying measures (cf. Definition~\ref{def:MRV}).
\begin{Definition}\label{def:Mconv}
Let $\mathbb{M}(\mathbb{C}\setminus \mathbb{C}_0)$ be the set of Borel
measures on $\mathbb{C}\setminus \mathbb{C}_0$ which are finite on
sets bounded away from $\mathbb{C}_0$, and
$\mathcal{C}(\mathbb{C}\setminus \mathbb{C}_0)$ be the set of
continuous, bounded, non-negative functions on $\mathbb{C}\setminus
\mathbb{C}_0$ whose supports are bounded away from  $\mathbb{C}_0$. Then for $\mu_n,\mu \in \mathbb{M}(\mathbb{C}\setminus
\mathbb{C}_0)$, we say $\mu_n \to \mu$ in
$\mathbb{M}(\mathbb{C}\setminus \mathbb{C}_0)$, if $\int f\dd
\mu_n\to\int f\dd \mu$ for all $f\in \mathcal{C}(\mathbb{C}\setminus
\mathbb{C}_0)$. 
\end{Definition} 

Without loss of generality \cite{lindskog:resnick:roy:2014}, we can
and do take functions in  $\mathcal{C}(\mathbb{C}\setminus
\mathbb{C}_0)$ to be uniformly continuous as well.  Denote the modulus
of continuity of a uniformly continuous function $f:\RR_+^p \mapsto
\RR_+$ by
\begin{equation}\label{e:defModCon}
\Delta_f(\delta)=\sup\{ |f(\bx)-f(\by)|:
d(\bx,\by)<\delta\}\end{equation}
 where $d(\cdot,\cdot)$ is an appropriate metric
on the domain of $f$. Uniform continuity means $\lim_{\delta \to 0}
\Delta_f(\delta)=0.$ 
{We now present the definition of multivariate regular variation with
$\mathbb{C}=\mathbb{R}_+^2$ and $\mathbb{C}_0 = \{\origin\}$.

Following Definition~2.1 in \cite{resnickbook:2007}, 
we denote a regularly varying function $f:\RR_+\mapsto \RR_+$ with index $a\in\RR$, as $f\in RV_a$.
Definition~\ref{def:MRV} gives the formal description of the MRV of distributions.

\begin{Definition}\label{def:MRV}
The distribution $\PP(\bZ\in\cdot)$ of a
 random vector $\bZ$ on $\mathbb{R}_+^2$,
  is (standard) regularly varying on $\mathbb{R}_+^2\setminus
  \{\origin\}$ with index $c>0$  if
  there exists some {regularly varying} scaling function $b(t)\in \text{RV}_{1/c}$ and a
  limit measure $\nu(\cdot)\in \mathbb{M}(\mathbb{R}_+^2\setminus
  \{\origin\})$ such that 
as $t\to\infty$,
\begin{equation}\label{eq:def_mrv}
t\PP(\bZ/b(t)\in\cdot)\rightarrow \nu(\cdot),\qquad\text{in }\mathbb{M}(\mathbb{R}_+^2\setminus \{\origin\}).
\end{equation}
It is  convenient to write  $\PP(\bZ\in\cdot)\in  
  \text{MRV}(c, b(t), \nu, \mathbb{R}_+^2\setminus \{\origin\})$.
\end{Definition}}

When analyzing the asymptotic dependence between components of a
bivariate random vector
$\bZ$ satisfying \eqref{eq:def_mrv},  it is often informative  
to make a polar coordinate transform and
consider the transformed points located on the $L_1$ unit sphere
\begin{align}
\label{eq:map_L1}
(x,y)\mapsto\left(\frac{x}{x+y},\frac{y}{x+y}\right),
\end{align}
after thresholding the data according to the
$L_1$ norm. 

When a limit measure
concentrates on a subcone of the full state space,
to improve estimates of probabilities in the complement of the subcone,
we can seek a
second {\it hidden\/} regular variation regime after removing the
subcone. 

\begin{Definition}
The vector $\bZ$ is regularly varying on $\RR^2_+\setminus \{\origin\}$ and has hidden regular variation on $\RR^2_+ \setminus\mathbb{C}_0$ if there exist $0 <c\le c_0$, scaling functions $b(t) \in RV_{1/c}$ and $b_0(t) \in RV_{1/c_0}$ with $b(t)/b_0(t) \to\infty$ and limit measures $\nu$, $\nu_0$, such that 
\begin{align}\label{eq:def_hrv}
\PP(\bZ\in\cdot)\in 
\text{MRV}(c, b(t), \nu, \mathbb{R}_+^2\setminus \{\origin\})
\cap \text{MRV}(c_0, b_0(t), \nu_0, \mathbb{R}_+^2\setminus \{\mathbb{C}_0\}).
\end{align}
\end{Definition}

A convenient way to characterize HRV is through the \emph{generalized polar coordinate transformation} for $\mathbb{R}_+^2\setminus \mathbb{C}_0$ and an associated metric $d(\cdot,\cdot)$ satisfying $d(cx,cy) = cd(x,y)$ for scalars $c>0$. The metric $d(\cdot,\cdot)$ that we use in practice is the $L_1$-metric.
When using generalized polar coordinates with respect to the forbidden
zone $\mathbb{C}_0$, we define $\aleph_{\mathbb{C}_0} :=\{\bx
\in\mathbb{C}\setminus \mathbb{C}_0 : d(\bx,\mathbb{C}_0) = 1\}$, the
locus of points at distance 1 from 
$\mathbb{C}_0$. 
Then the generalized polar coordinates are specified through the transformation,
$\text{GPOLAR}: \RR_+^2\setminus \mathbb{C}_0\mapsto (0,\infty)\times \aleph_{\mathbb{C}_0}$ with
\[
\text{GPOLAR}(\bx) = \left(d(\bx,\mathbb{C}_0),\frac{\bx}{d(\bx,\mathbb{C}_0)}\right).
\]
Let $\nu_c(\cdot)$ be a measure in $\mathbb{M}(\RR_+\setminus\{0\})$ satisfying $\nu_c(x,\infty)=x^{-c}$ $x,c>0$, and $S_0(\cdot)$ be a probability measure on $\aleph_{\mathbb{C}_0}$. 
Then generalized polar coordinates allow re-writing \eqref{eq:def_hrv} as
\[
t\PP\left[\left(\frac{d(\bZ, \mathbb{C}_0)}{b_0(t)}, \frac{\bZ}{d(\bZ, \mathbb{C}_0)}\right)\in\cdot\right]
\to (\nu_{c_0}\times S_0)(\cdot)
\]
in $\mathbb{M}((\RR_+\setminus\{0\})\times \aleph_{\mathbb{C}_0})$.
See \cite{das:mitra:resnick:2013} and \cite{lindskog:resnick:roy:2014} for details.

\subsection{Degree Frequencies and HRV}

Let $L^*$ be a random variable with pmf $\PP(L^*=m)=\pi_m$, $m=1,\ldots,K$, independent from $T^*$ and $\{\widetilde{\bxi}_\delta(\cdot,m): m=1,\ldots,K\}$.
Using \eqref{eq:Nij},  the limit in \eqref{eq:Nij_IO} becomes
\[
\sum_{m=1}^K \pi_m\PP\Bigl[\big(\mathcal{I}_m,\mathcal{O}_m\bigr) =
(k,l)\Bigr]=\PP\left(\widetilde{\bxi}_\delta(T^*, L^*) =
  (k,l)\right)=:{\PP \Big(\bigl(\mathcal{I},\mathcal{O}\bigr) =(k,l) \Bigr)},
\]
the limiting 
empirical proportion of nodes with in-degree $k$ and out-degree $l$.

We now discuss the regular variation properties of this limit distribution.
The branching structure of $\widetilde{\bxi}_\delta(\cdot,m)$ is specified through the matrix $A_m$ given in \eqref{eq:defA}, and
the largest eigenvalue of $A_m$ is $\lambda_m$ given in \eqref{eq:lambdam}.
We assume parameters $\alpha$, $\gamma$ and the matrix
$\boldsymbol{\rho}$, are chosen such that $\lambda_m$, $m=1,\ldots,K$,
are all distinct values {and}  without loss of generality, assume the
behavioral group labels are chosen so that
\begin{equation}\label{e:ordered}
\lambda_{1}>\lambda_{2}>\cdots>\lambda_{K},\end{equation}
Theorem~\ref{thm:mrv}
gives the multivariate regular variation
of $\PP[(\mathcal{I},\mathcal{O}) \in \cdot\,]$ on $\mathbb{R}_+^2
\setminus \{\origin\}$. The proof is based on
an extended Breiman's theorem  \cite[Theorem 
3]{wang:resnick:2022}, reviewed in Theorem~\ref{th:extendBrei}. 
Theorem~\ref{thm:mrv} requires an additional assumption that $\lambda_1\ge \log 2$ to guarantee that the moment condition in \eqref{e:extraCond} is satisfied.

\begin{Theorem}\label{th:mrvIO}
\label{thm:mrv}
Recall the definition of $\rho^*,\, c^*$ in \eqref{e:rho*},
 suppose $\lambda_{1}\ge \log 2$ and that the regularity conditions
\eqref{e:star} hold. 
Then as $t\to\infty$,
\[
t\PP\left[\frac{(\mathcal{I},\mathcal{O})}{t^{\lambda_{1}/c^*}}\in\cdot\right] \longrightarrow \mu_1,
\qquad\text{in}\quad\mathbb{M}(\RR_+^2\setminus \{\boldsymbol{0}\}),
\]
where the limiting measure $\mu_1\in
\mathbb{M}(\mathbb{R}_+^2\setminus\{\origin\})$ satisfies that for $f\in \mathcal{C}(\mathbb{R}_+^2\setminus\{\origin\})$,
\begin{align*}
\mu_1(f) = &\PP(L^*=1)\int_0^\infty
             \EE\left(f(y\widetilde{Z}(1)\bv(1))\right) 
             \nu_{c^*/\lambda_{1}}(\mathrm{d}y).\\
  =&\PP(L^*=1) \EE \left( \widetilde{Z}(1)^{c^*/\lambda_1}\right)
     \int_0^\infty
       f\bigl(s \bv(1) \bigr)
             \nu_{c^*/\lambda_{1}}(\mathrm{d}s).
\end{align*}
Here $\widetilde{Z}(1)$ satisfies
$e^{-\lambda_{1}t}\widetilde{\bxi}_{\delta}(t,1)\convas
\widetilde{Z}(1)\bv(1)$, and  
\begin{equation}
\label{eq:slope1}
a(1):=\frac{v^{(2)}{(1)}}{v^{(1)}{(1)}} = \frac{\gamma-\alpha+\sqrt{D_0(1)}}{2\gamma\rho_{\bullet 1}}.
\end{equation}
\end{Theorem}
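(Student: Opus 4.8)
The plan is to reduce to the single-group case by conditioning on the communication label $L^*$, apply the extended Breiman's theorem (Theorem~\ref{th:extendBrei}, the extended Breiman's theorem from \cite{wang:resnick:2022}) separately to each group, and observe that only the group carrying the largest Perron eigenvalue $\lambda_1$ survives the scaling by $t^{\lambda_1/c^*}$. Since $(\mathcal{I},\mathcal{O})=\widetilde{\bxi}_\delta(T^*,L^*)$ with $L^*$ independent of $T^*$ and $\{\widetilde{\bxi}_\delta(\cdot,m):m=1,\ldots,K\}$ and $\PP(L^*=m)=\pi_m$, conditioning on $L^*$ yields, for every $f\in\mathcal{C}(\RR_+^2\setminus\{\origin\})$,
\[
t\,\EE\Bigl[f\bigl(t^{-\lambda_1/c^*}(\mathcal{I},\mathcal{O})\bigr)\Bigr]
=\sum_{m=1}^K\pi_m\, t\,\EE\Bigl[f\bigl(t^{-\lambda_1/c^*}\widetilde{\bxi}_\delta(T^*,m)\bigr)\Bigr],
\]
so it suffices to handle each summand. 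For fixed $m$ two ingredients are available: first, $T^*\sim\mathrm{Exp}(c^*)$ forces $\PP(e^{\lambda_m T^*}>x)=x^{-c^*/\lambda_m}$ for $x\ge1$, an exact Pareto tail of index $c^*/\lambda_m$; second, by \eqref{eq:conv_MBI} (with initialization \eqref{eq:initial}), $e^{-\lambda_m t}\widetilde{\bxi}_\delta(t,m)\convas\widetilde Z(m)\bv(m)$, with $\widetilde Z(m)>0$ a.s.\ and $\bv(m)$ strictly positive.

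\textbf{Dominant group.} For $m=1$, Theorem~\ref{th:extendBrei} takes exactly this input — a two-type continuous-time branching flow normalizing almost surely to the rank-one limit $\widetilde Z(1)\bv(1)$, evaluated at the independent exponential clock $T^*$ — and, under the moment condition \eqref{e:extraCond}, outputs that $\widetilde{\bxi}_\delta(T^*,1)$ is multivariate regularly varying with index $c^*/\lambda_1$ and scaling function $t^{\lambda_1/c^*}$, with
\[
t\,\PP\Bigl[t^{-\lambda_1/c^*}\,\widetilde{\bxi}_\delta(T^*,1)\in\cdot\Bigr]
\longrightarrow\int_0^\infty\EE\bigl(f(y\,\widetilde Z(1)\,\bv(1))\bigr)\,\nu_{c^*/\lambda_1}(\mathrm{d}y).
\]
The second displayed form of $\mu_1$ then follows from Fubini plus the scaling identity $\int_0^\infty g(cy)\,\nu_a(\mathrm{d}y)=c^a\int_0^\infty g(s)\,\nu_a(\mathrm{d}s)$ with $c=\widetilde Z(1)$ and $g(\cdot)=f(\cdot\,\bv(1))$, which turns the right side into $\EE(\widetilde Z(1)^{c^*/\lambda_1})\int_0^\infty f(s\bv(1))\,\nu_{c^*/\lambda_1}(\mathrm{d}s)$; this equals $\mu_1(f)/\pi_1$.

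\textbf{Subdominant groups and slope.} For $m\ge2$ the same theorem (now with scaling $t^{\lambda_m/c^*}$) makes $\widetilde{\bxi}_\delta(T^*,m)$ regularly varying with the strictly larger index $c^*/\lambda_m$, so $t\mapsto\PP[\|\widetilde{\bxi}_\delta(T^*,m)\|>t^{\lambda_1/c^*}]\in RV_{-\lambda_1/\lambda_m}$ with exponent $-\lambda_1/\lambda_m<-1$ by the ordering \eqref{e:ordered}; hence $t\,\PP[\|\widetilde{\bxi}_\delta(T^*,m)\|>\epsilon\,t^{\lambda_1/c^*}]\to0$ for all $\epsilon>0$, and because $f$ has support bounded away from $\origin$ this gives $t\,\EE[f(t^{-\lambda_1/c^*}\widetilde{\bxi}_\delta(T^*,m))]\to0$. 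Summing over $m$ leaves only the $m=1$ term, i.e.\ $\mu_1(f)$, proving the stated $\mathbb{M}(\RR_+^2\setminus\{\origin\})$-convergence. The identity \eqref{eq:slope1} is a one-line computation with the left $\lambda_1$-eigenvector of $A_1$ in \eqref{eq:defA}: its first coordinate equation $v^{(1)}(1)\alpha+v^{(2)}(1)\gamma\rho_{\bullet1}=\lambda_1 v^{(1)}(1)$ gives $a(1)=v^{(2)}(1)/v^{(1)}(1)=(\lambda_1-\alpha)/(\gamma\rho_{\bullet1})$, and substituting $\lambda_1=\tfrac12(1+\sqrt{D_0(1)})$ with $\alpha+\gamma=1$ produces $(\gamma-\alpha+\sqrt{D_0(1)})/(2\gamma\rho_{\bullet1})$.

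\textbf{Main obstacle.} The genuine content is verifying the hypothesis \eqref{e:extraCond} of Theorem~\ref{th:extendBrei} for the group-$1$ flow, which is where $\lambda_1\ge\log2$ enters. The condition asks for a finite moment of order strictly above $c^*/\lambda_1$ not merely for the martingale limit $\widetilde Z(1)$ but for the running maximum $\sup_{t\ge0}e^{-\lambda_1 t}\|\widetilde{\bxi}_\delta(t,1)\|$; the binary-splitting structure of $A_1$ bounds the growth rate of the $\theta$-th moment of $e^{-\lambda_1 t}\|\widetilde{\bxi}_\delta(t,1)\|$ by a quantity that stays subexponential for $\theta$ slightly exceeding $c^*/\lambda_1$ precisely because the worst-case per-step multiplicative growth is a factor $2$, which is controlled once $2e^{-\lambda_1}\le1$, i.e.\ $\lambda_1\ge\log2$; a Doob $L^\theta$ maximal inequality then passes the bound to the supremum. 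Everything else — the mixture decomposition, the negligibility of the subdominant groups, the change of variables, and the eigenvector computation — is routine.
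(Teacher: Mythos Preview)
Your decomposition-by-label strategy differs from the paper's route, and while the overall idea is sound, the treatment of the subdominant groups has a gap and the moment-condition discussion is slightly off.

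\textbf{How the paper proceeds.} The paper does \emph{not} split on $L^*$ first. Instead it observes that since $\lambda_1>\lambda_m$ for $m\ge2$, one has $e^{-\lambda_1 t}\widetilde{\bxi}_\delta(t,L^*)\convas\widetilde Z(1)\bv(1)\ind_{\{L^*=1\}}$ as a single almost-sure limit (with an atom at $\origin$ on $\{L^*\neq1\}$), and then applies Theorem~\ref{th:extendBrei} \emph{once} to the mixed process with $\bxi(t)=t^{-1}\widetilde{\bxi}_\delta(\lambda_1^{-1}\log t,L^*)$, $\bxi_\infty=\widetilde Z(1)\bv(1)\ind_{\{L^*=1\}}$, $X=e^{\lambda_1 T^*}$. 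The limit measure automatically picks up the factor $\PP(L^*=1)$ from the indicator, and the subdominant groups contribute nothing because they are absorbed into the zero part of $\bxi_\infty$. The only analytic work is the moment condition \eqref{e:extraCond} at scale $\lambda_1$, which the paper dispatches by citing Proposition~2 of \cite{wang:resnick:2022}.

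\textbf{The gap in your subdominant-group argument.} For $m\ge2$ you invoke Theorem~\ref{th:extendBrei} at scale $t^{\lambda_m/c^*}$ to conclude that $\widetilde{\bxi}_\delta(T^*,m)$ is regularly varying with index $c^*/\lambda_m$, and then use the resulting tail to kill the term. But applying Breiman at scale $\lambda_m$ requires the moment condition \eqref{e:extraCond} for group $m$, i.e.\ $\sup_{s\ge0}e^{-\lambda_m c's}\EE[\|\widetilde{\bxi}_\delta(s,m)\|^{c'}]<\infty$ for some $c'>c^*/\lambda_m$. By the same logic you invoke for group~1 (and by the paper's own use of Proposition~2), this is established under $\lambda_m\ge\log2$---but the theorem only assumes $\lambda_1\ge\log2$, so for $m\ge2$ you have not verified the hypothesis you are using. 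The paper's single-application approach sidesteps this: all groups are normalized at the common scale $\lambda_1$, so for $m\ge2$ the normalization is strictly stronger than needed and the moment check is no harder than (indeed dominated by) the $m=1$ case.

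\textbf{Minor point on the moment condition.} Condition \eqref{e:extraCond} asks for $\sup_{t\ge0}\EE[\|\bxi(t)\|^{c'}]<\infty$, with the supremum \emph{outside} the expectation; it is not a bound on $\EE[(\sup_t e^{-\lambda_1 t}\|\widetilde{\bxi}_\delta(t,1)\|)^{c'}]$. No Doob maximal inequality is needed, and your heuristic about ``worst-case per-step multiplicative growth is a factor $2$'' is suggestive but not a proof. The paper simply cites the external Proposition~2 for \eqref{eq:claim_moment}.
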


Write
\begin{equation}\label{e:def a(m)}
  a(m):= {v^{(2)}{(m)}}/{v^{(1)}{(m)}}, \quad m=1,\ldots, K.
  \end{equation}
Theorem~\ref{thm:mrv} 
shows that the limiting measure $\mu_1$ concentrates Pareto 
mass on the ray
$\mathcal{L}_{(1)}:=\{(x,y)\in \RR_+^2: y=a{(1)}x\}$ and
concentration on $\mathcal{L}_{(1)}$
suggests that at scale $b(t)=t^{\lambda_{1}/c^*}$,
  large in- and out-degree pairs in the PA model with heterogeneous 
reciprocity levels satisfy $\mathcal{O}\approx a{(1)}\mathcal{I}$.

\begin{proof}
Similar to the proof of Theorem~6 in \cite{wang:resnick:2022}, we first see that
$\PP(\widetilde{Z}_1>0)=1$ by the property of MBI processes.
Also, {since $\lambda_1>\lambda_r$} for $r=2,\ldots,K$, 
\[
e^{-\lambda_{1}t}\widetilde{\bxi}_\delta(t, {r})\convas 0.
\]
Therefore, 
\[
e^{-\lambda_{1}t}\widetilde{\bxi}_\delta(t, L^*)\convas
\widetilde{Z}(1)\bv(1) \ind_{\{L^*=1\}} 
\]
Then the proof of 
 Theorem~\ref{thm:mrv} is an application of Theorem
 \ref{th:extendBrei} after making the identifications
 \begin{align*}
&\bxi (t)=t^{-1} \widetilde{\bxi}_\delta\left(\frac{1}{\lambda_{1}} \log t, L^*\right), 
           &&\bxi_\infty =\widetilde{Z}(1)\bv(1) \ind_{\{L^*=1\}},
   &&X=e^{\lambda_{1} T^*},\\
   &b(t)=t^{\lambda_{1}/c^*},&&
c=c^*/\lambda_{1}. && {}
 \end{align*}
 The remaining piece is to show the moment condition \eqref{e:extraCond} in this
 context and we will show
any $\delta \ge 0$ and any $q=1,2,\ldots$, there exists some constant $C(\delta,q)>0$ such that 
\begin{align}\label{eq:claim_moment}
\sup_{t\ge 0}e^{-\lambda_{1} qt} \EE\left[\left(\widetilde{\xi}^{(1)}_\delta(t,1)\right)^q\right]
\le C(\delta, q),
\end{align}
which is true by Proposition~2 in \cite{wang:resnick:2022}.
\end{proof}

Next, Theorem~\ref{thm:hrv} gives a second {\it hidden\/} regular 
variation (HRV) regime after removing $\mathcal{L}_{(1)}$ 
\citep{DasRes2015, das:resnick:2017, 
resnickbook:2007,das:mitra:resnick:2013,lindskog:resnick:roy:2014}.
The existence of HRV has been detected empirically in network data
\citep{das:resnick:2017}, 
and here we theoretically prove HRV {present} in the PA model with heterogeneous reciprocity.

The limit measure given in Theorem~\ref{thm:mrv} concentrates on 
$\mathcal{L}_{(1)}$.
Thus, we may seek a regular variation property on
$\RR_+^2\setminus \mathcal{L}_{(1)}$ using a weaker scaling function
$b_0(t)$. A
convenient way to seek the hidden regular variation is by using {\it
  generalized polar coordinates\/} which in this case amount to the
  transformation
  $$\bx \to \Bigl(d_1(\bx, \mathcal{L}_{(1)}), \frac{\bx}{d_1(\bx,\mathcal{L}_{(1)})}\Bigr),$$
where $d_1(\bx,\by) $ is a metric on $\RR_+^2\setminus \{\origin\}$
chosen for convenience to be the $L_1$-metric.
The $L_1$-distance of a point $(x,y)$ to the line $\mathcal{L}_{(1)}$ is readily computed
to be
$$
d_1 \bigl((x,y),\mathcal{L}_{(1)} \bigr) ={|y-a(1)x|}/{\text{max}\{1,a(1\}},
$$
{and we use a scaled version }
\begin{equation}\label{e:d'}
d'_1 \bigl((x,y), \mathcal{L}_{(1)}\bigr) =|y-a(1)x|.
\end{equation}
Define $\aleph_{\mathcal{L}_{(1)}} :=\{\bx \in \RR_+^2 \setminus \mathcal{L}_{(1)}: d'_1(\bx, \mathcal{L}_{(1)}
)=1\}$,
which are 2 line segments {in $\RR_+^2$} parallel to $\mathcal{L}_{(1)}$.
 Hidden regular
variation will be present for $(\mathcal{I},\mathcal{O}) \stackrel{d}{=} \widetilde
\bxi_\delta (T^*,L^*)$ if
$$tP\left[ \left(\frac{d'_1(\widetilde \bxi_\delta (T^*,L^*),\mathcal{L}_{(1)} )}{b_0(t)},
\frac{\widetilde \bxi_\delta (T^*,L^*)}{d'_1\left(\widetilde \bxi_\delta
  (T^*,L^*),\mathcal{L}_{(1)} \right)} \right) \in \cdot \,\right]
$$
converges to a limit measure in
$\mathbb{M}((\RR_+\setminus\{0\})\times \aleph_{\mathcal{L}_{(1)}})$. 
The next theorem explains the convergence on $\mathbb{M}(\mathbb{R}_+^2\setminus\mathcal{L}_{(1)})$.

\begin{Theorem}\label{thm:hrv}
Assume that $\lambda_2>\lambda_{1}/2$, and $\lambda_2\ge \log 2$.
Let $\widetilde{Z}(2)$ be the limiting random variable satisfying
$e^{-t\lambda_2}\widetilde{\bxi}_\delta(t,2)\convas \widetilde{Z}(2)\bv(2)$
as $t\to\infty$. Then we have in $\mathbb{M}(\RR^2_+\setminus \mathcal{L}_{(1)})$ that
\begin{align}\label{eq:conv_hrv}
t\PP\left[\frac{(\mathcal{I},\mathcal{O})}{t^{\lambda_2/c^*}}\in\cdot\right]
\to \mu_2,
\end{align}
where the limit measure $\mu_2\in
\mathbb{M}(\mathbb{R}_+^2\setminus\mathcal{L}_{(1)})$ 
concentrates on the ray $y=a(2)x,$ $x>0$
(recall \eqref{e:def a(m)}) in the first quadrant and satisfies 
for $g\in \mathcal{C}(\mathbb{R}_+^2\setminus\mathcal{L}_{(1)})$, 
\begin{align*}
\mu_2(g) = &\PP(L^*=2)\int_0^\infty \EE\left(g(y\widetilde{Z}(2)\bv(2))\right)
             \nu_{c^*/\lambda_2}(\mathrm{d}y).\\
             =&
                \PP(L^*=2)   \EE\left(
                \widetilde{Z}(2)^{c^*/\lambda_2} \right)
                \int_0^\infty g(y\bv(2) )
\nu_{c^*/\lambda_2}(\mathrm{d}y).                
\end{align*}
\end{Theorem}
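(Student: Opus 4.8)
The plan is to mimic the structure of the proof of Theorem~\ref{thm:mrv}, but now applied on the punctured cone $\RR_+^2\setminus\mathcal{L}_{(1)}$ rather than $\RR_+^2\setminus\{\origin\}$, and with the second eigenvalue $\lambda_2$ playing the role that $\lambda_1$ played before. First I would record the almost sure asymptotics of the embedding process after the leading ray is removed: since $e^{-\lambda_1 t}\widetilde{\bxi}_\delta(t,L^*)\convas \widetilde{Z}(1)\bv(1)\ind_{\{L^*=1\}}$ concentrates on $\mathcal{L}_{(1)}$, the contribution of the $L^*=1$ component to $d'_1(\widetilde{\bxi}_\delta(t,L^*),\mathcal{L}_{(1)})$ is $o(e^{\lambda_1 t})$, and in fact the fluctuation of the type-1 process transverse to $\bv(1)$ grows like $e^{\lambda'_1 t}$ with $\lambda'_1=\tfrac12(1-\sqrt{D_0(1)})<\lambda_2$ by the standing assumption $\lambda_2>\lambda_1/2\ge\lambda'_1+$ (one checks $\lambda'_1<\lambda_1/2<\lambda_2$). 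Meanwhile, for $L^*=2$ we have $e^{-\lambda_2 t}\widetilde{\bxi}_\delta(t,2)\convas\widetilde{Z}(2)\bv(2)$, and for $r\ge 3$, $e^{-\lambda_2 t}\widetilde{\bxi}_\delta(t,r)\convas 0$ because $\lambda_r<\lambda_2$. Consequently, after scaling by $e^{-\lambda_2 t}$ and applying the distance functional $d'_1(\cdot,\mathcal{L}_{(1)})$, only the $L^*=2$ group survives in the limit: $e^{-\lambda_2 t}\,d'_1(\widetilde{\bxi}_\delta(t,L^*),\mathcal{L}_{(1)})\convas \widetilde{Z}(2)\,|v^{(2)}(2)-a(1)v^{(1)}(2)|\,\ind_{\{L^*=2\}}$, a strictly positive random variable on $\{L^*=2\}$ since $a(2)\ne a(1)$ (the $\lambda_m$, hence the slopes $a(m)$, are distinct).

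Next I would set up the extended Breiman theorem (Theorem~\ref{th:extendBrei}) exactly as in the proof of Theorem~\ref{thm:mrv}, making the identifications $\bxi(t)=t^{-1}\widetilde{\bxi}_\delta(\tfrac{1}{\lambda_2}\log t, L^*)$, $\bxi_\infty=\widetilde{Z}(2)\bv(2)\ind_{\{L^*=2\}}$, $X=e^{\lambda_2 T^*}$, $b(t)=t^{\lambda_2/c^*}$, and $c=c^*/\lambda_2$, but now regarding all convergences as taking place on $\RR_+^2\setminus\mathcal{L}_{(1)}$ under the generalized polar coordinates built from $d'_1(\cdot,\mathcal{L}_{(1)})$ described before the theorem. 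The tail of $X=e^{\lambda_2 T^*}$: since $T^*$ is exponential with rate $c^*$, $\PP(X>x)=x^{-c^*/\lambda_2}$, so $X$ is regularly varying with index $c^*/\lambda_2$ and the Breiman-type product $X\bxi_\infty$ inherits the Pareto limit $\nu_{c^*/\lambda_2}$ on the radial part and the law of $\widetilde{Z}(2)\bv(2)$ (conditioned on $L^*=2$ and weighted by $\PP(L^*=2)$) on the angular part. This produces the stated form of $\mu_2$, and the second representation follows from the same moment-transfer identity $\int_0^\infty \EE(g(y\widetilde{Z}(2)\bv(2)))\nu_{c}(\dd y)=\EE(\widetilde{Z}(2)^{c})\int_0^\infty g(s\bv(2))\nu_c(\dd s)$ used in Theorem~\ref{thm:mrv}, valid once $\EE(\widetilde{Z}(2)^{c^*/\lambda_2})<\infty$.

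The main obstacle, and the step requiring genuine work, is verifying the moment/uniform-integrability condition \eqref{e:extraCond} of Theorem~\ref{th:extendBrei} in this hidden regime. In the proof of Theorem~\ref{thm:mrv} this reduced to the bound \eqref{eq:claim_moment}, $\sup_{t\ge 0}e^{-\lambda_1 q t}\EE[(\widetilde{\xi}^{(1)}_\delta(t,1))^q]\le C(\delta,q)$, from Proposition~2 of \cite{wang:resnick:2022}. Here I need the analogous control for the transverse coordinate $d'_1(\widetilde{\bxi}_\delta(t,L^*),\mathcal{L}_{(1)})$ scaled by $e^{-\lambda_2 t}$: one must show $\sup_{t\ge 0}e^{-\lambda_2 q t}\EE[(d'_1(\widetilde{\bxi}_\delta(t,L^*),\mathcal{L}_{(1)}))^q]<\infty$. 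The delicate part is the $L^*=1$ contribution, where $d'_1$ kills the leading-order $e^{\lambda_1 t}$ growth, leaving a residual that must be shown to be $O(e^{\lambda_2 t})$ in $L^q$; this is where the hypothesis $\lambda_2>\lambda_1/2$ is essential, since the transverse fluctuations of the type-1 branching process are governed by the subdominant eigenvalue $\lambda'_1=\lambda_1-1<\lambda_1/2<\lambda_2$ (indeed $\lambda'_1=1-\lambda_1$ and $\lambda_1\ge\log 2>1/2$ forces $\lambda'_1<1/2<\lambda_1/2$), and a second-order moment expansion of the branching process — of the type in \cite[Chapter V]{athreya:ney:1972} — gives $\EE[(d'_1)^q]=O(e^{q\lambda'_1 t}+\text{lower order})$, which is $o(e^{q\lambda_2 t})$. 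The $L^*\ge 2$ contributions are handled directly by Proposition~2 of \cite{wang:resnick:2022} applied to $\widetilde{\bxi}_\delta(\cdot,m)$ for $m\ge 2$ together with $\lambda_m\le\lambda_2$. Assembling these pieces verifies \eqref{e:extraCond}, and the remainder of the argument is the routine application of Theorem~\ref{th:extendBrei} together with the identification of $\mu_2$ and a change-of-variables to pass to generalized polar coordinates; I would also note that $\lambda_2\ge\log 2$ is imposed precisely to secure the $q$-th moment bounds exactly as $\lambda_1\ge\log 2$ was used in Theorem~\ref{thm:mrv}.
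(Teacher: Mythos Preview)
Your overall strategy matches the paper's: pass to generalized polar coordinates relative to $\mathcal{L}_{(1)}$, establish the a.s.\ limit of the radial component $e^{-\lambda_2 t}d'_1(\widetilde{\bxi}_\delta(t,L^*),\mathcal{L}_{(1)})$, and then feed this into the extended Breiman theorem with $X=e^{\lambda_2 T^*}$. Two points, however, are not right.

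First, your Breiman identification $\bxi(t)=t^{-1}\widetilde{\bxi}_\delta\bigl(\tfrac{1}{\lambda_2}\log t,L^*\bigr)$, $\bxi_\infty=\widetilde{Z}(2)\bv(2)\ind_{\{L^*=2\}}$ fails the hypothesis of Theorem~\ref{th:extendBrei}: on $\{L^*=1\}$ the norm $\|\widetilde{\bxi}_\delta(s,1)\|$ grows like $e^{\lambda_1 s}$, so $t^{-1}\widetilde{\bxi}_\delta\bigl(\tfrac{1}{\lambda_2}\log t,1\bigr)\sim t^{\lambda_1/\lambda_2-1}\to\infty$ and $\bxi(t)$ has no finite a.s.\ limit. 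The paper instead applies Breiman after the GPOLAR transform, i.e.\ to the radial component $e^{-\lambda_2 t}\bigl|a(1)\widetilde{\xi}^{(1)}_\delta(t,L^*)-\widetilde{\xi}^{(2)}_\delta(t,L^*)\bigr|$ paired with the angular part; the radial component \emph{does} converge a.s., to $|a(1)-a(2)|\,v^{(1)}(2)\,\widetilde{Z}(2)\,\ind_{\{L^*=2\}}$.

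Second---and this is the real conceptual slip---your handling of the $L^*=1$ transverse term rests on the wrong growth rate. You claim that $d'_1(\widetilde{\bxi}_\delta(t,1),\mathcal{L}_{(1)})$ grows like $e^{\lambda'_1 t}$ and hence $\EE[(d'_1)^q]=O(e^{q\lambda'_1 t})$. In fact $[a(1),-1]^T$ is the right eigenvector of $A_1$ for the sub-dominant eigenvalue $\lambda'_1$, and since $\lambda_1\ge\log 2>2/3$ forces $2\lambda'_1<\lambda_1$, we are in the \emph{small-eigenvalue regime} of \cite[Theorem~V.7.1(i)]{athreya:ney:1972}: the mean of that projection is $O(e^{\lambda'_1 t})$ but its \emph{second moment} grows like $e^{\lambda_1 t}$, so the typical size of $d'_1$ on $\{L^*=1\}$ is $e^{\lambda_1 t/2}$, not $e^{\lambda'_1 t}$. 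Your bound would require only $\lambda_2>\lambda'_1$, strictly weaker than the stated hypothesis. The paper uses precisely this second-moment rate: it shows
\[
\sup_{t\ge 0}\,\EE\Bigl[\bigl(e^{-\lambda_1 t/2}\bigl(a(1)\widetilde{\xi}^{(1)}_\delta(t,1)-\widetilde{\xi}^{(2)}_\delta(t,1)\bigr)\bigr)^2\Bigr]<\infty
\]
(first for the branching pieces via Athreya--Ney, then lifted to the MBI through the Poisson-immigration decomposition and Cauchy--Schwarz), and a Borel--Cantelli argument based on $e^{-2n\lambda_2}\EE[(d'_1)^2]\le C\,e^{-n(2\lambda_2-\lambda_1)}$ then gives $e^{-\lambda_2 t}d'_1(\widetilde{\bxi}_\delta(t,1),\mathcal{L}_{(1)})\to 0$ a.s. \emph{This} is where $\lambda_2>\lambda_1/2$ is actually consumed. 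The Breiman moment condition \eqref{e:extraCond} is verified separately for the $m=2$ component, via the crude bound $|a(1)x-y|\le a(1)x+y$ together with Proposition~2 of \cite{wang:resnick:2022}, which is where $\lambda_2\ge\log 2$ enters.
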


Theorem~\ref{thm:hrv} suggests that after deleting large in- and
out-degree pairs close to $\mathcal{L}_{(1)}$, at the scale
$t^{\lambda_2/c^*}$
the remaining large
observations of in- and out-degrees tend to concentrate around another
line  
\[
\mathcal{L}_{(2)} := \{(x,y)\in (0,\infty)^2: y=a(2)x\}.
\]
Define also that $\mathcal{L}_{(0)}:= \{\origin\}$, and
combining Theorems~\ref{thm:mrv} and \ref{thm:hrv} gives  
\[
\PP((\mathcal{I},\mathcal{O})\in\cdot)\in 
\bigcap_{m=1}^2\text{MRV}\left(c^*/\lambda_{m}, t^{\lambda_{m}/c^*}, \mu_m, \mathbb{R}_+^2\setminus \left(\bigcup_{i=0}^{m-1}\mathcal{L}_{(i)}\right)\right).
\]
\begin{Remark}
{\rm 
Results in Theorem~\ref{thm:hrv} can be extended as follows. 
Let $\lambda_{m_0}$, $m_0\ge 2$, denote the $m_0$-th largest eigenvalue such that
for all $m\in \{2,\ldots, m_0\}$, 
\[
\lambda_{m}>\lambda_{m-1}/2,\qquad\text{and}\qquad \lambda_{m}\ge \log 2.
\]
For $m=1,\ldots,K$, write $\mathcal{L}_{(m)}:= \{(x,y)\in (0,\infty)^2: y=a(m)x\}$,
and set $\widetilde{Z}(m)$ to be the limiting random variable satisfying
$e^{-t\lambda_{m}}\widetilde{\bxi}_\delta(t,m)\convas \widetilde{Z}(m)\bv(m)$
as $t\to\infty$.
Define also the measure
$\mu_m\in
\mathbb{M}\left(\mathbb{R}_+^2\setminus\bigl(\bigcup_{i=1}^m\mathcal{L}_{(i)}\bigr)\right)$
such that for $g\in
\mathcal{C}\left(\mathbb{R}_+^2\setminus\bigl(\bigcup_{i=1}^m\mathcal{L}_{(i)}\bigr)\right)$, 
\[
\mu_m(g) = \PP(L=m)\int_0^\infty \EE\left(g(y\widetilde{Z}(m)\bv(m\right)
\nu_{c^*/\lambda_{m}}(\mathrm{d}y).
\]
Then applying the proof technique of Theorem~\ref{thm:hrv} for $m_0$ times gives 
\[
\PP \bigl((\mathcal{I},\mathcal{O})\in\cdot\bigr)\in 
\bigcap_{m=1}^{m_0}\text{MRV}\left(c^*/\lambda_{m}, t^{\lambda_{m}/c^*}, \mu_m, \mathbb{R}_+^2\setminus \left(\bigcup_{i=0}^{m-1}\mathcal{L}_{(i)}\right)\right).
\]
}
\end{Remark}
\begin{proof}
  Define $\bm{\theta}_{(2)} := (1,a(2))/\left|a(1)-a(2)\right|)
\in \aleph_{\mathcal{L}_1}$. 
Applying the generalized polar transformation 
shows that verifying \eqref{eq:conv_hrv} is equivalent to justifying
\begin{align}\label{eq:conv_hrv1}
  t \PP\Bigl[   \Bigl(&
  \frac{d_1'   ( \widetilde{\bxi}_\delta(T^*,L^*),\mathcal{L}_1)}
{ t^{\lambda_2/c^*}}, 
  \frac{\widetilde{\bxi}_\delta(T^*,L^*)}{d_1'   (
  \widetilde{\bxi}_\delta(T^*,L^*),\mathcal{L}_1) } 
  \Bigr) \in \cdot\,\Bigr]\nonumber \\
 =& t\PP\left[\left(
\frac{\left|a{(1)} \widetilde{\xi}^{(1)}_\delta(T^*,L^*)-\widetilde{\xi}^{(2)}_\delta(T^*,L^*)\right|}{t^{\lambda_2/c^*}}, 
\frac{\widetilde{\bxi}_\delta(T^*,L^*)}{\left|a{(1)}
        \widetilde{\xi}^{(1)}_\delta(T^*,L^*)-\widetilde{\xi}^{(2)}_\delta(T^*,L^*)\right|}\right)\in\cdot\right]\nonumber\\ 
\to                      & C_2\nu_{c^*/\lambda_2}(\cdot)\times 
\epsilon_{\bm{\theta}_{(2)}}(\cdot),
\end{align}
in $\mathbb{M}\left((0,\infty)\times\aleph_{[\mathcal{L}_{(1)}]}\right)$, where
\begin{align*}
C_2 = \PP(L^*=2)\frac{c^*}{\lambda_2}
&\times\int_0^\infty z^{-1-c^*/\lambda_2}\PP\left(\widetilde{Z}(2)>
\frac{1/z}{v^{(1)}{(2)} \left|a(1)-a(2)\right|}\right)\mathrm{d}z.
\end{align*}

To prove \eqref{eq:conv_hrv1}, we first claim that for
$\lambda_2>\lambda_{1}/2$, as $t\to\infty$,  
\begin{align}\label{eq:hrv_claim}
    e^{-t\lambda_2} d_1'(\widetilde{\bxi}_\delta(T^*,L^*),\mathcal{L}_1)=&
e^{-t\lambda_2}\left|a(1)\widetilde{\xi}^{(1)}_{\delta}(t,L^*)-
  \widetilde{\xi}^{(2)}_{\delta}(t,L^*)\right|\nonumber \\
  \convas &
\left|a(1)-a(2)\right| v^{(1)}{(2)}\widetilde{Z}(2)
\bm{1}_{\{L^*=2\}}.
\end{align}
In addition, since
\begin{align*}
\frac{\widetilde{\bxi}_{\delta}(t,m)}{\left|a(1)\widetilde{\xi}^{(1)}_{\delta}(t,m)-\widetilde{\xi}^{(2)}_{\delta}(t,m)\right|}
\convas \frac{(1, a(m))}{\bigl|a(1)-a(m)\bigr|},
\end{align*}
 we have
\begin{align}
&\left(\frac{\left|a(1)\widetilde{\xi}^{(1)}_{\delta}(t,L^*)-\widetilde{\xi}^{(2)}_{\delta}(t,L^*)\right|}{e^{t\lambda_2 }},\,
 \frac{\widetilde{\bxi}_{\delta}(t,L^*)}{\left|a(1)\widetilde{\xi}^{(1)}_{\delta}(t,L^*)-\widetilde{\xi}^{(2)}_{\delta}(t,L^*)\right|}\right)\nonumber\\
 &= \sum_{m=1}^K \left(\frac{\left|a(1)\widetilde{\xi}^{(1)}_{\delta}(t,m)-\widetilde{\xi}^{(2)}_{\delta}(t,m)\right|}{e^{t\lambda_2 }},\,
 \frac{\widetilde{\bxi}_{\delta}(t,m)}{\left|a(1)\widetilde{\xi}^{(1)}_{\delta}(t,m)-\widetilde{\xi}^{(2)}_{\delta}(t,m)\right|}\right)\ind_{\{L^*=m\}}\nonumber
 \\
&\quad\convas \left(
\bigl|a(1)-a(2)\bigr|v^{(1)}{(2)}\widetilde{Z}(2)\bm{1}_{\{L^*=2\}},\,
\sum_{m=1}^K \frac{(1, a(m))}{\bigl|a(1)-a(m)\bigr|}\bm{1}_{\{L^*=m\}}
\right).\label{eq:MBI}
\end{align}
Equation \eqref{eq:MBI} further gives that in 
$\mathbb{M}\left((0,\infty)\times\left((0,\infty)\times\aleph_{[\mathcal{L}_{(1)}]}\right)\right)$,
\begin{align*}
&t\PP\left[
\left(\frac{e^{T^*}}{t^{1/c^*}},\right.\right.\\
&\left.\left.\sum_{m=1}^K\left(\frac{\left|a(1)\widetilde{\xi}^{(1)}_{\delta}(T^*,m)-\widetilde{\xi}^{(2)}_{\delta}(T^*,m)\right|}{e^{\lambda_2T^*}},
\frac{\widetilde{\bxi}_{\delta}(T^*,m)}{\left|a(1)\widetilde{\xi}^{(1)}_{\delta}(T^*,m)-\widetilde{\xi}^{(2)}_{\delta}(T^*,m)\right|}\right)
\bm{1}_{\{L^*=m\}}\right)
\in\cdot\right]\\
&\longrightarrow \PP(L^*= 2)\, \nu_{c^*/\lambda_2}(\cdot)\times\PP\left(\bigl|a(1)-a(2)\bigr|v^{(1)}{(2)}\widetilde{Z}(2)\in\cdot\right)\times \epsilon_{\bm{\theta}_{(2)}}(\cdot).
\end{align*}
Hence, as long as we check the moment condition that for $q=1,2,\ldots$ and $\delta>0$, 
\begin{align}
\label{eq:moment}
\sup_{t\ge 0} e^{-qt\lambda_2}\EE\left[\left|a(1)\widetilde{\xi}^{(1)}_{\delta}(t,2)-\widetilde{\xi}^{(2)}_{\delta}(t,2)\right|^q\right]<\infty,
\end{align}
then applying the generalized Breiman's theorem (cf. Theorem~\ref{th:extendBrei}) completes the proof of \eqref{eq:conv_hrv1}.
To verify \eqref{eq:moment}, we notice that
since 
$$
\left|a(1)\widetilde{\xi}^{(1)}_{\delta}(t,2)-\widetilde{\xi}^{(2)}_{\delta}(t,2)\right|\le a(1)\widetilde{\xi}^{(1)}_{\delta}(t,2)+\widetilde{\xi}^{(2)}_{\delta}(t,2),
$$
it suffices to show 
\[
\sup_{t\ge 0} e^{-qt\lambda_2}\EE\left[\left(a(1)\widetilde{\xi}^{(1)}_{\delta}(t,2)+\widetilde{\xi}^{(2)}_{\delta}(t,2)\right)^q\right]<\infty,
\qquad \lambda_2\ge\log 2,
\]
which is true by Proposition 2 in \cite{wang:resnick:2022}.

It remains to prove the claim in \eqref{eq:hrv_claim}.
Since $\left|a(m)\widetilde{\xi}^{(1)}_\delta(t,m)-\widetilde{\xi}^{(2)}_\delta(t,m)\right|\le a(m)\widetilde{\xi}^{(1)}_\delta(t,m)+\widetilde{\xi}^{(2)}_\delta(t,m)$, then for $m\notin \{1, 2\}$, we have
\[
e^{-t\lambda_2}\left|a(m)\widetilde{\xi}^{(1)}_\delta(t,m)-\widetilde{\xi}^{(2)}_\delta(t,m)\right|\convas 0,
\qquad \text{as }t\to\infty.
\]
Hence, we only need to consider $m=1, 2$.
Consider the case when $m=1$, and define $\{\bm{\psi}_i(t, 1):t\ge 0\}_{i\ge 1}$ as a sequence of iid two-type branching processes (without immigration) with group label $1$, whose branching structure is specified through $A_{1}$. Also, assume that $\bm{\psi}_i(0, 1)$ is a 2-dimensional random vector with distribution $p_0(\br, 1)$ (cf. \eqref{eq:def_p0}).
Let $0<\tau_1< \tau_2<\ldots$ be arrival times of points in a homogeneous Poisson process with rate $\delta>0$, which is independent from $\{\bm{\psi}_i(t, 1):t\ge 0\}_{i\ge 1}$.
Then by the distributional construction of the MBI process in \cite{rabehasaina:2021},
we write 
\begin{align}
\label{eq:MBI_dist}
\widetilde{\bxi}_\delta(t,1) 
= \sum_{i=1}^\infty \bm{\psi}_i(t-\tau_i, 1)\ind_{\{t\ge \tau_i\}}.
\end{align}

Next, we use a Borel Cantelli argument to show
\begin{align}
\label{eq:pfL1}
e^{-t\lambda_2}\left|a(1)\widetilde{\xi}^{(1)}_\delta(t,1)-\widetilde{\xi}^{(2)}_\delta(t,1)\right|\convas 0,
\end{align}
as $t\to\infty$, i.e. we will show that
\begin{align}
\label{eq:pfBC}
\sum_{n=1}^\infty\PP\left(e^{-n\lambda_2}\left|a(1)\widetilde{\xi}^{(1)}_\delta(n,1)-\widetilde{\xi}^{(2)}_\delta(n,1)\right|>\epsilon\right)<\infty.
\end{align}
By Markov's inequality, we have for $n\ge 1$, 
\begin{align}
\PP&\left(e^{-n\lambda_2}\left|a(1)\widetilde{\xi}^{(1)}_\delta(n,1)-\widetilde{\xi}^{(2)}_\delta(n,1)\right|>\epsilon\right)\nonumber\\
&\le \epsilon^{-2}\EE\left[e^{-2n\lambda_2}\left|a(1)\widetilde{\xi}^{(1)}_\delta(n,1)-\widetilde{\xi}^{(2)}_\delta(n,1)\right|^2\right]\nonumber\\  
&\le  \epsilon^{-2} e^{-n(2\lambda_2-\lambda_1)}
\sup_{t\ge 0} \EE\left[\left(e^{-t\lambda_1/2}\left(a(1)\widetilde{\xi}^{(1)}_\delta(t,1)-\widetilde{\xi}^{(2)}_\delta(t,1\right)\right)^2\right].
\label{eq:bound_mbi}
\end{align}
Note that the vector $[a(1), -1]^T$ is the right eigenvector associated with the smaller eigenvalue of $A_{1}$. 
Recall the discussion in Section~\ref{subsec:MBI} that
 $\lambda'_1$ is the smaller eigenvalue of $A_{1}$, then 
the condition $\lambda_1\ge \log 2$ guarantees that $\lambda_1>2\lambda'_1$.
Hence, we apply Theorem~V.7.1(i) in \cite{athreya:ney:1972} to conclude that for $i\ge 1$,
\[
\sup_{t\ge 0} \EE\left[\left(e^{-t\lambda_1/2}\left(a(1)\psi^{(1)}_i(t,1)-\psi^{(2)}_i(t,1\right)\right)^2\right]<\infty.
\]
We then conclude from \eqref{eq:MBI_dist} that
\begin{align*}
&\left(e^{-t\lambda_1/2}\left(a(1)\widetilde{\xi}^{(1)}_\delta(t,1)-\widetilde{\xi}^{(2)}_\delta(t,1\right)\right)^2\\
&= \left(e^{-t\lambda_1/2}\sum_{i=1}^\infty\left(a(1)\psi_i^{(1)}(t-\tau_i,1)-\psi_i^{(2)}(t-\tau_i,1\right)\ind_{\{t\ge \tau_i\}}\right)^2\\
&\le \left(\sum_{i=1}^\infty e^{-\tau_i\lambda_1/2}\right)\\
&\quad\times\left(\sum_{i=1}^\infty e^{-\tau_i\lambda_1/2}\left(e^{-(t-\tau_i)\lambda_1/2}\left(a(1)\psi_i^{(1)}(t-\tau_i,1)-\psi_i^{(2)}(t-\tau_i,1)\right)\ind_{\{t\ge \tau_i\}}\right)^2\right).
\end{align*}
Therefore,
\begin{align*}
\EE&\left(e^{-t\lambda_1/2}\left(a(1)\widetilde{\xi}^{(1)}_\delta(t,1)-\widetilde{\xi}^{(2)}_\delta(t,1\right)\right)^2\\
&\le \sup_{t\ge 0} \EE\left[\left(e^{-t\lambda_1/2}\left(a(1)\psi^{(1)}_i(t,1)-\psi^{(2)}_i(t,1\right)\right)^2\right]\\
&\quad \times \EE\left[\sum_{i=1}^\infty e^{-\tau_i\lambda_1/2}\left(\sum_{i=1}^\infty e^{-\tau_i\lambda_1/2}\right)\right]<\infty.
\end{align*}
Also, since we assume $\lambda_2>\lambda_1/2$, then 
\eqref{eq:pfBC} follows from \eqref{eq:bound_mbi},
leading to \eqref{eq:pfL1}.
Then the claim in \eqref{eq:hrv_claim} follows by realizing that 
\begin{align*}
e^{-t\lambda_2}\left(a(1)\widetilde{\xi}^{(1)}_{\delta}(t,2)-
\widetilde{\xi}^{(2)}_{\delta}(t,2)\right)
&\convas \left(a(1)v^{(1)}(2)-v^{(2)}(2)\right) \widetilde{Z}(2)\\
&= \left(a(1)-a(2)\right) v^{(1)}(2) \widetilde{Z}(2).
\end{align*}
\end{proof}

\section{Concluding Remarks}\label{sec:conclusion}

In this paper, we propose a preferential attachment model with heterogeneous reciprocity levels and study its theoretical properties. Using the MBI embedding technique, we find that the distribution of large in- and out-degrees is jointly regularly varying, concentrating along a specific ray. Additionally, after deleting large in- and out-degree pairs close to the ray, we further have hidden regular variation with limit measure concentrating around another ray. 

We now outline some open problems related to this model, which will be left as future research.
\subparagraph*{Estimation.} 
The fitting of the proposed model remains open at this point. For a given dataset, we need to first decide how many communication groups ($K$) should be assumed. One may consult classical clustering methods such as $k$-means and $k$-nearest neighbors to determine a proper $K$ beforehand, but their applications to the network framework needs rigorous justification. 
Once $K$ is chosen, we can also derive proper tools (similar to those in \cite{das:resnick:2017}) to detect the existence of hidden regular variation under the network setup. 

\subparagraph*{Varying group labels.} 
So far we have assumed that the communication group of a user (node) is determined upon its creation and remains unchanged afterwards. For real-world applications, however, this may be a naive assumption, since users' interaction patterns can change over time. A possible extension is to assume that the group label for each node follows another Markov chain with finite state space. One may consider applying variational Bayesian inference methods (cf. \cite{matias:miele:2017}) to these extended models. 

\subparagraph*{Covariate-dependent reciprocity probabilities.}
The model studied in this paper assumes a deterministic matrix, $\boldsymbol{\rho}$, to characterize the reciprocation between different communication groups. 
If more node-specific information (covariates) is available, e.g. various demographic, socio-economic, and behavioral factors, then we can generalize the definition of the matrix $\boldsymbol{\rho}$ to be covariate-dependent. This may possibly lead to heterogeneous extremal behaviors. 

\section{Proofs of Results in Section~\ref{sec:edge}}\label{sec:pf}

\subsection{Proof of Lemma~\ref{lem:xy}}\label{subsec:pf_lem1}

Define a function $\boldf: \mathcal{Z} \mapsto \mathcal{Z}$ such that
for $i =1,\ldots, K$,
\begin{align*}
f_i(\bz) = \alpha \frac{z_i+\delta \pi_i}{\sum_{j=1}^K z_j +\delta} 
+ \gamma \rho_{\bullet i} \frac{z_{K+i}+\delta \pi_i}{\sum_{j=1}^K z_{K+j} +\delta} 
+\gamma \pi_i + \alpha \pi_i \sum_{r=1}^K \rho_{r, i} \frac{z_i+\delta \pi_i}{\sum_{j=1}^K z_j +\delta},
\end{align*}
and for $i =K+1,\ldots, 2K$,
\begin{align*}
f_i(\bz) =& \gamma \frac{z_i+\delta \pi_{i-K}}{\sum_{j=K+1}^{2K} z_j +\delta} 
+ \alpha \rho_{(i-K)\,\bullet} \frac{z_{i-K}+\delta \pi_{i-K}}{\sum_{j=1}^K z_{j} +\delta} 
+\alpha \pi_{i-K} \\
&+ \gamma\pi_{i-K} \sum_{r=1}^K \rho_{i, r} \frac{z_i+\delta \pi_{i-K}}{\sum_{j=K+1}^{2K} z_j +\delta}.
\end{align*}
Let $J(\bz')$ be the Jacobian matrix evaluated at some $\bz'$ between $\bz_1$ and $\bz_2$.
Use $\|\bz\|_1$ to denote the $L_1$-norm for some $\bz\in \mathcal{Z}$, and set $\|\cdot\|_{(1,\infty)}$ to be the $L_{1,\infty}$ norm of a matrix.
Then by the mean value theorem, we see that for 
$\bz_1,\bz_2\in \mathcal{Z}$, 
\begin{align}
\|\boldf(\bz_1)-\boldf(\bz_2)\|_1
&\le \sup_{\bz'\in \mathcal{Z}}\|J(\bz')\|_{(1,\infty)}\|\bz_1-\bz_2\|_\infty\nonumber\\
&\le \sup_{\bz'\in \mathcal{Z}}\|J(\bz')\|_{(1,\infty)}\|\bz_1-\bz_2\|_1.
\label{eq:contraction}
\end{align}
If we can show 
$\sup_{\bz'\in \mathcal{Z}}\|J(\bz')\|_{(1,\infty)}<1$, then by the contraction mapping theorem (cf. Theorem~1.2.2 in \cite{Kirk:2001})
we are able to show the existence of a unique solution to $\boldf(\bz)=\bz$.

To find an upper bound for $\sup_{\bz\in \mathcal{Z}}\|J(\bz)\|_{(1,\infty)}$, we now give upper bounds for the absolute value of each entry in $J(\bz)$.
Let $J_{i,j}(\bz)$ be the $(i,j)$-th entry in $J(\bz)$, and we have
for $1\le i,j \le K$,
\begin{align}\label{eq:Jbound1}
\left|J_{i,j}(\bz)\right| &= \left|\frac{\partial f_i}{\partial z_j}(\bz) \right|
\le \alpha(\pi_i \rho_{j, i}+\ind_{\{i=j\}})\frac{1}{\sum_{m=1}^K z_m +\delta}
\le \frac{\alpha(\pi_i \rho_{j, i}+\ind_{\{i=j\}})}{1 +\delta}.
\end{align}
For $1\le i\le K$ and $K+1\le j\le 2K$,
\begin{align}\label{eq:Jbound2}
\left|J_{i,j}(\bz)\right| &= \left|\frac{\partial f_i}{\partial z_j}(\bz) \right|
= \gamma \rho_{\bullet i} \left|-\frac{z_{K+i}+\delta\pi_i}{\left(\sum_{m=K+1}^{2K}z_m +\delta\right)^2}+\frac{\ind_{\{j=K+i\}}}{\sum_{m=K+1}^{2K}z_m +\delta}\right|
\le \frac{\gamma \rho_{\bullet i}}{1+\delta}.
\end{align}
For $K+1\le i\le 2K$ and $1\le j\le K$, we see that
\begin{align}\label{eq:Jbound3}
\left|J_{i,j}(\bz)\right| &= \left|\frac{\partial f_i}{\partial z_j}(\bz) \right|
= \alpha \rho_{i\bullet} \left|-\frac{z_{i-K}+\delta\pi_{i-K}}{\left(\sum_{m=1}^{K}z_m +\delta\right)^2}+\frac{\ind_{\{j=i-K\}}}{\sum_{m=1}^{K}z_m +\delta}\right|
\le \frac{\alpha \rho_{i\bullet}}{1+\delta}.
\end{align}
Also, for $K+1\le i,j \le 2K$, we have
\begin{align}\label{eq:Jbound4}
\left|J_{i,j}(\bz)\right| &= \left|\frac{\partial f_i}{\partial z_j}(\bz) \right|
\le \gamma(\pi_{i-K} \rho_{i, j}+\ind_{\{i=j\}})\frac{1}{\sum_{m=K+1}^{2K} z_m +\delta}
\le \frac{\gamma(\pi_{i-K} \rho_{i, j}+\ind_{\{i=j\}})}{1 +\delta}.
\end{align}
Let $J^*_{i,j}$ be the $(i,j)$-th entry of the $J^*$ matrix.
Equations \eqref{eq:Jbound1}--\eqref{eq:Jbound4} imply that 
\begin{align*}
\sup_{\bz'\in \mathcal{Z}}\|J(\bz')\|_{(1,\infty)}&
\le \frac{1}{1+\delta}\bigvee_{j=1}^K \left(\sum_{i=1}^K|J^*_{i,j}|\right) = \frac{1}{1+\delta}\|J^*\|_1.
\end{align*}
Therefore, as long as
\[
\delta> \|J^*\|_1-1,
\]
Equation~\eqref{eq:contraction} gives 
\[
\|\boldf(\bz_1)-\boldf(\bz_2)\|_1
\le \frac{1}{1+\delta}\|J^*\|_1 \|\bz_1-\bz_2\|_1 < \|\bz_1-\bz_2\|_1,
\]
indicating that $\boldf(\bz)=\bz$ has a unique solution in $\mathcal{Z}$.

\subsection{Proof of Theorem~\ref{thm:En}}\label{subsec:pf_thmEn}

We first show the concentration of $|E(n)|/n$ around its expectation, $\EE[|E(n)|]/n$.
Suppose we have a graph $G(n) = (V(n),E(n))$, constructed from the PA model with heterogeneous reciprocity levels. We claim that for $C_0>2$,
\begin{align}
\label{eq:concenE}
\PP\left(\bigl||E(n)|-\EE[|E(n)|]\bigr|\ge C_0\sqrt{n\log n}\right) \le 2n^{-2}.
\end{align}
We prove \eqref{eq:concenE} using the Azuma-Hoeffding inequality.
For $k\le n$, define $M_k := \EE\bigl(|E(n)|\big\vert G(k)\bigr)$, then $\EE(M_{k+1}|G(k))=M_k$.
So we need to consider the martingale difference:
\[
M_{k+1}-M_k=\EE\bigl(|E(n)|\big\vert G(k+1)\bigr)-\EE\bigl(|E(n)|\big\vert G(k)\bigr).
\]
Define another graph $G'(n)\equiv (V'(n), E'(n))$ such that $G'(s) = G(s)$ for $s\le k$, while $G'(s)$ evolves independently of $\{G(s)\}_{s\ge k+1}$ for $s\ge k+1$, according to the same evolution rules as in Section~\ref{sec:model}. Then we have
\begin{align*}
M_{k+1}-M_k &= \EE\bigl(|E(n)|\big\vert G(k+1)\bigr)-\EE\bigl(|E'(n)|\big\vert G(k+1)\bigr)\\
&= \EE\left[\EE\bigl(|E(n)|-|E'(n)|\big\vert G(k+1), G'(k+1)\bigr)\middle\vert G(k+1)\right]\\
&= \EE\left[|E(k+1)|-|E'(k+1)|\middle\vert G(k+1)\right],
\end{align*}
which gives
\[
|M_{k+1}-M_k|\le 1. 
\]
Since $|E(n)|-\EE[|E(n)|] = \sum_{k=0}^{n-1}(M_{k+1}-M_k)$, then applying the Azuma-Hoeffding inequality gives that for $\epsilon>0$,
\[
\PP\left(\bigl||E(n)|-\EE[|E(n)|]\bigr|\ge \epsilon\right) \le 2 e^{-\frac{\epsilon^2}{2n}},
\]
and \eqref{eq:concenE} follows by setting $\epsilon= C_0\sqrt{n\log n}$ for $C_0>2$.

Next, we study the convergence of $\EE[|E(n)|]/n$.
Since $x_m$ satisfies \eqref{eq:xm},
then we see from \eqref{eq:Zin} that
\begin{align*}
\EE^{\mathcal{G}_n}(\Zin_m(n+1)) 
&= \Zin_m(n) + \left(\EE^{\mathcal{G}_n}\left(|\Ein_m(n+1)|-|\Ein_m(n)|\right)-x_m\right)\\
&= \Zin_m(n) + \alpha\left(\frac{|\Ein_m(n)|+\delta|V_m(n)|}{|E(n)|+\delta |V(n)|}-\frac{x_m+\delta\pi_m}{\sum_r x_r +\delta}\right)\\
&+ \gamma\rho_{\bullet m}\left(\frac{|\Eout_m(n)|+\delta|V_m(n)|}{|E(n)|+\delta |V(n)|}- \frac{y_m+\delta\pi_m}{\sum_r y_r +\delta}\right)\\
&+ \alpha\pi_m \sum_r \rho_{r, m} \left(\frac{|\Ein_r(n)|+\delta|V_r(n)|}{|E(n)|+\delta |V(n)|}-\frac{x_r+\delta\pi_r}{\sum_r x_r +\delta}\right),
\end{align*}
which implies
\begin{align}
\left|\EE(\Zin_m(n+1))\right| &\le \left|\EE(\Zin_m(n))\right| + \alpha\left|\EE\left(\frac{|\Ein_m(n)|+\delta|V_m(n)|}{|E(n)|+\delta |V(n)|}-\frac{x_m+\delta\pi_m}{\sum_r x_r +\delta}\right)\right|\nonumber\\
&+ \gamma\rho_{\bullet m}\left|\EE\left(\frac{|\Eout_m(n)|+\delta|V_m(n)|}{|E(n)|+\delta |V(n)|}- \frac{y_m+\delta\pi_m}{\sum_r y_r +\delta}\right)\right|\nonumber\\
&+ \alpha\pi_m \sum_r \rho_{r, m} \left|\EE\left(\frac{|\Ein_r(n)|+\delta|V_r(n)|}{|E(n)|+\delta |V(n)|}-\frac{x_r+\delta\pi_r}{\sum_r x_r +\delta}\right)\right|.
\label{eq:EZin}
\end{align}
Also, we have that for $m=1,\ldots, K$,
\begin{align*}
&\left|\EE\left(\frac{|\Ein_m(n)|+\delta|V_m(n)|}{|E(n)|+\delta |V(n)|}-\frac{x_m+\delta\pi_m}{\sum_r x_r +\delta}\right)\right|\\
&= \left|\EE\left(\frac{\Zin_m(n)}{|E(n)|+\delta |V(n)|}-\frac{x_m+\delta\pi_m}{\sum_r x_r+\delta}\frac{\Delta(n)}{|E(n)|+\delta|V(n)|}\right)\right|\\
&\le \frac{|\EE(\Zin_m(n))|}{(1+\delta)n} 
+ \frac{x_m+\delta\pi_m}{\sum_r x_r+\delta}\frac{|\EE(\Delta(n))|}{(1+\delta)n}. 
\end{align*}
Similarly, we see that 
\begin{align*}
\left|\EE\left(\frac{|\Eout_m(n)|+\delta|V_m(n)|}{|E(n)|+\delta |V(n)|}-\frac{y_m+\delta\pi_m}{\sum_r y_r +\delta}\right)\right|
\le \frac{|\EE(\Zout_m(n))|}{(1+\delta)n} 
+ \frac{y_m+\delta\pi_m}{\sum_r y_r+\delta}\frac{|\EE(\Delta(n))|}{(1+\delta)n}. 
\end{align*}
Therefore, it follows from \eqref{eq:EZin} that
\begin{align*}
\left|\EE(\Zin_m(n+1))\right|&\le 
\left|\EE(\Zin_m(n))\right|\left(1+\frac{\alpha}{(1+\delta)n}\right) + 
\left|\EE(\Zout_m(n))\right|\frac{\gamma\rho_{\bullet m}}{(1+\delta)n}+\\
&+ \frac{\alpha\pi_m}{(1+\delta)n}\sum_r\rho_{r, m}\left|\EE(\Zin_r(n))\right|\\
&+ \frac{\left|\EE(\Delta(n))\right|}{(1+\delta)n} \left(\alpha\frac{x_m+\delta\pi_m}{\sum_r x_r +\delta}+\gamma\rho_{\bullet m}\frac{y_m+\delta\pi_m}{\sum_r y_r+\delta}+ \alpha\pi_m\sum_r\rho_{r, m}\frac{x_r+\delta\pi_r}{\sum_r x_r +\delta}\right).
\end{align*}
Summing over $m$ gives
\begin{align}
\sum_m \left|\EE(\Zin_m(n+1))\right|\le& 
\sum_m \left|\EE(\Zin_m(n))\right|\left(1+\frac{\alpha}{(1+\delta)n}\right) + 
\sum_m\left|\EE(\Zout_m(n))\right|\frac{\gamma\bigvee_m\rho_{\bullet m}}{(1+\delta)n}+\nonumber\\
&+ \frac{\alpha\bigvee_m \rho_{m\bullet}}{(1+\delta)n}\sum_m\left|\EE(\Zin_m(n))\right|
+ \frac{\left|\EE(\Delta(n))\right|}{(1+\delta)n} \left(\alpha+C_\delta\right)\nonumber\\
=& \sum_m \left|\EE(\Zin_m(n))\right|\left(1+\frac{\alpha}{(1+\delta)n}\left(1+\bigvee_m\rho_{m\bullet}\right)\right) \nonumber\\
&+ \sum_m\left|\EE(\Zout_m(n))\right|\frac{\gamma\bigvee_m\rho_{\bullet m}}{(1+\delta)n}+ \frac{\left|\EE(\Delta(n))\right|}{(1+\delta)n} \left(\alpha+C_\delta\right).
\label{eq:sum_Zin}
\end{align}

Following a similar reasoning, we have
\begin{align}
\sum_m \left|\EE(\Zout_m(n+1))\right|\le& 
\sum_m \left|\EE(\Zout_m(n))\right|\left(1+\frac{\gamma}{(1+\delta)n}\left(1+\bigvee_m\rho_{\bullet m}\right)\right) \nonumber\\
&+ \sum_m\left|\EE(\Zin_m(n))\right|\frac{\alpha\bigvee_m\rho_{m\bullet }}{(1+\delta)n}+ \frac{\left|\EE(\Delta(n))\right|}{(1+\delta)n} \left(\gamma+C_\delta\right),
\label{eq:sum_Zout}
\end{align}
and 
\begin{align}
\sum_m& \left|\EE(\Delta(n+1))\right|\\
\le& 
\left|\EE(\Delta(n))\right|\left(1+
\frac{\alpha}{(1+\delta)n}\sum_m\rho_{m\bullet}\frac{x_m+\delta\pi_m}{\sum_r x_r+\delta}+
\frac{\gamma}{(1+\delta)n}\sum_m\rho_{\bullet m}\frac{y_m+\delta\pi_m}{\sum_r y_r+\delta}\right) 
\nonumber\\
&+ \sum_m\left|\EE(\Zin_m(n))\right|\frac{\alpha\bigvee_m\rho_{m\bullet }}{(1+\delta)n}
+ \sum_m\left|\EE(\Zout_m(n))\right|\frac{\gamma\bigvee_m\rho_{\bullet m}}{(1+\delta)n}.
\label{eq:sumZ}
\end{align}
Combining Equations~\eqref{eq:sum_Zin}, \eqref{eq:sum_Zout} and \eqref{eq:sumZ} gives:
\begin{align*}
\begin{bmatrix}
\sum_m|\EE(\Zin_m(n+1))|\\
\sum_m|\EE(\Zout_m(n+1))|\\
|\EE(\Delta(n+1))|
\end{bmatrix}
\le \left(\bI+\frac{1}{n}\bH\right)
\begin{bmatrix}
\sum_m|\EE(\Zin_m(n))|\\
\sum_m|\EE(\Zout_m(n))|\\
|\EE(\Delta(n))|
\end{bmatrix}.
\end{align*}
Then by the Perron-Frobenius theorem, we see that 
\[
\frac{1}{n}\sum_m|\EE(\Zin_m(n))|\to 0, \quad 
\frac{1}{n}\sum_m|\EE(\Zout_m(n))|\to 0, \quad
\text{and}\quad
\frac{1}{n}|\EE(\Delta(n))|\to 0, 
\]
if the largest eigenvalue of $\bH$, $\lambda_H$, is less than 1,
thus completing the proof of the theorem.

\appendix
\section{Generalized Breiman's Theorem}

\begin{Theorem}\label{th:extendBrei}
Suppose $\{\bxi(t): t\geq 0\}$ is an $\RR_+^p$-valued stochastic
process 
for some $p\geq 1$.  Let $X$ be a positive random variable with
regularly varying distribution satisfying for some scaling function $b(t)$,
$$\lim_{t\to\infty} t\PP( X/b(t) >x) =x^{-c} =:\nu_c \bigl((x,\infty)\bigr), \quad x>0, c>0.$$
Further suppose
\begin{enumerate}
\item For some finite and positive random vector $\bxi_\infty$,
  $$\lim_{ t \to \infty} {\bxi(t)} =\bxi_\infty \quad (\text{almost surely});$$
  \item The random variable $X$ and the process $\bxi (\cdot)$ are independent.
  \end{enumerate}
  Then:

  (i) In $\mathbb{M}(\RR_+^p \times (\RR_+\setminus \{0\}))$,
  \begin{equation}\label{e:beforeMult}
    t\PP\Bigl[ \Bigl({\bxi(X)}, \frac{X}{b(t)}\Bigr) \in \cdot \,\Bigr]
    \longrightarrow \PP(\bxi_\infty \in \cdot \,) \times
    \nu_c (\cdot)=:\eta(\cdot).\end{equation}
  If $\bxi_\infty$ is of the form $\bxi_\infty =:L\bv$ where $L>0$
  almost surely and $\bv \in (0,\infty)^p$, then $\eta (\cdot)$
  concentrates on the subcone $\mathcal{L}\times (\RR_+\setminus
  \{0\})$ where $\mathcal{L}=\{\theta \bv : \theta>0\}$.

  (ii) If additionally, for some $c'>c$ we have the condition
  \begin{equation}\label{e:extraCond}
  \kappa:=  \sup_{t\geq 0}   \EE\left[ \Bigl(  { \|\bxi(t) \|}  \Bigr)^{c'}\right]
    <\infty,
  \end{equation}
  for some $L_p$ norm $\|\cdot\|$,  then the product 
  of components in
  \eqref{e:beforeMult}, $\bxi(X)X $, has a regularly varying distribution with
  scaling function $b(t)$ and 
    in $\mathbb{M}(\RR_+^p \setminus
\{\origin\})$, 
  \begin{equation}\label{e:prodOK}
    t\PP\Bigl[\frac{X\bxi(X)}{b(t)} \in \cdot \,\Bigr]
    \longrightarrow \left(\PP(\bxi_\infty \in \cdot \,) \times \nu_c\right) \circ h^{-1},
  \end{equation}
where $h(\by,x)=x\by$.
\end{Theorem}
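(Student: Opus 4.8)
The plan is to deduce part~(i) from the independence of $X$ and $\bxi(\cdot)$ together with the almost sure convergence $\bxi(t)\to\bxi_\infty$, and then to obtain part~(ii) by pushing the joint limit \eqref{e:beforeMult} through the product map $h(\by,x)=x\by$, using the moment bound \eqref{e:extraCond} precisely to control the region where $X$ is of small order relative to $b(t)$ while $\bxi(X)$ is large. I expect that last region to be the main obstacle.

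For part~(i) I would fix a test function $f\in\mathcal{C}(\RR_+^p\times(\RR_+\setminus\{0\}))$, say $0\le f\le M$ with $f(\by,x)=0$ for $x\le\epsilon$, and first record that each $\mu_t:=t\PP[(\bxi(X),X/b(t))\in\cdot]$ lies in $\mathbb{M}$, since any set bounded away from $\RR_+^p\times\{0\}$ is contained in $\RR_+^p\times[\epsilon,\infty)$, whose $\mu_t$-mass is $t\PP(X\ge\epsilon b(t))\to\epsilon^{-c}$. Using independence and the substitution $s=b(t)u$,
\[
t\,\EE\bigl[f\bigl(\bxi(X),X/b(t)\bigr)\bigr]=\int_0^\infty g_t(u)\,\nu_t(\mathrm{d}u),\qquad g_t(u):=\EE\bigl[f\bigl(\bxi(b(t)u),u\bigr)\bigr],\quad \nu_t:=t\PP(X/b(t)\in\cdot).
\]
Since $b(t)\to\infty$, whenever $u_t\to u>0$ we have $\bxi(b(t)u_t)\to\bxi_\infty$ a.s., so bounded convergence gives $g_t(u_t)\to g(u):=\EE[f(\bxi_\infty,u)]$; that is, $g_t\to g$ \emph{continuously} on $(0,\infty)$, with $g$ continuous and $g_t,g$ uniformly bounded by $M$ and supported in $[\epsilon,\infty)$. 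Pairing this continuous convergence with $\nu_t\to\nu_c$ in $\mathbb{M}(\RR_+\setminus\{0\})$ — restricting and normalizing both to $[\epsilon/2,\infty)$, where the $\nu_t$ become weakly convergent finite measures, and then applying a Skorokhod coupling and bounded convergence — yields $\int g_t\,\mathrm{d}\nu_t\to\int g\,\mathrm{d}\nu_c$, which is \eqref{e:beforeMult}. The concentration statement for $\bxi_\infty=L\bv$ with $L>0$ a.s.\ is immediate from the product form $\eta=\PP(\bxi_\infty\in\cdot)\times\nu_c$.

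For part~(ii) I would write $X\bxi(X)/b(t)=h(\bxi(X),X/b(t))$ with $h(\by,x)=x\by$, so the claim is $\mu_t\circ h^{-1}\to\eta\circ h^{-1}$ in $\mathbb{M}(\RR_+^p\setminus\{\origin\})$. The difficulty is that $h$ is \emph{not} $\mathbb{M}$-continuous: the $h$-preimage of a set bounded away from $\{\origin\}$ contains points $(\by,x)$ with $x\to0$, $\|\by\|\to\infty$, hence is not bounded away from $\RR_+^p\times\{0\}$. I would circumvent this by truncation. Fix $f\in\mathcal{C}(\RR_+^p\setminus\{\origin\})$ with $0\le f\le M$, $f\equiv0$ on $\{\|\bx\|\le\epsilon\}$, and a continuous cutoff $\chi_\delta$ with $\ind_{\{x\ge2\delta\}}\le\chi_\delta\le\ind_{\{x>\delta\}}$. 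Then $(f\circ h)\chi_\delta$ is a legitimate test function for $\mu_t\to\eta$, so $\mu_t((f\circ h)\chi_\delta)\to\eta((f\circ h)\chi_\delta)$ by part~(i), while $\eta((f\circ h)\chi_\delta)\uparrow\eta(f\circ h)$ as $\delta\downarrow0$ by monotone convergence; the limit is finite since $\eta(f\circ h)\le M\epsilon^{-c}\EE\|\bxi_\infty\|^{c}<\infty$, using $\EE\|\bxi_\infty\|^{c'}\le\liminf_t\EE\|\bxi(t)\|^{c'}\le\kappa$ (Fatou) and $c<c'$.

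The crux is then the uniform estimate of the discarded mass
\[
\mu_t\bigl(\{(\by,x):x<2\delta,\ \|x\by\|>\epsilon\}\bigr)=t\,\PP\bigl(X\|\bxi(X)\|>\epsilon b(t),\ X<2\delta b(t)\bigr).
\]
Conditioning on $X$, applying Markov's inequality at order $c'$ together with \eqref{e:extraCond}, and invoking Karamata's theorem for the truncated moment $t\,\EE[X^{c'}\ind_{\{X\le2\delta b(t)\}}]/b(t)^{c'}$ (asymptotic to $\tfrac{c}{c'-c}(2\delta)^{c'}\,t\PP(X>2\delta b(t))\to\tfrac{c}{c'-c}(2\delta)^{c'-c}$), one obtains $\limsup_t\mu_t(\{x<2\delta,\,\|x\by\|>\epsilon\})\le C_\epsilon\,\delta^{c'-c}$. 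Consequently $\limsup_t|\mu_t(f\circ h)-\eta(f\circ h)|\le MC_\epsilon\delta^{c'-c}+|\eta(f\circ h)-\eta((f\circ h)\chi_\delta)|$ for every $\delta>0$, and letting $\delta\downarrow0$ proves \eqref{e:prodOK}. The whole argument hinges on having a moment of strictly higher order $c'>c$ in \eqref{e:extraCond}, so that the contribution of the ``bad corner'' $\{x\approx0,\ \|\by\|\approx\infty\}$ is $O(\delta^{c'-c})\to0$; with only $c'=c$ this truncation would not close, which is why I regard it as the main obstacle. Part~(i), by contrast, is routine once one identifies \emph{continuous} (rather than merely pointwise) convergence of $g_t$ to $g$ as the right input to combine with $\nu_t\to\nu_c$.
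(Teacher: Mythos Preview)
The paper does not actually prove this theorem. Theorem~\ref{th:extendBrei} is placed in the appendix purely as a \emph{review} of \cite[Theorem~3]{wang:resnick:2022}; the paper cites it, states it, and uses it as a black box in the proofs of Theorems~\ref{thm:mrv} and~\ref{thm:hrv}, but offers no argument for it. So there is nothing in the present paper to compare your proposal against.

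That said, your outline is a sound and standard proof of the result itself. Part~(i) is handled correctly: conditioning on $X$ via independence, identifying the continuous convergence $g_t(u_t)\to g(u)$ on $(0,\infty)$, and pairing it with $\nu_t\to\nu_c$ in $\mathbb{M}(\RR_+\setminus\{0\})$ after restricting to $[\epsilon/2,\infty)$ is exactly the right mechanism. For part~(ii), your diagnosis of the obstruction is accurate --- $h(\by,x)=x\by$ fails to pull back sets bounded away from $\{\origin\}$ to sets bounded away from $\RR_+^p\times\{0\}$ --- and your truncation-plus-Karamata argument is the canonical remedy: Markov at order $c'$ together with $t\,\EE[X^{c'}\ind_{\{X\le 2\delta b(t)\}}]/b(t)^{c'}\to \tfrac{c}{c'-c}(2\delta)^{c'-c}$ gives the $O(\delta^{c'-c})$ bound on the discarded mass, and letting $\delta\downarrow 0$ closes the estimate. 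The Fatou step to bound $\EE\|\bxi_\infty\|^{c'}$ and hence show $\eta(f\circ h)<\infty$ is also clean. Nothing essential is missing.
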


For the classical Breiman Theorem where $p=1$ and $\bxi(t)\equiv \bxi_\infty$,
 \eqref{e:extraCond} is the expected moment condition.

\bibliographystyle{imsart-number}
\bibliography{bibfile_recip.bib}

\def\cprime{$'$}
\begin{thebibliography}{26}

\bibitem{athreya:ghosh:sethuraman:2008}
\begin{barticle}[author]
\bauthor{\bsnm{Athreya},~\bfnm{K.~B.}\binits{K.~B.}},
  \bauthor{\bsnm{Ghosh},~\bfnm{A.~P.}\binits{A.~P.}} \AND
  \bauthor{\bsnm{Sethuraman},~\bfnm{S.}\binits{S.}}
(\byear{2008}).
\btitle{Growth of preferential attachment random graphs via continuous-time
  branching processes}.
\bjournal{Proceedings Mathematical Sciences}
\bvolume{118}
\bpages{473–494}.
\end{barticle}
\endbibitem

\bibitem{athreya:ney:1972}
\begin{bbook}[author]
\bauthor{\bsnm{Athreya},~\bfnm{K.}\binits{K.}} \AND
  \bauthor{\bsnm{Ney},~\bfnm{P.}\binits{P.}}
(\byear{1972}).
\btitle{Branching Processes}.
\bpublisher{Springer-Verlag}, \baddress{New York}.
\end{bbook}
\endbibitem

\bibitem{athreya:ney:2004}
\begin{bbook}[author]
\bauthor{\bsnm{Athreya},~\bfnm{K.~B.}\binits{K.~B.}} \AND
  \bauthor{\bsnm{Ney},~\bfnm{P.}\binits{P.}}
(\byear{2004}).
\btitle{Branching processes. Reprint of the 1972 original}.
\bpublisher{Springer, New York}.
\end{bbook}
\endbibitem

\bibitem{basrak:planinic:2019}
\begin{barticle}[author]
\bauthor{\bsnm{Basrak},~\bfnm{B.}\binits{B.}} \AND
  \bauthor{\bsnm{Planini\'{c}},~\bfnm{H.}\binits{H.}}
(\byear{2019}).
\btitle{A note on vague convergence of measures}.
\bjournal{Statist. Probab. Lett.}
\bvolume{153}
\bpages{180--186}.
\bdoi{10.1016/j.spl.2019.06.004}
\bmrnumber{3979308}
\end{barticle}
\endbibitem

\bibitem{bollobas:borgs:chayes:riordan:2003}
\begin{binproceedings}[author]
\bauthor{\bsnm{Bollob\'as},~\bfnm{B.}\binits{B.}},
  \bauthor{\bsnm{Borgs},~\bfnm{C.}\binits{C.}},
  \bauthor{\bsnm{Chayes},~\bfnm{J.}\binits{J.}} \AND
  \bauthor{\bsnm{Riordan},~\bfnm{O.}\binits{O.}}
(\byear{2003}).
\btitle{Directed scale-free graphs}.
In \bbooktitle{Proceedings of the Fourteenth Annual ACM-SIAM Symposium on
  Discrete Algorithms (Baltimore, 2003)}
\bpages{132-139}.
\bpublisher{ACM}, \baddress{New York}.
\end{binproceedings}
\endbibitem

\bibitem{cirkovic:wang:resnick:2022}
\begin{barticle}[author]
\bauthor{\bsnm{Cirkovic},~\bfnm{D.}\binits{D.}},
  \bauthor{\bsnm{Wang},~\bfnm{T.}\binits{T.}} \AND
  \bauthor{\bsnm{Resnick},~\bfnm{S.~I.}\binits{S.~I.}}
(\byear{2022}).
\btitle{Preferential Attachment with Reciprocity: Properties and Estimation}.
\bjournal{arXiv preprint arXiv:2201.03769}.
\end{barticle}
\endbibitem

\bibitem{das:mitra:resnick:2013}
\begin{barticle}[author]
\bauthor{\bsnm{{Das}},~\bfnm{B.}\binits{B.}},
  \bauthor{\bsnm{{Mitra}},~\bfnm{A.}\binits{A.}} \AND
  \bauthor{\bsnm{{Resnick}},~\bfnm{S.}\binits{S.}}
(\byear{2013}).
\btitle{{Living on the multi-dimensional edge: Seeking hidden risks using
  regular variation}}.
\bjournal{Advances in Applied Probability}
\bvolume{45}
\bpages{139--163}.
\end{barticle}
\endbibitem

\bibitem{DasRes2015}
\begin{barticle}[author]
\bauthor{\bsnm{Das},~\bfnm{B.}\binits{B.}} \AND
  \bauthor{\bsnm{Resnick},~\bfnm{S.~I.}\binits{S.~I.}}
(\byear{2015}).
\btitle{Models with hidden regular variation: Generation and detection}.
\bjournal{Stochastic Systems}
\bvolume{5}
\bpages{195-238}.
\bdoi{10.1214/14-SSY141}
\end{barticle}
\endbibitem

\bibitem{das:resnick:2017}
\begin{barticle}[author]
\bauthor{\bsnm{Das},~\bfnm{B.}\binits{B.}} \AND
  \bauthor{\bsnm{Resnick},~\bfnm{S.~I.}\binits{S.~I.}}
(\byear{2017}).
\btitle{Hidden regular variation under full and strong asymptotic dependence}.
\bjournal{Extremes}
\bvolume{20}
\bpages{873--904}.
\end{barticle}
\endbibitem

\bibitem{hult:lindskog:2006a}
\begin{barticle}[author]
\bauthor{\bsnm{Hult},~\bfnm{H.}\binits{H.}} \AND
  \bauthor{\bsnm{Lindskog},~\bfnm{F.}\binits{F.}}
(\byear{2006}).
\btitle{Regular variation for measures on metric spaces}.
\bjournal{Publ. Inst. Math. (Beograd) (N.S.)}
\bvolume{80(94)}
\bpages{121--140}.
\bdoi{10.2298/PIM0694121H}
\bmrnumber{2281910 (2008g:28016)}
\end{barticle}
\endbibitem

\bibitem{jiangetal:2015}
\begin{binproceedings}[author]
\bauthor{\bsnm{Jiang},~\bfnm{B.}\binits{B.}},
  \bauthor{\bsnm{Zhang},~\bfnm{Z.~L.}\binits{Z.~L.}} \AND
  \bauthor{\bsnm{Towsley},~\bfnm{D.}\binits{D.}}
(\byear{2015}).
\btitle{Reciprocity in Social Networks with Capacity Constraints}.
In \bbooktitle{Proceedings of the 21th ACM SIGKDD International Conference on
  Knowledge Discovery and Data Mining}.
\bseries{KDD '15}
\bpages{457--466}.
\bpublisher{Association for Computing Machinery}, \baddress{New York, NY, USA}.
\bdoi{10.1145/2783258.2783410}
\end{binproceedings}
\endbibitem

\bibitem{kilduff2006paradigm}
\begin{barticle}[author]
\bauthor{\bsnm{Kilduff},~\bfnm{M.}\binits{M.}},
  \bauthor{\bsnm{Tsai},~\bfnm{W.}\binits{W.}} \AND
  \bauthor{\bsnm{Hanke},~\bfnm{R.}\binits{R.}}
(\byear{2006}).
\btitle{A paradigm too far? {A} dynamic stability reconsideration of the social
  network research program}.
\bjournal{Academy of Management Review}
\bvolume{31}
\bpages{1031--1048}.
\end{barticle}
\endbibitem

\bibitem{Kirk:2001}
\begin{bbook}[author]
\bauthor{\bsnm{Kirk},~\bfnm{W.~A.}\binits{W.~A.}}
(\byear{2001}).
\btitle{Handbook of Metric Fixed Point Theory},
\bedition{1} ed.
\bpublisher{Springer Netherlands}, \baddress{Dordrecht}.
\end{bbook}
\endbibitem

\bibitem{krapivsky:redner:2001}
\begin{barticle}[author]
\bauthor{\bsnm{Krapivsky},~\bfnm{P.~L.}\binits{P.~L.}} \AND
  \bauthor{\bsnm{Redner},~\bfnm{S.}\binits{S.}}
(\byear{2001}).
\btitle{Organization of growing random networks}.
\bjournal{Physical Review E}
\bvolume{63}
\bpages{066123:1--14}.
\end{barticle}
\endbibitem

\bibitem{kulik:soulier:2020}
\begin{bbook}[author]
\bauthor{\bsnm{Kulik},~\bfnm{R.}\binits{R.}} \AND
  \bauthor{\bsnm{Soulier},~\bfnm{P.}\binits{P.}}
(\byear{2020}).
\btitle{Heavy-Tailed Time Series}.
\bseries{Springer Series in Operations Research and Financial Engineering}.
\bpublisher{Springer}, \baddress{New York, NY}.
\end{bbook}
\endbibitem

\bibitem{lindskog:resnick:roy:2014}
\begin{barticle}[author]
\bauthor{\bsnm{Lindskog},~\bfnm{F.}\binits{F.}},
  \bauthor{\bsnm{Resnick},~\bfnm{S.~I.}\binits{S.~I.}} \AND
  \bauthor{\bsnm{Roy},~\bfnm{J.}\binits{J.}}
(\byear{2014}).
\btitle{Regularly varying measures on metric spaces: Hidden regular variation
  and hidden jumps}.
\bjournal{Probab. Surv.}
\bvolume{11}
\bpages{270-314}.
\bdoi{10.1214/14-PS231}
\end{barticle}
\endbibitem

\bibitem{matias:miele:2017}
\begin{barticle}[author]
\bauthor{\bsnm{Matias},~\bfnm{C.}\binits{C.}} \AND
  \bauthor{\bsnm{Miele},~\bfnm{V.}\binits{V.}}
(\byear{2017}).
\btitle{Statistical clustering of temporal networks through a dynamic
  stochastic block model}.
\bjournal{Journal of the Royal Statistical Society. Series B (Statistical
  Methodology)}
\bvolume{79}
\bpages{1119--1141}.
\end{barticle}
\endbibitem

\bibitem{molm2007building}
\begin{barticle}[author]
\bauthor{\bsnm{Molm},~\bfnm{L.~D.}\binits{L.~D.}},
  \bauthor{\bsnm{Collett},~\bfnm{J.~L.}\binits{J.~L.}} \AND
  \bauthor{\bsnm{Schaefer},~\bfnm{D.~R.}\binits{D.~R.}}
(\byear{2007}).
\btitle{Building solidarity through generalized exchange: {A} theory of
  reciprocity}.
\bjournal{American journal of sociology}
\bvolume{113}
\bpages{205--242}.
\end{barticle}
\endbibitem

\bibitem{rabehasaina:2021}
\begin{barticle}[author]
\bauthor{\bsnm{Rabehasaina},~\bfnm{L.}\binits{L.}} \AND
  \bauthor{\bsnm{Woo},~\bfnm{J.~K.}\binits{J.~K.}}
(\byear{2021}).
\btitle{Multitype branching process with nonhomogeneous {P}oisson and
  contagious {P}oisson immigration}.
\bjournal{Journal of Applied Probability}.
\end{barticle}
\endbibitem

\bibitem{resnickbook:2007}
\begin{bbook}[author]
\bauthor{\bsnm{Resnick},~\bfnm{S.~I.}\binits{S.~I.}}
(\byear{2007}).
\btitle{Heavy Tail Phenomena: Probabilistic and Statistical Modeling}.
\bseries{Springer Series in Operations Research and Financial Engineering}.
\bpublisher{Springer-Verlag}, \baddress{New York}.
\bnote{ISBN: 0-387-24272-4}.
\end{bbook}
\endbibitem

\bibitem{resnick:samorodnitsky:2015}
\begin{barticle}[author]
\bauthor{\bsnm{{Resnick}},~\bfnm{S.~I.}\binits{S.~I.}} \AND
  \bauthor{\bsnm{{Samorodnitsky}},~\bfnm{G.}\binits{G.}}
(\byear{2015}).
\btitle{Tauberian theory for multivariate regularly varying distributions with
  application to preferential attachment networks}.
\bjournal{Extremes}
\bvolume{18(3)}
\bpages{349--367}.
\bdoi{10.1007/s10687-015-0216-2}
\end{barticle}
\endbibitem

\bibitem{resnick:samorodnitsky:towsley:davis:willis:wan:2016}
\begin{barticle}[author]
\bauthor{\bsnm{Samorodnitsky},~\bfnm{G.}\binits{G.}},
  \bauthor{\bsnm{{Resnick}},~\bfnm{S.}\binits{S.}},
  \bauthor{\bsnm{{Towsley}},~\bfnm{D.}\binits{D.}},
  \bauthor{\bsnm{{Davis}},~\bfnm{R.}\binits{R.}},
  \bauthor{\bsnm{{Willis}},~\bfnm{A.}\binits{A.}} \AND
  \bauthor{\bsnm{{Wan}},~\bfnm{P.}\binits{P.}}
(\byear{2016}).
\btitle{Nonstandard regular variation of in-degree and out-degree in the
  preferential attachment model}.
\bjournal{Journal of Applied Probability}
\bvolume{53(1)}
\bpages{146--161}.
\bdoi{10.1017/jpr.2015.15}
\end{barticle}
\endbibitem

\bibitem{wan:wang:davis:resnick:2017}
\begin{barticle}[author]
\bauthor{\bsnm{{Wan}},~\bfnm{P.}\binits{P.}},
  \bauthor{\bsnm{{Wang}},~\bfnm{T.}\binits{T.}},
  \bauthor{\bsnm{{Davis}},~\bfnm{R.~A.}\binits{R.~A.}} \AND
  \bauthor{\bsnm{{Resnick}},~\bfnm{S.~I.}\binits{S.~I.}}
(\byear{2017}).
\btitle{Fitting the linear preferential attachment model}.
\bjournal{Electron. J. Statist.}
\bvolume{11}
\bpages{3738-3780}.
\bdoi{10.1214/17-EJS1327}
\end{barticle}
\endbibitem

\bibitem{wang:resnick:2019}
\begin{barticle}[author]
\bauthor{\bsnm{{Wang}},~\bfnm{T.}\binits{T.}} \AND
  \bauthor{\bsnm{{Resnick}},~\bfnm{S.~I.}\binits{S.~I.}}
(\byear{2020}).
\btitle{Degree growth rates and index estimation in a directed preferential
  attachment model}.
\bjournal{Stochastic Processes and their Applications}
\bvolume{130}
\bpages{878--906}.
\end{barticle}
\endbibitem

\bibitem{wang:resnick:2021a}
\begin{barticle}[author]
\bauthor{\bsnm{Wang},~\bfnm{T.}\binits{T.}} \AND
  \bauthor{\bsnm{Resnick},~\bfnm{S.~I.}\binits{S.~I.}}
(\byear{2022}).
\btitle{Measuring Reciprocity in a Directed Preferential Attachment Network}.
\bjournal{Advances in Applied Probability}
\bpages{1--25}.
\bdoi{10.1017/apr.2021.52}
\end{barticle}
\endbibitem

\bibitem{wang:resnick:2022}
\begin{barticle}[author]
\bauthor{\bsnm{Wang},~\bfnm{T.}\binits{T.}} \AND
  \bauthor{\bsnm{Resnick},~\bfnm{S.~I.}\binits{S.~I.}}
(\byear{2022}).
\btitle{Asymptotic Dependence of In-and Out-Degrees in a Preferential
  Attachment Model with Reciprocity}.
\bjournal{Extremes}
\bvolume{25}
\bpages{417--450}.
\bdoi{10.1007/s10687-022-00439-5}
\end{barticle}
\endbibitem

\end{thebibliography}
\end{document}